\documentclass[12pt]{article}
\usepackage[margin=1in]{geometry}
\usepackage{lmodern}
\usepackage[utf8]{inputenc}
\usepackage[english]{babel}
\usepackage{graphicx} % Required for inserting images
\usepackage{mathtools}% Loads amsmath
\usepackage{gensymb}
\usepackage{framed}
\usepackage{amsmath}
\usepackage{amssymb}
\usepackage[dvipsnames]{xcolor}
\usepackage{tcolorbox}
\usepackage{enumitem}
\usepackage[inkscapelatex=false]{svg}
\usepackage{float}
\usepackage{caption}
\usepackage{subcaption}
\usepackage{hyperref}
\usepackage{algorithmic}
\usepackage{algorithm}
\usepackage{amsthm}
\usepackage{booktabs}
\usepackage{csquotes}
\usepackage{siunitx}
\DeclareSIUnit\angstrom{\text{\AA}}
\usepackage{tikz}
\usepackage{wrapfig}
\usepackage{textcomp, gensymb}
\usepackage{setspace}
\usetikzlibrary{positioning, shapes, arrows.meta}
\usepackage[titletoc]{appendix}
\usepackage{afterpage}

\usepackage{tikz}
\usetikzlibrary{shapes.geometric, arrows, positioning, calc}

\newtcolorbox{todolist}[1][]{colback=yellow!10!white, colframe=yellow!50!black, title=To Do, #1}

\newtcolorbox{commentbox}[1][]{colback=blue!10!white, colframe=blue!50!black, title=Comment, #1}

\newtheorem{theorem}{Theorem}[section]
\newtheorem{definition}[theorem]{Definition}

\newtheorem{lemma}[theorem]{Lemma}

\newcommand{\abs}[1]{\left \lvert #1 \right \rvert}

\usepackage{xcolor}
\usepackage{xparse}
\newcommand\rise{{\Delta x}}
\newcommand\twist{{\Delta \theta}}

\title{SHREC: A Spectral Embedding-Based Approach for Ab-Initio Reconstruction of Helical Molecules}

\author{Guy Shapira and Yoel Shkolnisky}
\date{}

\begin{document}

\maketitle

\begin{abstract}
Cryo-electron microscopy (cryo-EM) has emerged as a powerful technique for determining the three-dimensional structures of biological molecules at near-atomic resolution. However, reconstructing helical assemblies presents unique challenges due to their inherent symmetry and the need to determine unknown helical symmetry parameters. Traditional approaches require an accurate initial estimation of these parameters, which is often obtained through trial and error or prior knowledge. These requirements can lead to incorrect reconstructions, limiting the reliability of ab initio helical reconstruction.

In this work, we present SHREC (Spectral Helical REConstruction), an algorithm that directly recovers the projection angles of helical segments from their two-dimensional cryo-EM images, without requiring prior knowledge of helical symmetry parameters. Our approach leverages the insight that projections of helical segments form a one-dimensional manifold, which can be recovered using spectral embedding techniques. Experimental validation on publicly available datasets demonstrates that SHREC achieves high resolution reconstructions while accurately recovering helical parameters, requiring only knowledge of the specimen's axial symmetry group. By eliminating the need for initial symmetry estimates, SHREC offers a more robust and automated pathway for determining helical structures in cryo-EM.
\end{abstract}

\section{Introduction}\label{sec:preliminaries}
Cryo-EM is an imaging technique that enables ascertaining the three-dimensional structures of biological molecules \cite{lyumkis2019challenges,cheng2015primer,singer2020computational}. In cryo-EM, a sample of molecules is swiftly vitrified within a thin layer of ice, thereby preserving their native conformation in random, unknown orientations~\cite{wu2020present,cheng2015primer}. An electron beam is transmitted through this frozen sample, creating two-dimensional (2D) projection images called micrographs. These micrographs have a low signal-to-noise ratio (SNR) due to the weak interaction between electrons and biological specimens, as well as the need to limit radiation exposure to prevent damage \cite{singer2020computational, lyumkis2019challenges}. From these micrographs, individual molecules or molecular subregions -- referred to as particles or particle images -- are identified and extracted as small images. Each image corresponds to a 2D projection of a molecule or a molecular subregion, captured in a distinct but unknown orientation. Due to the noise in these images and their unknown imaging parameters, extensive computational processing -- including particle alignment, classification, and averaging -- is required to enhance the signal-to-noise ratio and prepare the data for three-dimensional (3D) reconstruction~\cite{cheng2015primer,singer2020computational,lyumkis2019challenges}. Subsequent high-resolution 3D structure determination further involves sophisticated modeling and reconstruction algorithms, typically requiring thousands of particle images to reliably infer the molecular structure~\cite{singer2020computational,cheng2015primer}. Modern cryo-EM has achieved near-atomic resolutions, making it a popular technique for studying large molecular complexes in different functional states~\cite{cheng2015primer,wu2020present}. 

% \subsection{Ideal imaging model}\label{sec:imaging-model}
We begin by presenting an idealized mathematical model for images obtained via the cryo-EM process. A molecule is represented by its electrostatic potential \(\psi: \mathbb{R}^3 \to \mathbb{R}\), which we assume to be a function with compact support. In a typical experiment, multiple copies of the molecule are frozen in random orientations, which can be described by rotation matrices \(R_i \in \mathrm{SO}(3)\), where \(\mathrm{SO}(3)\) is the group of 3D rotations.
An ideal 2D projection image (or simply a projection) of the molecule in orientation \(R_i\) is generated by integrating the potential rotated by $R_{i}$ along the $z$~direction. Explicitly, we denote the ideal projection operator that maps \(\psi\) to a 2D image by~\(P_{R_i}\), and the projection image  \(P_{R_i}\psi: \mathbb{R}^2 \to \mathbb{R}\) is then given by 
\begin{equation}\label{eq:ideal-projection-model}
P_{R_i}\psi\left(x,y\right) = \int_{-\infty}^{\infty}{\psi\left(R_i\mathbf{r}\right)\mathrm{d}z},
\end{equation}
where \(\mathbf{r} = (x,y,z)^T\) denotes the coordinates in 3D space and \((x,y)\) are coordinates in the 2D image plane. %Notice that we assume that the integral~\eqref{eq:ideal-projection-model} is well-defined. In Section~\ref{subsec:helical-assemblies}, we restrict \(\psi\) and \(R_i\) in a way that assures the integral is well-defined.

We now state a lemma that will be useful later on. The proof of the lemma can be found in Appendix~\ref{app:additional-proofs}.
\begin{lemma}\label{lemma:reflection-invariance}
Let \(\psi: \mathbb{R}^3 \rightarrow \mathbb{R}\) be a function such that for all \(x \in \mathbb{R}\),
\[
\int_{-\infty}^\infty \int_{-\infty}^\infty |\psi(x, y, z)| \, \mathrm{d}y \, \mathrm{d}z < \infty.
\]
Let \(R \in \mathrm{SO}(3)\) be a rotation matrix, and let \(M: \mathbb{R}^3 \rightarrow \mathbb{R}^3\) denote reflection across the \(xy\)-plane. In coordinates, \(M\) is given by
\begin{equation}\label{eq:mirror-operator}
M(x, y, z) = (x, y, -z),
\end{equation}
or equivalently, by the matrix \(M = \mathrm{diag}(1, 1, -1)\). Define the mirror of \(\psi\), denoted \(\psi^M: \mathbb{R}^3 \rightarrow \mathbb{R}\), as
\begin{equation}\label{eq:mirror-function}
    \psi^M(x, y, z) = \psi(x, y, -z),
\end{equation}
i.e., \(\psi^M = \psi \circ M\). Then, it holds that
\begin{equation}\label{eq:mirror-identity}
    P_{R}\psi = P_{MRM}\psi^M,
\end{equation}
where \(P_R\) is the projection operator defined in~\eqref{eq:ideal-projection-model}.
\end{lemma}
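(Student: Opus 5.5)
The plan is a direct computation from the definition~\eqref{eq:ideal-projection-model}: write out $P_{MRM}\psi^M(x,y)$ and reduce it to $P_R\psi(x,y)$ by a change of the integration variable along the beam direction. Fix $(x,y)$ and set $\mathbf{r}=(x,y,z)^T$. By definition,
\[
P_{MRM}\psi^M(x,y)=\int_{-\infty}^{\infty}\psi^M(MRM\mathbf{r})\,\mathrm{d}z=\int_{-\infty}^{\infty}\psi(M\,MRM\mathbf{r})\,\mathrm{d}z=\int_{-\infty}^{\infty}\psi(RM\mathbf{r})\,\mathrm{d}z .
\]
Here we use $\psi^M=\psi\circ M$ and $M^2=I$. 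Note also that $MRM\in\mathrm{SO}(3)$, since $\det(MRM)=\det(M)^2\det(R)=1$ and $MRM$ is orthogonal, so the left-hand side is a legitimate projection.

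Next, observe that $M\mathbf{r}=(x,y,-z)^T$. Substituting $z'=-z$ maps $\int_{-\infty}^{\infty}\,\mathrm{d}z$ onto $\int_{-\infty}^{\infty}\,\mathrm{d}z'$; the orientation reversal and the Jacobian sign cancel. This yields
\[
\int_{-\infty}^{\infty}\psi\bigl(R(x,y,z')^T\bigr)\,\mathrm{d}z' = P_R\psi(x,y),
\]
which proves~\eqref{eq:mirror-identity}.

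The only real obstacle is justifying that the integrals exist, so that the change of variables is legitimate and both sides are equal as functions rather than as formal expressions. Since the two integrands are related by the substitution $z\mapsto -z$, one integral converges absolutely exactly when the other does. I would therefore make a single remark: whenever $P_R\psi(x,y)$ is well defined, for instance under the stated integrability hypothesis combined with Fubini for almost every $(x,y)$, the same holds for the mirrored side, and the substitution is valid for Lebesgue integrals.

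I do not expect the rotation to interact with the integrability hypothesis, because the argument never rotates the integration line. It only reflects the parameter $z$ along the same line $\{R(x,y,z)^T\}$.
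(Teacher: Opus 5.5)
Your proof is correct and is essentially the paper's own argument: expand $P_{MRM}\psi^M$, use $\psi^M=\psi\circ M$ and $M^2=I$ to reduce the integrand to $\psi(RM\mathbf{r})$, then substitute $u=-z$. Your integrability point, that the two integrands differ only by $z\mapsto -z$ so one side exists exactly when the other does, is a sound addition the paper omits; but the parenthetical ``for instance'' claim (slice hypothesis plus Fubini giving a.e.\ well-definedness) only works when the beam direction $R(0,0,1)^T$ is orthogonal to $\hat{\mathbf{x}}$, as for $R=R_x(\theta)$, and not for a general $R\in\mathrm{SO}(3)$.
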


Lemma~\ref{lemma:reflection-invariance} reflects an inherent ambiguity of the orientation and of the underlying structure under the projection operation. We elaborate on the consequences of this ambiguity in Section~\ref{sec:problem-setup}.

%\subsection{Helical assemblies}\label{subsec:helical-assemblies}

As this work focuses on structures we helical symmetry, we start by setting the mathematical notation required to analyze helical structures.

\begin{definition}[Helix]\label{def:helix}
\begin{enumerate}
    \item Let \(P \in \mathbb{R}\setminus \{0\}\), and let \(\psi: \mathbb{R}^3 \rightarrow \mathbb{R}\) such that 
    \begin{equation}\label{eq:integral-condition}
        \sup_{x \in \mathbb{R}} \left( \int_{\mathbb{R}^2} |\psi(x,y,z)|^2 \, \mathrm{d}y \, \mathrm{d}z \right)^{1/2} < \infty.
    \end{equation}
    Then, \(\psi\) is said to be a \textbf{continuous helix} with  \textbf{pitch}~\(P\) and screw axis~$\hat{\mathbf{x}}$, if for all \(\mathbf{r} \in \mathbb{R}^3\) and all \(t \in \mathbb{R}\),
    \begin{equation}\label{eq:helix-definition}
        \psi\left(\mathbf{r}\right) = \psi\left(R_x\bigl(\tfrac{2\pi}{P}t\bigr)\mathbf{r} + t\,\hat{\mathbf{x}}\right),
    \end{equation}
    where \(\hat{\mathbf{x}}\) is a unit vector in the \(x\) direction, and $R_x(\theta)$ is the rotation matrix about the $x$-axis given by
    \begin{equation}\label{eq:rotation-matrix}
    R_x(\theta) = \begin{pmatrix}
    1 & 0 & 0 \\
    0 & \cos\theta & -\sin\theta \\
    0 & \sin\theta & \cos\theta
    \end{pmatrix}.
    \end{equation}
    \item Let \(\rise > 0\) and \(\twist \in [-\pi, \pi)\). A scalar field \(\psi: \mathbb{R}^3 \rightarrow \mathbb{R}\) is a \textbf{discrete helix} with \textbf{rise}~\(\rise\),  \textbf{twist}~\(\twist\), and screw axis~\(\hat{\mathbf{x}}\), if for all \(n \in \mathbb{N}\)
    \begin{equation}
        \psi\left(\mathbf{r}\right) = \psi\left(R_x(n\twist)\mathbf{r} + n\rise\,\hat{\mathbf{x}}\right).
    \end{equation}
\end{enumerate}
\end{definition}
Intuitively, a helix is a function that remains unchanged under a combination of rotation around and translation along a fixed axis. In the continuous case, this symmetry is governed by the pitch~\(P\), which specifies the axial distance required for a full rotation of~$\psi$. In the discrete case, the function remains unchanged under successive screw transformations, characterized by a rise \(\rise\) and a twist \(\twist\). The condition~\eqref{eq:integral-condition} ensures that the integral~\eqref{eq:ideal-projection-model} is well defined on \(\psi\) for all \(R_x(\theta)\) and for all \(t \in \mathbb{R}\).

Throughout this work, we assume for convenience that the screw axis of the helix is the $x$-axis, though any other fixed axis can be used (with the compatible definitions of rotation and translation). In addition, from this point onward, we will focus on the continuous helix unless explicitly stated otherwise. We handle discrete helices in Section~\ref{sec:discrete-helix-angle-recovery}.

\begin{definition}[Helical segment]\label{def:helical-segment}
    Let \(B > 0\). 
    \begin{enumerate}
        \item The \textbf{segment operator} with \textbf{box size}~\(B\) denoted\newline
        \(
            S_B: \mathbb{R} \times L^\infty\left(\mathbb{R}^3\right) \rightarrow L^\infty\left(\left[-\frac{B}{2}, \frac{B}{2}\right]^3\right)
        \) is defined for \(t\in \mathbb{R}\) and \(\psi \in L^\infty(\mathbb{R}^3)\) by
        \begin{equation}
            S_B(t, \psi)(\mathbf{r}) = \psi\left(\mathbf{r} - t\hat{\mathbf{x}}\right),\quad\mathbf{r} \in \left[-\frac{B}{2}, \frac{B}{2}\right]^3.
        \end{equation}
        \item Let \(\psi\) be a helix (either continuous or discrete) as defined in Definition~\ref{def:helix}. A \textbf{helical segment}  of \(\psi\) with box size \(B\) is defined as an element of the image of \(S_B\left(\cdot, \psi\right)\). That is, for all \(t \in \mathbb{R}\), \(S_B\left(t, \psi\right): \left[-\frac{B}{2}, \frac{B}{2}\right]^3 \rightarrow \mathbb{R}\) is a helical segment.
    \end{enumerate}
\end{definition}

The defining property of a helix (Equation~\eqref{eq:helix-definition}) leads to an equivalence between translation along the helical axis and rotation about it. The following lemma formalizes this relationship.

\begin{lemma}[Translation-rotation correspondence]\label{lemma:translation-rotation-correspondence}
Let \(\psi: \mathbb{R}^3\rightarrow\mathbb{R}\) be a helix with pitch~\(P \in \mathbb{R}\setminus\{0\}\) as defined in Definition~\ref{def:helix}. Then, for all \(t \in \mathbb{R}\),
\begin{equation}\label{eq:translation-rotation-correspondence}
    \psi\left(\mathbf{r} - t\hat{\mathbf{x}}\right) = \psi\left(R_x\bigl(\tfrac{2\pi}{P}t\bigr)\mathbf{r}\right).
\end{equation}
\begin{proof}
Let \(\psi: \mathbb{R}^3 \rightarrow \mathbb{R}\) be a continuous helix with pitch \(P \in \mathbb{R}\setminus\{0\}\), and let \(t\in \mathbb{R}\). Define \(\mathbf{r}' = \mathbf{r} - t\hat{\mathbf{x}}\). From~\eqref{eq:helix-definition},
\[
\psi(\mathbf{r}') = \psi\left(R_x\left(\tfrac{2\pi}{P}t\right)\mathbf{r}' + t\hat{\mathbf{x}}\right).
\]
By substituting back \(\mathbf{r}' = \mathbf{r} - t\hat{\mathbf{x}}\), and using the fact that \(R_x(\theta)\hat{\mathbf{x}} = \hat{\mathbf{x}}\) for all \(\theta \in \mathbb{R}\), we get
\begin{align*}
\psi(\mathbf{r} - t\hat{\mathbf{x}}) &= \psi\left(R_x\left(\tfrac{2\pi}{P}t\right)(\mathbf{r} - t\hat{\mathbf{x}}) + t\hat{\mathbf{x}}\right)\\  
&=\psi\left(R_x\left(\tfrac{2\pi}{P}t\right)\mathbf{r} - t R_x\left(\tfrac{2\pi}{P}t\right) \hat{\mathbf{x}} + t\hat{\mathbf{x}}\right)\\
&=\psi\left(R_x\left(\tfrac{2\pi}{P}t\right)\mathbf{r}\right).
\end{align*}
\end{proof}
\end{lemma}
Lemma~\ref{lemma:translation-rotation-correspondence} establishes that translation of a helix along its screw axis is equivalent to rotating it. Building on this, we can define the relative orientation between two segments extracted from the same helix.

\begin{definition}[Segment's angle]\label{def:segment-angle}
    Let \(S_B\left(t_1, \psi\right), S_B\left(t_2, \psi\right): \left[-\frac{B}{2}, \frac{B}{2}\right]^3 \rightarrow \mathbb{R} \) be segments with box size~\(B\) of the helix~\(\psi\). An angle \(\theta\in[0,2\pi)\) is an \textbf{angle between the segments} if it satisfies
    \[
        S_B(t_2,\psi)(\mathbf{r})
        \;=\;
        S_B(t_1,\psi \circ R_x(\theta))\bigl(\mathbf{r}\bigr)   
        \quad\forall\mathbf{r}\in[-\tfrac B2,\tfrac B2]^2.
    \]
\end{definition}

The angle between two segments, as defined in Definition~\ref{def:segment-angle}, is not necessarily unique. However, we can assert its existence and provide a formula for such an angle, as we now show. 

\begin{lemma}[Existence of a segment's angle]\label{lemma:segment-angle-existence}
    Let \(S_B\left(t_1, \psi\right), S_B\left(t_2, \psi\right): \left[-\frac{B}{2}, \frac{B}{2}\right]^3 \rightarrow \mathbb{R} \) be segments with box size~\(B\) of the helix~\(\psi\). 
    There exists at least one  \(\theta \in \left[0, 2\pi\right)\) that satisfies the property from Definition~\eqref{def:segment-angle}. Moreover, such an angle is given by
    \begin{equation}\label{eq:theta-delta-t}
      \theta \;=\;\frac{2\pi}{P}\,(t_2 - t_1).
    \end{equation}
\end{lemma}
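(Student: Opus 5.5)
Since the definition only asks for one angle with the required property, I will verify that the explicit angle \(\theta = \frac{2\pi}{P}(t_2 - t_1)\), reduced modulo \(2\pi\) to lie in \([0,2\pi)\), satisfies the identity in Definition~\ref{def:segment-angle}. The reduction modulo \(2\pi\) is harmless because \(R_x(\theta)\) depends only on \(\theta\) modulo \(2\pi\). I read \(\psi\circ R_x(\theta)\) as the function \(\mathbf{r}\mapsto\psi(R_x(\theta)\mathbf{r})\), and the quantifier as ranging over the segment box \([-\tfrac B2,\tfrac B2]^3\).

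The key tool is Lemma~\ref{lemma:translation-rotation-correspondence}, used together with the fact that \(R_x(\theta)\) commutes with translations along \(\hat{\mathbf{x}}\). Fix \(\mathbf{r}\) in the box. By the definition of \(S_B\), the right-hand side is
\[
S_B(t_1,\psi\circ R_x(\theta))(\mathbf{r}) = \psi\bigl(R_x(\theta)(\mathbf{r}-t_1\hat{\mathbf{x}})\bigr) = \psi\bigl(R_x(\theta)\mathbf{r}-t_1\hat{\mathbf{x}}\bigr),
\]
where the second equality uses \(R_x(\theta)\hat{\mathbf{x}}=\hat{\mathbf{x}}\). The left-hand side is \(S_B(t_2,\psi)(\mathbf{r})=\psi(\mathbf{r}-t_2\hat{\mathbf{x}})\).

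To compare the two, write \(\mathbf{r}-t_2\hat{\mathbf{x}} = (\mathbf{r}-t_1\hat{\mathbf{x}}) - (t_2-t_1)\hat{\mathbf{x}}\). Set \(\mathbf{s}=\mathbf{r}-t_1\hat{\mathbf{x}}\) and \(\Delta t = t_2 - t_1\). Lemma~\ref{lemma:translation-rotation-correspondence}, applied at the point \(\mathbf{s}\) with shift \(\Delta t\), gives
\[
\psi(\mathbf{s}-\Delta t\,\hat{\mathbf{x}})=\psi\bigl(R_x(\tfrac{2\pi}{P}\Delta t)\,\mathbf{s}\bigr)=\psi\bigl(R_x(\theta)\,\mathbf{s}\bigr).
\]
Expanding \(R_x(\theta)\mathbf{s} = R_x(\theta)\mathbf{r} - t_1\hat{\mathbf{x}}\), again because \(R_x(\theta)\) fixes \(\hat{\mathbf{x}}\), this is exactly the right-hand side computed above. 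Hence the identity holds for every \(\mathbf{r}\) in the box.

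The main obstacle is the bookkeeping of conventions rather than any analysis. The definition writes the quantifier as \([-\tfrac B2,\tfrac B2]^2\), which I treat as a typo for the cube. The composition \(\psi\circ R_x(\theta)\) has to be read as above. With the opposite convention, \(\psi(R_x(\theta)^{-1}\cdot)\), one would instead need \(\theta = -\frac{2\pi}{P}(t_2-t_1)\). I would fix these conventions explicitly at the start, after which the verification is the two-line computation above.
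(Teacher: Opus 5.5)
Your proposal is correct and follows essentially the paper's argument. Both proofs reduce to the identity $\psi(\mathbf r - t_2\hat{\mathbf x}) = \psi(R_x(\theta)\mathbf r - t_1\hat{\mathbf x})$ by combining the helical symmetry with $R_x(\theta)\hat{\mathbf x}=\hat{\mathbf x}$. The paper applies the defining relation at $\mathbf r - t_2\hat{\mathbf x}$ with shift $t_2-t_1$, while you apply the translation--rotation lemma at $\mathbf r - t_1\hat{\mathbf x}$. Your explicit reduction of $\theta$ modulo $2\pi$ and your use of the exact form $S_B(t_1,\psi\circ R_x(\theta))$ from the definition are slightly cleaner than the paper's final line $S_B(t_1,\psi)(R_x(\theta)\mathbf r)$, which can evaluate the segment outside its box.
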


\begin{proof}
By Definition~\ref{def:helical-segment},   for $\mathbf{r} \in [-\tfrac B2,\tfrac B2]^3$,
\[
  S_B(t,\psi)(\mathbf r)
  = \psi\bigl(\mathbf r - t\hat{\mathbf x}\bigr).
\]
 Lemma~\ref{lemma:translation-rotation-correspondence} states that~\eqref{eq:helix-definition} holds for any \(t\in\mathbb{R}\) and any \(\mathbf{r} \in \mathbb{R}^3\). Specifically, we may set \(\mathbf{r}' = \mathbf{r} - t_2\hat{\mathbf{x}}\), substitute it into~\eqref{eq:helix-definition}, and obtain
 \begin{equation*}
     \psi\left(\mathbf{r}' - t_2\hat{\mathbf{x}}\right) = \psi\left(R_x\bigl(\tfrac{2\pi}{P}t\bigr)(\mathbf{r}' - t_2\hat{\mathbf{x}}) + t\,\hat{\mathbf{x}}\right),
 \end{equation*}
 for all \(\mathbf{r}'\in\mathbb{R}^3\) and for all \(t \in \mathbb{R}\). Using the fact that \(\hat{\mathbf{x}}\) is invariant under rotation about the \(x\)-axis,
 \begin{equation*}
     \psi\left(\mathbf{r}' - t_2\hat{\mathbf{x}}\right) = \psi\left(R_x\bigl(\tfrac{2\pi}{P}t\bigr)(\mathbf{r}') + (t - t_2)\hat{\mathbf{x}}\right).
 \end{equation*}
 Further substituting~\eqref{eq:theta-delta-t} and \(t = t_2 - t_1\) gives us
 \begin{equation*}
 \psi\left(\mathbf{r}' - t_2\hat{\mathbf{x}}\right) = \psi\left(R_x(\theta)\left(\mathbf{r}'\right) - t_1\hat{\mathbf{x}}\right)
 \end{equation*}
for all \(\mathbf{r}' \in \left[-\tfrac B2,\tfrac B2\right]^3\). 
Restricting both sides to \(\left[-\frac{B}{2}, \frac{B}{2}\right]^3\) and using Definition \ref{def:helical-segment}, we get
\[
  S_B(t_2,\psi)(\mathbf r)
  = S_B(t_1,\psi)\!\bigl(R_x(\theta)\,\mathbf r\bigr).
\]
This proves that \(\theta\) is an angle between the two segments.
\end{proof}

Lemma~\ref{lemma:segment-angle-existence} highlights a fundamental consequence of helical symmetry: any two  segments with the same box size of a continuous helix differ only by a rotation about the screw axis. 

We have established the relationship between three-dimensional helical segments. Since in the context of cryo-EM, we deal with two-dimensional projections, we need to connect these notions to the observed images. We first define the projection of a segment.

\begin{definition}[Segment's projection]\label{def:segment-projection}
  Let \(\psi\) be a continuous helix with pitch~\(P\), let 
  \(S_B(t,\psi):\left[-\tfrac B2,\tfrac B2\right]^3\to\mathbb{R}\) be a helical segment (Definition~\ref{def:helical-segment}), and let \(\theta \in \left[0, 2\pi\right)\). The \textbf{segment's projection} of \(S_B(t,\psi)\), which we denote \(\Pi_B\left(t, \psi\right): \left[-\tfrac B2,\tfrac B2\right]^2\rightarrow \mathbb{R}\), is defined as
  \begin{equation}
      \Pi_B\left(t, \psi\right) = P_{I}S_B(t, \psi),
  \end{equation}
  where \(P_{R}\) is the projection operator defined in~\eqref{eq:ideal-projection-model}, and \(I\) is the identity matrix.
\end{definition}

With the projection of a segment defined, we can now relate the angle between helical segments to their 2D projections.

\begin{definition}[Projection angle]\label{def:projection-angle}
    Let \(S_B\left(t_1, \psi\right), S_B\left(t_2, \psi\right): \left[-\frac{B}{2}, \frac{B}{2}\right]^3 \rightarrow \mathbb{R} \) be segments of the helix~\(\psi\), and let \(\Pi_B\left(t_1, \psi\right), \Pi_B\left(t_2, \psi\right): \left[-\frac{B}{2}, \frac{B}{2}\right]^3 \rightarrow \mathbb{R}\) be the corresponding segments' projections. The \textbf{projection angle} between $\Pi_B\left(t_1, \psi\right)$ and $ \Pi_B\left(t_2, \psi\right)$ is defined as the angle between \(S_B\left(t_1, \psi\right)\) and \(S_B\left(t_2, \psi\right)\), as defined in Definition~\eqref{def:segment-angle}.
\end{definition}

A direct yet important consequence of Definitions \ref{def:segment-projection}, \ref{def:projection-angle}, and of Lemma \ref{lemma:segment-angle-existence} is the following.

\begin{lemma}\label{lemma:viewing-angle}
    Let \(S_B\left(t_0, \psi\right), S_B\left(t, \psi\right): \left[-\frac{B}{2}, \frac{B}{2}\right]^3 \rightarrow \mathbb{R} \) be segments of the helix~\(\psi\) for some \(t_0, t \in \mathbb{R}\). Let \(\theta = \frac{2\pi}{P}(t - t_0)\) be the angle between the segments \(S_B(t_0, \psi)\) and \(S_B(t, \psi)\) (as per Definition~\ref{def:segment-angle} and Lemma~\ref{lemma:segment-angle-existence}). Then,
    \begin{equation}
        \Pi_B(t,\psi) = P_{R_x(\theta)}S_B(t_0,\psi),
    \end{equation}
    where \(R_x\) is the rotation matrix defined in~\eqref{eq:rotation-matrix}.
\end{lemma}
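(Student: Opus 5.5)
We unwind the definitions and then apply Lemma~\ref{lemma:segment-angle-existence}. By Definition~\ref{def:segment-projection}, $\Pi_B(t,\psi) = P_I S_B(t,\psi)$. So the claim reduces to the identity
\[
P_I S_B(t,\psi) = P_{R_x(\theta)} S_B(t_0,\psi),
\]
and for this it suffices to show that the integrands in~\eqref{eq:ideal-projection-model} agree pointwise.

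First, we invoke Lemma~\ref{lemma:segment-angle-existence} with $t_1 = t_0$ and $t_2 = t$. Its proof yields the full-space identity
\[
\psi(\mathbf r - t\hat{\mathbf x}) = \psi\bigl(R_x(\theta)\mathbf r - t_0\hat{\mathbf x}\bigr)
\quad\text{for all } \mathbf r\in\mathbb{R}^3,
\]
with $\theta = \frac{2\pi}{P}(t-t_0)$. Equivalently, this follows from two applications of Lemma~\ref{lemma:translation-rotation-correspondence}: one gives $\psi(\mathbf r - t\hat{\mathbf x}) = \psi(R_x(\tfrac{2\pi}{P}t)\mathbf r)$, and the other, applied at $t_0$ to the point $R_x(\theta)\mathbf r$, gives the matching expression. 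Restricted to the box, this reads $S_B(t,\psi)(\mathbf r) = S_B(t_0,\psi)(R_x(\theta)\mathbf r)$.

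Second, we fix $(x,y)$ and write
\[
P_{R_x(\theta)} S_B(t_0,\psi)(x,y) = \int S_B(t_0,\psi)\bigl(R_x(\theta)(x,y,z)^T\bigr)\,\mathrm dz .
\]
By the identity from the first step, the integrand equals $S_B(t,\psi)(x,y,z) = \psi(x-t,y,z)$. Integrating in $z$ then gives $P_I S_B(t,\psi)(x,y) = \Pi_B(t,\psi)(x,y)$, as required. Convergence of the integrals is guaranteed by condition~\eqref{eq:integral-condition}, via Cauchy--Schwarz on the bounded box or directly on $\mathbb{R}^2$.

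The main obstacle is the domain bookkeeping. The segment lives on the cube $[-\tfrac B2,\tfrac B2]^3$, but $R_x(\theta)$ rotates the $(y,z)$ coordinates, so $R_x(\theta)\mathbf r$ may leave the cube even when $\mathbf r$ lies in it. Taken literally, the box restrictions on the two sides need not coincide. I would resolve this by interpreting the projection of a segment as the line integral of the translated helix $\psi(\cdot - t\hat{\mathbf x})$, restricted only in the image-plane coordinate $x$ to $[-\tfrac B2,\tfrac B2]$, with the rotation about the $x$-axis acting on the full $(y,z)$-plane. This is consistent with Lemma~\ref{lemma:segment-angle-existence}, whose proof establishes the identity on all of $\mathbb{R}^3$ before restricting. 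Since $R_x(\theta)$ preserves the $x$-coordinate, the $x$-slab is invariant, the pointwise identity holds on the whole integration line, and the equality follows.
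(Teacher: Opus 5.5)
Your proof is correct and follows the paper's route: take the pointwise identity $S_B(t,\psi)(\mathbf r)=S_B(t_0,\psi)(R_x(\theta)\mathbf r)$ from Lemma~\ref{lemma:segment-angle-existence} and integrate it in $z$ to obtain $P_I S_B(t,\psi)=P_{R_x(\theta)}S_B(t_0,\psi)$. You are right that $R_x(\theta)\mathbf r$ can leave the cube, a subtlety the paper's proof passes over, and your fix is sound: use the full-space identity on the $x$-slab, which $R_x(\theta)$ preserves, and note that under the cylindrical support assumed in Theorem~\ref{lemma:helical-projections-manifold} this coincides with the literal cube restriction. Your one loose remark is that condition~\eqref{eq:integral-condition} gives convergence ``directly on $\mathbb{R}^2$'': an $L^2$ bound does not give integrability along an unbounded $z$-line, but this does not affect the identity, since the two integrands coincide pointwise.
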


\begin{proof}
Let \(\theta = \tfrac{2\pi}{P}(t - t_0)\). By Lemma~\ref{lemma:segment-angle-existence}, we have
\[
S_B(t,\psi)(\mathbf{r}) = S_B(t_0,\psi)(R_x(\theta)\,\mathbf{r}) \quad \forall\, \mathbf{r} \in [-\tfrac{B}{2}, \tfrac{B}{2}]^3.
\]
Applying the projection operator \(P_I\) to both sides and using~\eqref{eq:ideal-projection-model}, we obtain
\begin{align*}
\Pi_B(t,\psi)(x,y)
&= \int_{-\infty}^{\infty} S_B(t,\psi)(x,y,z)\,\mathrm{d}z
= \int_{-\infty}^{\infty} S_B(t_0,\psi)(R_x(\theta)(x,y,z)^T)\,\mathrm{d}z\\
&=
P_{R_x(\theta)} S_B(t_0,\psi)(x,y).
\end{align*}
\end{proof}

Lemma~\ref{lemma:viewing-angle} formalizes the property that a projection of a helical segment is equivalent to the projection of another segment from a different viewing angle around the screw axis. This concept is illustrated in Figure~\ref{fig:viewing-angle}.

\begin{figure}[H]
    \centering
    \includegraphics[width=0.5\linewidth]{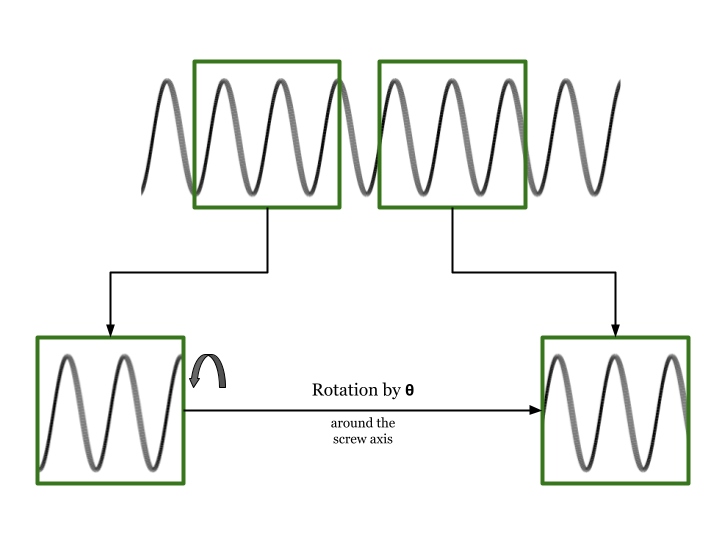}
    \caption{Two helical segments extracted from different positions are related by rotation around the screw axis. Therefore, a 2D projection of one segment corresponds to a projection of the other from a different angle.}
    \label{fig:viewing-angle}
\end{figure}

By selecting an arbitrary reference segment at \(t = t_0\), one may define for each helical segment \(S_B(t,\psi)\) a corresponding \emph{segment angle} -- namely the angle between \(S_B(t,\psi)\) and \(S_B\left(t_0, \psi\right)\). Likewise, we can assign a \emph{projection angle} to its two-dimensional projection \(\Pi_B(t,\psi)\). Since the choice of \(t_0\) is arbitrary, all such angles are determined only up to a global additive constant.

In addition to the fundamental helical symmetry defined by the pitch \(P\) (or rise \(\rise\) and twist \(\twist\) for discrete helices), many helical structures exhibit further symmetries. A common type is axial cyclic symmetry, denoted as \(C_n\) symmetry, where the structure is invariant under discrete rotations around the helical axis.

\begin{definition}[Axial \(C_n\) symmetry]\label{def:cn-symmetry}
Let \(\psi: \mathbb{R}^3 \rightarrow \mathbb{R}\) be a scalar field representing a molecule with the \(x\)-axis as its principal axis (e.g., the screw axis of a helix). The field~\(\psi\) possesses \textbf{axial \(C_n\) symmetry} (for an integer \(n \ge 1\)) if it remains invariant under rotation by \(\frac{2\pi}{n}\) radians about the \(x\)-axis. That is, for all \(\mathbf{r} \in \mathbb{R}^3\),
\begin{equation}\label{eq:cn-symmetry-def}
\psi(\mathbf{r}) = \psi\left(R_x\left(\frac{2\pi k}{n}\right)\mathbf{r}\right) \quad \text{for all } k \in \{0, 1, \dots, n-1\},
\end{equation}
where \(R_x(\theta)\) is the rotation matrix about the \(x\)-axis defined in~\eqref{eq:rotation-matrix}.
\end{definition}

Note that \(C_1\) symmetry is trivial, as it implies invariance under a rotation of \(2\pi\), which is true for any function. For \(n \ge 2\), \(C_n\) symmetry imposes a non-trivial constraint. If a structure possesses \(C_n\) symmetry, it also possesses \(C_k\) symmetry for any integer \(k\) that divides \(n\). As a convention, we refer to a structure as having a \(C_n\) symmetry if \(n\) is the largest integer for which the \(C_n\) symmetry holds.

\section{Problem Setup}\label{sec:problem-setup}

In Section~\ref{sec:preliminaries}, we introduced a mathematical framework for modeling helical structures and the projection process. We now formalize the problem addressed in this work.

We are given a collection of two-dimensional projection images. Each image corresponds to a segment of an unknown three-dimensional helical molecule, which is assumed to possess a known axial $C_n$ symmetry (Definition~\ref{def:cn-symmetry}). These projections are denoted $\Pi_B(t_i, \psi)$, where $\psi$ is the unknown helical structure, and $t_i \in \mathbb{R}$ is the unknown axial position of the segment from which the $i$-th image was obtained.

The goal is to recover the unknown\textbf{ projection angles} $\theta_i$ associated with each projection. As shown in Lemma~\ref{lemma:viewing-angle}, each projection $\Pi_B(t_i, \psi)$ is equivalent to a projection of a fixed reference segment, rotated about the screw axis by an angle $\theta_i$. Once these angles are determined, classical tomographic reconstruction techniques can be applied to reconstruct the 3D structure of the segment (and thus of the entire helix).

The data obtained using cryo-EM, which are essentially projections of biological structures, contains an inherent chirality ambiguity, which arises from the property shown in Lemma \ref{lemma:reflection-invariance}. Notice that by using the notation from~\eqref{eq:mirror-operator},~\eqref{eq:mirror-function}, and the definition of the rotation matrix \(R_x\left(\theta\right)\) in~\eqref{eq:rotation-matrix}, one can directly verify that
\[
MR_x\left(\theta\right)M = R_x\left(-\theta\right).
\]
Therefore, as a result of Lemma~\ref{lemma:reflection-invariance}, we have that
\begin{equation}\label{eq:chirality-ambiguity-x}
    P_{R_x(\theta)}\psi = P_{R_x(-\theta)}\psi^M.
\end{equation}
This means that valid projections of a helix or a segment are also valid projections of the mirror image of the same structure. Due to this inherent ambiguity, we aim to recover the projection angles only up to a global additive constant (reflecting the arbitrary choice of a reference segment) and a possible sign flip (reflecting the unresolved chirality).

Note that here we formulated the problem in terms of continuous helices. The more realistic discrete case will be dealt with in Section~\ref{sec:discrete-helix-angle-recovery}

\section{Prior work}\label{sec:prior-work}

Historically, the reconstruction of helical assemblies from cryo-EM images began with Fourier-Bessel methods~\cite{fourier-bessel-reconstruction,DEROSIER1968}, which leverage the observation that the Fourier transform of a perfectly helical structure consists of a set of layer lines indexed by Bessel functions. These approaches allow one to infer the helical symmetry parameters -- specifically, the rise and twist of a discrete helix or the pitch in the continuous case (see Definition~\ref{def:helix}), by matching observed layer-line patterns with theoretical predictions. While foundational, these methods are sensitive to imperfections in the specimen and to image noise. They also assume long, well-ordered helices~\cite{fourier-bessel-reconstruction}, which are increasingly rare in modern cryo-EM, where molecules are often short, flexible, or structurally heterogeneous~\cite{helical-reconstruction-again}.

A major difficulty with Fourier-Bessel analysis is that the power spectrum of a segment's projection may correspond to multiple valid combinations of rise and twist~\cite{helical-reconstruction-again}. These ambiguities are especially pronounced when the image contrast varies across the radius of the helix, or when the repeating components of the structure (referred to as subunits) are irregularly spaced along the filament. These issues reduce the reliability of symmetry estimation based on power spectrum analysis, particularly under high noise or when structural variations are present~\cite{RELION-helical-reconstruction}.

To address these challenges, the iterative helical real-space reconstruction (IHRSR) method was developed~\cite{IHRSR}. IHRSR refines both the symmetry parameters and the alignment of overlapping segments iteratively, improving robustness to disorder. However, it depends on an initial estimate of the symmetry, and poor initialization can lead to convergence to incorrect solutions~\cite{helical-reconstruction-again}. Software like RELION and cryoSPARC have implemented IHRSR-based pipelines with enhanced optimization techniques~\cite{RELION-helical-reconstruction, CryoSPARC-paper}, but these tools still assume that the approximate symmetry parameters are either known or can be scanned exhaustively.

More recent approaches integrate single-particle analysis strategies into helical reconstruction workflows~\cite{RELION-helical-reconstruction-full, RELION-helical-reconstruction}. In these methods, overlapping filament segments are treated as independent particles, enabling robust classification and averaging. This allows for improved handling of heterogeneity and has enabled near-atomic reconstructions in favorable cases. However, these methods still require the symmetry parameters, either supplied a priori or estimated iteratively, and remain sensitive to incorrect symmetry assumptions~\cite{symmetry-trap}.

Despite these advances, a core challenge persists: reliable \emph{a priori} determination of the helical symmetry parameters. In practice, rise and twist are often estimated through trial-and-error, low-resolution power spectra, or user expertise~\cite{helical-reconstruction-again, RELION-helical-reconstruction}. These estimates may be biased or ambiguous, and if the assumed symmetry is incorrect, the resulting 3D reconstruction, even if high-resolution, can be fundamentally flawed.

Our work addresses this central issue by proposing a framework that directly estimates the relative projection angles between helical segments from the data, without requiring prior knowledge of the helical parameters. In particular, we build upon prior successful attempts for recovering the relative orientations of one-dimensional projections of two-dimensional objects~\cite{graph_laplacian_2d_tomography, random_proj}. By avoiding reliance on symmetry parameters during the initial alignment, our method circumvents the ambiguity inherent in power spectrum-based approaches and mitigates the risk of convergence to incorrect reconstructions due to poor initializations.

\section{Spectral angle recovery}

In this section, we introduce the \textbf{SHREC} (\textbf{S}pectral \textbf{H}elical \textbf{REC}onstruction) algorithm, which is designed to estimate the projection angles of helical segments from their 2D projection images without prior knowledge of the helical symmetry parameters. Our method builds on the spectral embedding framework developed in~\cite{graph_laplacian_2d_tomography}, adapting it to the setting of 2D projections of 3D helical volumes.

\subsection{Spectral embedding}\label{subsec:spectral-embedding}
Spectral embedding techniques offer a principled method for representing complex, high-dimensional datasets in a lower-dimensional space that reflects the intrinsic geometry of the data. These methods are grounded in the spectral properties of the graph Laplacian, an operator constructed from pairwise similarities among data points. Although defined using only local proximity information, the graph Laplacian encapsulates global structural characteristics of the dataset. In this section, we introduce the required theory of graph Laplacians and spectral embedding.

Consider \( N \in \mathbb{N} \) data points  \( \{\mathbf{x}_i\}_{i=1}^N \subset \mathbb{R}^n \) lying on a smooth, \(d\)-dimensional manifold~$\mathcal{M} $ embedded in $\mathbb{R}^{n}$ (that is \(\mathcal{M} \subset \mathbb{R}^n\)). Note that the structure of the manifold is unknown, nor its dimension. The first step in the spectral embedding process is constructing a matrix \(\mathbf{W} \in \mathbb{R}^{N \times N}\) by
\begin{equation}\label{eq:weight-matrix}
    W_{ij} = K\left(\frac{\|\mathbf{x}_i - \mathbf{x}_j\|^2_2}{2\varepsilon}\right),
\end{equation}
where \(\|\cdot\|_2\) is the Euclidean norm in~$\mathbb{R}^{n}$, \(K: \mathbb{R}\rightarrow\mathbb{R}\) is a semi-positive kernel, and \(\varepsilon > 0\) is the bandwidth parameter. A common choice for the kernel function is \(K(x) = \exp(-x)\). We then normalize~$\mathbf{W}$ to be row-stochastic, by multiplying it by the inverse of the diagonal matrix \(\mathbf{D}\in\mathbb{R}^{N\times N}\) given by
\begin{equation}\label{eq:D-matrix}
    D_{ii} = \sum_{j=1}^NW_{ij}.
\end{equation}
Then, the graph Laplacian matrix \(\mathbf{L}\in\mathbb{R}^{N\times N}\) is defined as 
\begin{equation}\label{eq:laplacian-def}
    \mathbf{L} = \mathbf{I} - \mathbf{D}^{-1}\mathbf{W},
\end{equation}
where \(\mathbf{I}\) is the \(N\times N\) identity matrix.

The matrix \(\mathbf{D}^{-1}\mathbf{W}\) has a complete set of eigenvectors, \(\{\mathbf{v}_i\}_{i=0}^{N-1}\), with a corresponding set of non-negative set of eigenvalues \(\{\lambda_i\}_{i=0}^{N-1}\) such that \(1 = \lambda_0 \ge \lambda_1 \ge \cdots \ge \lambda_{N-1} \ge 0 \)~\cite{diffusion-maps-paper}. Since \(\mathbf{D}^{-1}\mathbf{W}\) is row-stochastic (i.e. its elements are nonnegative and its rows sum to~1), its first eigenvector, which corresponds to the eigenvalue \(\lambda_0 = 1\), is constant, that is \(\mathbf{v}_0 \propto (1,1,\cdots,1)^T\). The next \(k\) eigenvectors, with \(k \ll N\), are used as spectral embedding of the data into \(k\)-dimensional space via
\begin{equation}\label{eq:spectral embedding}
    \mathbf{x}_i \mapsto \left(\mathbf{v}_1(i), \mathbf{v}_2(i), \cdots, \mathbf{v}_k(i)\right),
\end{equation}
with \(\mathbf{v}_j(i)\) being the \(i\)'th entry of the \(j\)'th eigenvector. Notice that from~\eqref{eq:laplacian-def}, the graph Laplacian~$\mathbf{L}$ has the same eigenvectors as the matrix  \(\mathbf{D}^{-1}\mathbf{W}\), and its eigenvalues are \(\mu_i = 1 - \lambda_i\), with \(\lambda_i\) being the \(i\)'th eigenvalue of \(\mathbf{D}^{-1}\mathbf{W}\).

The importance of the embedding~\eqref{eq:spectral embedding} stems from the connection of the graph Laplacian to the Laplace–Beltrami operator on the underlying manifold \(\mathcal{M}\), which is defined as
\begin{equation}\label{eq:laplace–beltrami}
    \Delta_{\mathcal{M}} f = \frac{1}{\sqrt{|g|}} \sum_{i,j=1}^{d} \frac{\partial}{\partial x^i} \left( \sqrt{|g|} g^{ij} \frac{\partial f}{\partial x^j} \right),
\end{equation}
where \(g\) is the metric tensor of the manifold, $\abs{g}$ is the aboslute value of its determinant, and $g^{ij}$ are the components of the inverse metric tensor. Assuming the data points \(\{\mathbf{x}\}_{i=1}^N\) are uniformly distributed over the manifold, then 
for any smooth function \(f: \mathcal{M} \rightarrow \mathbb{R}\), as \(N \rightarrow \infty\) and \(\varepsilon \rightarrow 0\), it holds with high probability that~\cite{graph-laplacian-paper}
\begin{equation}\label{eq:laplacian-limit}
\frac{1}{\varepsilon} \sum_{j=1}^{N} L_{ij} f(\mathbf{x}_j) = \frac{1}{2} \Delta_{\mathcal{M}} f(\mathbf{x}_i) + O\left( \frac{1}{N^{1/2} \varepsilon^{1/2 + d/4}}, \varepsilon \right).
\end{equation}
Essentially,~\eqref{eq:laplacian-limit} asserts that the graph Laplacian provides a pointwise approximation of the Laplace–Beltrami operator on the manifold \(\mathcal{M}\). Importantly, this approximation is obtained without access to the manifold nor its dimension, but only to the sampled points.

Equation~\eqref{eq:laplacian-limit} can be generalized to the case of data points \(\{\mathbf{x}_i\}_{i=1}^N\) that are not uniformly distributed over the manifold. This is done by defining 
\begin{equation}
    \tilde{\mathbf{W}} = \mathbf{D}^{-1}\mathbf{W}\mathbf{D}^{-1},
\end{equation}
where \(\mathbf{D}\) is the diagonal matrix defined in~\eqref{eq:D-matrix}, computing the diagonal matrix~$\tilde{\mathbf{D}}$ of rows sum of~$\tilde{\mathbf{W}}$, and defining the density-invariant graph Laplacian as \(\tilde{\mathbf{L}} = \mathbf{I} - \tilde{\mathbf{D}}^{-1} \tilde{\mathbf{W}}\) (compare with~\eqref{eq:D-matrix} and \eqref{eq:laplacian-def}). When \(N\rightarrow\infty\) and \(\varepsilon \rightarrow 0\), the density-invariant graph Laplacian satisfies~\eqref{eq:laplacian-limit} (with~\(\tilde{\mathbf{L}}\) instead of~\(\mathbf{L}\)), independently of the density from which the points \(\{\mathbf{x}_i\}_{i=1}^N\) were sampled. 

We will focus on data restricted to a one-dimensional closed manifold (namely, a closed curve), since as shown later, this is the geometry of the data in our problem. Specifically, let \(\Gamma: \mathbb{R} \rightarrow \mathbb{R}^n\) be a closed curve of length~$l$ parametrized by arc length \(s\). The Laplace–Beltrami operator on this curve, as defined in~\eqref{eq:laplace–beltrami}, reduces to the second derivative with respect to arc length, namely,
\begin{equation}\label{eq:curve-laplacian}
    \Delta_\Gamma f(s) = f''(s),
\end{equation}
for any smooth function \(f: \Gamma \rightarrow \mathbb{R}\). Since the curve is closed, functions on \(\Gamma\) must satisfy the periodic boundary condition
\begin{equation}\label{eq:boundary-condition}
    f(0) = f(l).
\end{equation}
The first eigenfunction of \(\Delta_\Gamma\) satisfying the condition in~\eqref{eq:boundary-condition} is the constant function \(\phi_0(s) = 1\), corresponding to the eigenvalue \(\mu_0 = 0\). The remaining eigenfunctions are given by
\begin{equation}\label{eq: S1 eigenfunctions}
    \phi_m(s) = \begin{cases}
        \cos\left(\dfrac{2\pi}{l} \cdot \left\lceil \dfrac{m}{2} \right\rceil s\right) & \text{if } m \text{ is odd}, \\
        \sin\left(\dfrac{2\pi}{l} \cdot \left\lceil \dfrac{m}{2} \right\rceil s\right) & \text{if } m \text{ is even},
    \end{cases} \quad m \in \mathbb{N},
\end{equation}
with corresponding eigenvalues
\begin{equation}\label{eq:S1 eigenvalues}
    \mu_m = \left(\dfrac{2\pi}{l}\right)^2 \left\lceil \dfrac{m}{2} \right\rceil^2.
\end{equation}
Note that all eigenvalues \(\mu_m\) for \(m \geq 1\) have multiplicity two, except for the zero eigenvalue~\(\mu_0\), which is non-degenerate. Embedding the curve \(\Gamma \subset \mathbb{R}^n\) into \(\mathbb{R}^2\) using the first two non-trivial eigenfunctions therefore results in a circle in the plane. Explicitly, the embedding map is
\begin{equation}\label{eq:circle-embedding}
    \Gamma(s) \mapsto \left(\cos\left(\dfrac{2\pi s}{l}  \right), \sin\left(\dfrac{2\pi s}{l} \right) \right)\quad s \in [0, l).
\end{equation}

Combining~\eqref{eq:circle-embedding} with convergence results for the eigenvectors of the graph Laplacian~\cite{cheng2022eigen}, we conclude that the spectral embedding of data sampled from a one-dimensional closed manifold~$\Gamma$ is approximately a circle in~$\mathbb{R}^2$. Specifically, if each data point is of the form $\mathbf{x}_i = \Gamma(s_i)$, then the spectral embedding~\eqref{eq:spectral embedding} reads as 
\begin{equation}\label{eq:approx-circle}
\mathbf{x}_i \mapsto \left(\mathbf{v}_1(i), \mathbf{v}_2(i)\right) \approx \left(\cos\left(\dfrac{2\pi s_i}{l}  \right), \sin\left(\dfrac{2\pi s_i}{l} \right) \right).
\end{equation}
As a result of the multiplicity of the eigenvalues (see~\eqref{eq: S1 eigenfunctions} and~\eqref{eq:S1 eigenvalues}),~\eqref{eq:approx-circle} holds up to a rotation (determining the point on the curve corresponding to $s=0$) and reflection (the direction of growing~$s$). See~\cite{graph_laplacian_2d_tomography} for further details.

Since~\eqref{eq:circle-embedding} and~\eqref{eq:approx-circle} describe a circle, it is natural to introduce an angular parametrization, which reflects the intrinsic periodicity of the underlying manifold and offers a convenient one-dimensional coordinate system. Define the angular coordinate $\varphi : [0, l) \rightarrow [0, 2\pi)$ by
\begin{equation}\label{eq:angular-param}
\varphi(s) = \frac{2\pi}{l}s.
\end{equation}
This mapping linearly transforms the arc-length parameter $s$ to an angular variable $\varphi$, such that a full traversal of the curve $\Gamma$ corresponds to a full revolution around the unit circle. Under this parametrization, the embedding~\eqref{eq:circle-embedding} becomes
\begin{equation}
\Gamma(\varphi) \mapsto \left(\cos(\varphi), \sin(\varphi)\right),
\end{equation}
with $\varphi \in [0, 2\pi)$.

\subsection{Projection angle recovery}\label{subsec:projection-angle-recovery}

Having established the theoretical foundation for recovering the intrinsic geometry of a one-dimensional closed manifold from high-dimensional data via spectral embedding, we now connect these insights to the problem of projection angle estimation in cryo-EM.
% \begin{definition}\label{def:projection-angle-function}
% Let \(\psi\) be a continuous helix with pitch \(P \in \mathbb{R}\setminus \{0\}\). Define the map \(\Theta:\left\{\Pi_B(t, \psi) : t \in \mathbb{R}\right\} \to S^1\) such that for any projection \(\Pi_B(t, \psi)\) of a segment taken at axial position \(t\),
% \begin{equation}
%     \Theta(\Pi_B(t, \psi)) = \frac{2\pi}{P}t \pmod{2\pi}.
% \end{equation}
% This function assigns to each segment's projection its corresponding projection angle relative to the reference segment at \(t_0 = 0\).
% \end{definition}
First, we derive the manifold structure of a set of projection images. Lemma~\ref{lemma:projection-differentiable} below establishes the differentiability of the map $\Phi: S^1 \to L^2(\Omega_B)$ that maps rotation angles to projected images. This differentiability is a prerequisite for proving that projection images have a manifold structure, as established in Theorem~\ref{theorem:manifold-structure}. The proofs of Lemma~\ref{lemma:projection-differentiable} and Theorem~\ref{theorem:manifold-structure} are given in Appendix~\ref{app:additional-proofs}.

\begin{lemma}\label{lemma:projection-differentiable}
Let \(B > 0\), and let 
\(
f \;\in\; C^1\bigl(\bigl[-\tfrac{B}{2},\tfrac{B}{2}\bigr]^3\bigr),
\)
i.e., \(f\) is continuously differentiable on the cube \(\bigl[-\tfrac{B}{2},\tfrac{B}{2}\bigr]^3\). Assume that the support of \(f\) contained in a cylinder with radius of \(\frac{B}{2}\), that is
\begin{equation}\label{eq:f-cyl-support}
\mathrm{supp}(f) \subseteq \left\{(x,y,z)\in \mathbb{R}^3 : \sqrt{y^2 + z^2} \le \frac{B}{2} \wedge |x| < \frac{B}{2} \right\}.
\end{equation}
For each \(\theta\in S^1\), let 
$R_x(\theta)\colon \mathbb{R}^3 \;\longrightarrow\; \mathbb{R}^3$
denote the rotation about the \(x\)‐axis by angle~\(\theta\), given in~\eqref{eq:rotation-matrix}.  Define 
\[
\Phi\colon S^1\;\longrightarrow\;L^2\Bigl(\bigl[-\tfrac{B}{2},\tfrac{B}{2}\bigr]^2\Bigr)
\]
by
\begin{equation}\label{eq:Phi-definition}
\Phi(\theta)(x,y)\;=\;(P_{R_x(\theta)}f)(x,y).
\end{equation}
Then, $\Phi$ is continuously differentiable. Specifically, 
\begin{equation}\label{eq:Phi‐derivative}
\Phi'(\theta)(x,y)
=\int_{-\tfrac{B}{2}}^{\tfrac{B}{2}}
\nabla f\bigl(R_x(\theta)(x,y,z)^T\bigr)\,\cdot\,\bigl(\dot R_x(\theta)(x,y,z)^T\bigr)\,dz,
\end{equation}
where \(\dot R_x(\theta) = \tfrac{d}{d\theta}\,R_x(\theta)\).
\end{lemma}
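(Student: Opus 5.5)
Throughout, extend \(f\) by zero outside the cube and write \(\Omega_B=\bigl[-\tfrac B2,\tfrac B2\bigr]^2\). The plan is to differentiate under the integral sign pointwise in \((x,y)\), and then upgrade the pointwise derivative to a derivative in \(L^2(\Omega_B)\) through a uniform estimate.

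The role of the cylindrical support assumption~\eqref{eq:f-cyl-support} is to make the rotation harmless. Since \(R_x(\theta)\) preserves \(x\) and \(\sqrt{y^2+z^2}\), the function \(f\circ R_x(\theta)\) is again supported in the same cylinder, hence inside the cube. So
\[
\Phi(\theta)(x,y)=\int_{-B/2}^{B/2} f\bigl(R_x(\theta)(x,y,z)^T\bigr)\,dz,
\]
and the integrand \(g(\theta,x,y,z)=f(R_x(\theta)\mathbf r)\) is \(C^1\) on \(\mathbb R\times[-\tfrac B2,\tfrac B2]^3\). We also need the zero extension to be \(C^1\) across the boundary of the cube. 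This holds because \(f\) and \(\nabla f\) vanish on the part of the boundary lying outside the cylinder; at the points where the cylinder touches the faces (\(\sqrt{y^2+z^2}=B/2\)), \(f\) vanishes outside the support, so by continuity \(f\) and \(\nabla f\) vanish there as well.

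By the chain rule,
\[
\partial_\theta g=\nabla f\bigl(R_x(\theta)\mathbf r\bigr)\cdot\dot R_x(\theta)\mathbf r .
\]
This is continuous on a compact set, so it is bounded by some \(C\) and uniformly continuous. The Leibniz rule then gives~\eqref{eq:Phi‐derivative} pointwise for each \((x,y)\), and the candidate derivative \(\Phi'(\theta)\) is continuous and bounded on \(\Omega_B\), in particular in \(L^2\).

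The main step is to show that the derivative exists in the \(L^2\) norm. Write
\[
\frac{\Phi(\theta+h)-\Phi(\theta)}{h}-\Phi'(\theta)=\int_{-B/2}^{B/2}\frac1h\int_0^h\bigl[\partial_\theta g(\theta+s,\cdot)-\partial_\theta g(\theta,\cdot)\bigr]\,ds\,dz .
\]
By uniform continuity of \(\partial_\theta g\), the integrand is bounded by \(\omega(|h|)\), where \(\omega\) is a modulus of continuity with \(\omega(h)\to0\). Hence the sup norm of the left-hand side over \(\Omega_B\) is at most \(B\,\omega(|h|)\), and its \(L^2\) norm is at most \(B^2\omega(|h|)\to0\). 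The same modulus estimate gives \(\|\Phi'(\theta_1)-\Phi'(\theta_2)\|_{L^2}\le B^2\omega(|\theta_1-\theta_2|)\), so \(\Phi'\) is continuous. Finally, \(2\pi\)-periodicity makes \(\Phi\) well defined on \(S^1\), which completes the proof.

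I expect the only genuine subtlety to be the support/boundary bookkeeping: checking that \(f\circ R_x(\theta)\) never requires values of \(f\) outside the cube, and that \(\nabla f\) at the rotated points is legitimate there. The cylinder hypothesis is precisely what handles this. The remaining steps are routine dominated-convergence and uniform-continuity estimates.
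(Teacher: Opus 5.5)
Your proof is correct. Its skeleton matches the paper's: you differentiate under the integral sign and identify $\Phi'$ via the chain rule. The analytic engine, however, is different.

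The paper bounds the difference quotient, via the mean value theorem, by $g(u)=\|\nabla f\|_{L^\infty}\|u\|$. It then uses dominated convergence in $z$ for each fixed $(x,y)$. For continuity of $\Phi'$ it applies dominated convergence a second time over $\Omega_B$, with a Cauchy--Schwarz bound as the dominating function.

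You instead take a modulus of continuity $\omega$ from the uniform continuity of $\partial_\theta g$. This gives uniform convergence on $\Omega_B$ of both the difference quotients and $\Phi'(\theta_n)$, with the explicit bound $B^2\omega(|h|)$ in $L^2(\Omega_B)$. What this buys:
\begin{itemize}
\item a shorter, quantitative argument that actually shows $\Phi$ is $C^1$ into $C(\Omega_B)$ with the sup norm;
\item it avoids a weak point in the paper, whose differentiability step only justifies convergence pointwise in $(x,y)$ before asserting convergence in $L^2$.
\end{itemize}

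Your explicit zero extension also makes rigorous something the paper leaves implicit. You check that the extension is $C^1$ because $f$ and $\nabla f$ vanish where the cylinder meets the lateral faces. The paper's mean value step evaluates $\nabla f$ along arcs $R_x(\theta+\xi)u$ that can leave the cube, without addressing this.

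The dominated-convergence style would adapt more easily if one wanted to weaken the regularity of $f$, say to a bounded but discontinuous $\nabla f$. Under the stated $C^1$ hypothesis, nothing is lost by your route.

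One small fix: $\mathbb R\times[-\tfrac B2,\tfrac B2]^3$ is not compact. Restrict $\theta$ to a compact interval, or use $2\pi$-periodicity, before invoking uniform continuity.
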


\begin{theorem}\label{theorem:manifold-structure}
Let \(B > 0\), and let 
\(
f \;\in\; C^1\bigl(\bigl[-\tfrac{B}{2},\tfrac{B}{2}\bigr]^3\bigr)\;\cap\;L^2\bigl(\bigl[-\tfrac{B}{2},\tfrac{B}{2}\bigr]^3\bigr).
\)
Assume that~\(f\) satisfies~\eqref{eq:f-cyl-support} and is invariant under the cyclic group $C_{n}$ for some $n\in\mathbb{N}$, and has no cyclic symmetry of higher order and no continuous rotational symmetry. Assume further that the corresponding \(\Phi\) (as defined in~\eqref{eq:Phi-definition}) satisfies the following
\begin{enumerate}
  \item[(i)] (\emph{Injectivity modulo $C_n$}) 
    \(\Phi(\theta_1)=\Phi(\theta_2)\)
    if and only if \(\theta_2-\theta_1\equiv \tfrac{2\pi k}{n}\pmod{2\pi}\).
  \item[(ii)] (\emph{Nondegeneracy}) 
    \(\Phi'(\theta)\neq 0\) in \(L^2(\bigl[-\tfrac{B}{2},\tfrac{B}{2}\bigr]^2)\) for every \(\theta\in S^1\).
\end{enumerate}
Then, the image
\[
\mathcal M \;=\;\Phi\bigl(S^1\bigr)
\;=\;\bigl\{P_{R_x(\theta)}f:\theta\in S^1\bigr\}
\]
is a closed, embedded, one‐dimensional \(C^1\)–submanifold of \(L^2(\bigl[-\tfrac{B}{2},\tfrac{B}{2}\bigr]^2)\).  In particular,~\(\mathcal M\) is diffeomorphic to the circle \(S^1\).
\end{theorem}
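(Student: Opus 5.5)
The plan is to factor $\Phi$ through the quotient circle $S^1/C_n$ and show that the induced map is an injective $C^1$ immersion of a compact manifold. Such a map is automatically an embedding, so its image is an embedded submanifold diffeomorphic to a circle.

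First, $\Phi$ is $C_n$-invariant: by~\eqref{eq:cn-symmetry-def} and the definition~\eqref{eq:ideal-projection-model}, $\Phi(\theta+\tfrac{2\pi k}{n}) = \Phi(\theta)$, since $R_x$ is a homomorphism and $f\circ R_x(\tfrac{2\pi k}{n}) = f$. Let $q\colon S^1\to S^1$ be the $n$-fold covering $q(\theta)=n\theta \bmod 2\pi$. It identifies $S^1/C_n$ with a circle and is a local diffeomorphism. Define $\tilde\Phi\colon S^1\to L^2([-\tfrac B2,\tfrac B2]^2)$ by $\tilde\Phi(\alpha)=\Phi(\alpha/n)$ for any choice of preimage. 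This is well defined by the invariance just noted, and $\tilde\Phi\circ q=\Phi$.

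Next I verify three properties of $\tilde\Phi$. It is $C^1$: locally $\tilde\Phi=\Phi\circ s$ for a smooth local section $s$ of $q$, and $\Phi$ is $C^1$ by Lemma~\ref{lemma:projection-differentiable}. It is injective: this is exactly hypothesis (i), since $\Phi(\theta_1)=\Phi(\theta_2)$ forces $\theta_1\equiv\theta_2$ modulo $\tfrac{2\pi}{n}$, that is, $q(\theta_1)=q(\theta_2)$. It is an immersion: $\tilde\Phi'(\alpha)=\tfrac1n\Phi'(\alpha/n)\neq 0$ by hypothesis (ii). Since the domain is one-dimensional, a nonzero derivative means the differential is injective, and its range is a closed one-dimensional subspace.

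The main step is upgrading the injective immersion to an embedding in the infinite-dimensional Hilbert space. Since $S^1$ is compact and $L^2$ is Hausdorff, the continuous bijection $\tilde\Phi\colon S^1\to\mathcal M$ is a homeomorphism onto its image with the subspace topology, so $\tilde\Phi$ is a topological embedding. Then I show $\mathcal M$ is locally a $C^1$ graph over its tangent line. Fix $\alpha_0$ and set $v=\tilde\Phi'(\alpha_0)$, and let $\pi$ be the orthogonal projection onto $\mathrm{span}\{v\}$. The scalar map $\alpha\mapsto\langle \tilde\Phi(\alpha),v\rangle/\|v\|^2$ has nonzero derivative at $\alpha_0$. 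By the one-dimensional inverse function theorem it is a $C^1$ diffeomorphism from a small arc $J$ onto an interval. Hence $\tilde\Phi(J)$ is the graph of a $C^1$ map from that interval into $v^\perp$. By the homeomorphism property, $\tilde\Phi(J)=\mathcal M\cap U$ for some open set $U$. This yields the submanifold charts, in the standard Banach-manifold sense of local graphs over a split subspace.

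Finally, $\mathcal M$ is compact as the continuous image of $S^1$, hence closed in $L^2$. It is one-dimensional by the charts above, and $\tilde\Phi$ is a $C^1$ diffeomorphism $S^1\to\mathcal M$: the inverse is $C^1$ in the charts because it is the inverse of the scalar map above. I expect the only delicate point to be the compactness/homeomorphism argument that rules out the image accumulating on itself. Compactness of $S^1$ disposes of this directly. The remaining hypotheses on the symmetry ($n$ maximal, no continuous symmetry) are consistent with (i) and are not used beyond it.
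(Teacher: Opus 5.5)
Your proposal is correct and follows essentially the same route as the paper. Both arguments factor $\Phi$ through the quotient $S^1/C_n$, identified with a circle via $\theta\mapsto n\theta$, and take injectivity of the induced map from (i), the immersion property from (ii), a homeomorphism onto the image from compactness of the circle and the Hausdorff property of $L^2$, and closedness of $\mathcal M$ from compactness; your explicit local-graph construction over $\mathrm{span}\{v\}$ adds something the paper lacks, since it only asserts that these facts yield a $C^1$ embedding without writing out the submanifold charts.
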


The preceding theorem provides the theoretical basis for recovering projection angles from projections of helical segments. As we showed in Lemma~\ref{lemma:viewing-angle}, a set of projected helical segments is equivalent to a set of projections of a reference helical segment from different angles. The following lemma connects this observation with Theorem~\ref{theorem:manifold-structure} by establishing that helical segment projections form a one-dimensional manifold in $L^2(\mathbb{R}^2)$. This manifold structure provides the analytical framework for assigning a projection angle to each helical segment via a diffeomorphism that parametrizes the manifold by the angular variable.

\begin{theorem}\label{lemma:helical-projections-manifold}
    Let \(\psi: \mathbb{R}^3 \rightarrow \mathbb{R}\) be a smooth \(C^\infty\) helix with \(C_n\) symmetry and support contained in a cylinder with radius of \(\frac{B}{2}\), that is
    \[
    \mathrm{supp}(\psi) \subseteq \left\{(x,y,z)\in \mathbb{R}^3 : \sqrt{y^2 + z^2} \le \frac{B}{2} \right\}.
    \]
    Assume that all segments \(S_B(\psi,t)\)
    of \(\psi\) (see Definition~\ref{def:helical-segment}) 
    satisfy conditions \textit{(i)} and \textit{(ii)} from Theorem~\ref{theorem:manifold-structure} (injectivity modulo $C_n$; nondegeneracy of the derivative). Then, the following hold
    \begin{enumerate}
        \item The set of segment projections \(\{\Pi_B(t,\psi) : t \in \mathbb{R}\}\) is a closed, one‐dimensional submanifold of \(L^2(\bigl[-\tfrac{B}{2},\tfrac{B}{2}\bigr]^2)\), diffeomorphic to the circle \(S^1\).
        \item  Define \(\Gamma_\psi: S^1 \to \{\Pi_B(t,\psi) : t \in \mathbb{R}\}\) by 
        \begin{equation}\label{eq:gamma-definition}
            \Gamma_\psi(\varphi) = P_{R_x(\varphi/n)}S_B(0,\psi)
        \end{equation}
        for all \(\varphi \in [0, 2\pi)\), where $S_{B}$ is given in Definition~\ref{def:helical-segment}. Then \(\Gamma_\psi\) is well defined and a diffeomorphism.
    \end{enumerate}
\end{theorem}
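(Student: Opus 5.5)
The plan is to reduce everything to Theorem~\ref{theorem:manifold-structure} applied to the single reference segment $f = S_B(0,\psi)$, and to use Lemma~\ref{lemma:viewing-angle} to identify the set of all segment projections with the orbit $\Phi(S^1)$.

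\textbf{Checking the hypotheses of Theorem~\ref{theorem:manifold-structure} for $f$.} Since $\psi$ is $C^\infty$, its restriction $f$ to the cube $[-\tfrac B2,\tfrac B2]^3$ is $C^1$ and bounded, hence in $L^2$ of the cube. The support of $\psi$ lies in the infinite cylinder of radius $\tfrac B2$, so intersecting with the cube gives~\eqref{eq:f-cyl-support}. The only boundary issue is the strict inequality $|x|<\tfrac B2$, which I would handle by working on the closed cube or by a harmless shrinking of the box.

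The $C_n$ invariance of $f$ follows from that of $\psi$, because $R_x$ preserves the cube. The statement assumes conditions (i) and (ii) for all segments, and "no higher symmetry" is the convention attached to $C_n$. Theorem~\ref{theorem:manifold-structure} then shows that $\mathcal M=\{P_{R_x(\theta)}f : \theta\in S^1\}$ is a closed embedded $C^1$ curve diffeomorphic to $S^1$.

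\textbf{Identifying the set of segment projections with $\mathcal M$.} By Lemma~\ref{lemma:viewing-angle} with $t_0=0$, we have $\Pi_B(t,\psi)=P_{R_x(2\pi t/P)}f$, so $\{\Pi_B(t,\psi)\}\subseteq\mathcal M$. Conversely, every $\theta$ equals $2\pi t/P$ for $t=P\theta/(2\pi)$, so the two sets coincide. This proves part~1.

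\textbf{Part~2.} Condition (i) makes $\Phi$ constant exactly on cosets of $\tfrac{2\pi}{n}\mathbb Z$. Hence $\Phi$ factors through the quotient $S^1/C_n$, which I identify with $S^1$ via $\theta\mapsto n\theta$. The induced map is precisely $\Gamma_\psi(\varphi)=\Phi(\varphi/n)$.

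For well-definedness, replacing $\varphi$ by $\varphi+2\pi$ shifts the argument by $2\pi/n$, which leaves $\Phi$ unchanged by (i) (or directly by $C_n$ invariance). Thus $\Gamma_\psi$ is a well-defined map on $S^1$. It is surjective onto $\mathcal M$ because each $\theta$ equals $\varphi/n$ modulo $2\pi/n$. It is injective because $\Gamma_\psi(\varphi_1)=\Gamma_\psi(\varphi_2)$ forces $(\varphi_1-\varphi_2)/n\in\tfrac{2\pi}{n}\mathbb Z$, i.e.\ $\varphi_1\equiv\varphi_2 \pmod{2\pi}$.

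For regularity, $\Gamma_\psi$ is $C^1$ by Lemma~\ref{lemma:projection-differentiable}, with $\Gamma_\psi'(\varphi)=\tfrac1n\Phi'(\varphi/n)\neq0$ by (ii). So $\Gamma_\psi$ is a continuous injective immersion from the compact space $S^1$ into the Hausdorff space $L^2$. Therefore it is a homeomorphism onto its image, hence an embedding, and its inverse is $C^1$ on the submanifold $\mathcal M$ by the inverse function theorem in local charts. This makes $\Gamma_\psi$ a diffeomorphism.

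\textbf{Main obstacle.} The delicate step is the quotient argument: showing that the induced map is a genuine diffeomorphism rather than just a bijection. This needs compactness to upgrade the injective immersion to a topological embedding, together with a careful use of the local charts supplied by Theorem~\ref{theorem:manifold-structure}. If that theorem's proof already constructs $\mathcal M$ as the image of the quotient map, I would simply cite that construction and observe that $\Gamma_\psi$ is exactly that map reparametrized by $\varphi=n\theta$.
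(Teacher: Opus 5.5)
Your proposal is correct and follows the paper's proof essentially step for step. You verify the hypotheses of Theorem~\ref{theorem:manifold-structure} for $f=S_B(0,\psi)$, identify $\{\Pi_B(t,\psi)\}$ with $\Phi(S^1)$ via Lemma~\ref{lemma:viewing-angle}, and check well-definedness, bijectivity and $\Gamma_\psi'(\varphi)=\tfrac1n\Phi'(\varphi/n)\neq 0$ for $\Gamma_\psi=\Phi(\cdot/n)$; your compact-to-Hausdorff argument makes explicit the embedding step that the paper only asserts at the end of its Step~4. One small correction: $R_x(2\pi k/n)$ does not preserve the cube $[-\tfrac B2,\tfrac B2]^3$ in general, so the $C_n$ invariance of $f$ should instead rest on the support of $\psi$ lying in the $R_x$-invariant cylinder of radius $\tfrac B2$, which fits inside the cube's cross-section.
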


\begin{proof}[Proof of Theorem \ref{lemma:helical-projections-manifold}]
We prove this result by first showing that the set of all helical segment projections $\{\Pi_B(t,\psi) : t \in \mathbb{R}\}$ coincides with the set of projections of a fixed reference segment viewed from all angles. We then apply Theorem~\ref{theorem:manifold-structure} to establish that this set forms a one-dimensional manifold, and explicitly construct the diffeomorphism~$\Gamma_\psi$ that parametrizes it.

\textbf{Step 1: Smoothness of segments.}
Since $\psi \in C^\infty(\mathbb{R}^3)$ is a smooth helix, for any fixed $t_0 \in \mathbb{R}$, the segment $S_B(t_0, \psi)$ defined by
\[
S_B(t_0, \psi)(\mathbf{r}) = \psi(\mathbf{r} - t_0\hat{\mathbf{x}})
\]
for $\mathbf{r} \in Q_B = [-\frac{B}{2}, \frac{B}{2}]^3$, inherits the smoothness of $\psi$. Specifically, since translation is a smooth operation and $\psi$ is smooth, the composition $\mathbf{r} \mapsto \psi(\mathbf{r} - t_0\hat{\mathbf{x}})$ is also smooth on~$Q_B$. Therefore, $S_B(t_0, \psi) \in C^\infty(Q_B)$.

\textbf{Step 2: Establishing the relationship between segment projections and rotations.}
Fix a reference segment at $t_0 = 0$ and denote $f = S_B(0, \psi)$. By Lemma~\ref{lemma:viewing-angle}, for any $t \in \mathbb{R}$, the projection $\Pi_B(t, \psi)$ can be expressed as
\[
\Pi_B(t, \psi) = P_{R_x(\theta_t)} S_B(0, \psi) = P_{R_x(\theta_t)} f,
\]
where $\theta_t = \frac{2\pi}{P}t$ is the angle between segments at positions $0$ and $t$.

Since $t$ ranges over all of $\mathbb{R}$ and the map $t \mapsto \frac{2\pi}{P}t \pmod{2\pi}$ covers all of $[0, 2\pi)$, we have
\[
\{\Pi_B(t, \psi) : t \in \mathbb{R}\} = \{P_{R_x(\theta)} f : \theta \in [0, 2\pi)\}.
\]

\textbf{Step 3: Verifying the conditions of Theorem~\ref{theorem:manifold-structure}.}
Let $\Omega_B = [-\frac{B}{2}, \frac{B}{2}]^2$. Recall that $\Phi: S^1 \rightarrow L^2(\Omega_B)$ is defined by $\Phi(\theta) = P_{R_x(\theta)} f$.

(a) \emph{Verifying the support condition:} Since the support of $\psi$ is contained in a cylinder of radius~$\frac{B}{2}$ around the $x$-axis, we have
\[
\mathrm{supp}(f) = \mathrm{supp}(S_B(0, \psi)) \subseteq \left\{(x,y,z) \in \mathbb{R}^3 : \sqrt{y^2 + z^2} \le \frac{B}{2} \wedge |x| < \frac{B}{2}\right\}.
\]

(b) \emph{Verifying smoothness:} Since $f \in C^\infty(Q_B)$ (from Step 1), we have $f \in C^1(Q_B) \cap L^2(Q_B)$.

(c) \emph{Verifying $C_n$ symmetry:} The $C_n$ symmetry of $\psi$ is inherited by any segment, so $f$ has $C_n$ symmetry.

By the theorem's hypothesis, $f$ is such that its corresponding \(\Phi\) (as defined in~\eqref{eq:Phi-definition}) satisfies conditions (i) and (ii) of Theorem~\ref{theorem:manifold-structure}. Therefore, by Theorem~\ref{theorem:manifold-structure}, $\mathrm{im}(\Phi) = \{P_{R_x(\theta)} f : \theta \in S^1\}$ is a closed, embedded, one-dimensional $C^1$-submanifold of $L^2(\Omega_B)$, diffeomorphic to~$S^1$. From Step~2, $\{\Pi_B(t, \psi) : t \in \mathbb{R}\} = \mathrm{im}(\Phi)$, which proves assertion~1 of the lemma.

\textbf{Step 4: Proving that $\Gamma_\psi$ is a diffeomorphism.}
Recall that $\Gamma_\psi: S^1 \rightarrow \{\Pi_B(t, \psi) : t \in \mathbb{R}\}$ is defined by
\[
\Gamma_\psi(\varphi) = P_{R_x(\varphi/n)} S_B(0, \psi) = \Phi(\varphi/n).
\]

(a) \emph{Well-definedness:} Since $\Phi$ has period $\frac{2\pi}{n}$ due to the $C_n$ symmetry (i.e., $\Phi(\theta + \frac{2\pi}{n}) = \Phi(\theta)$), the map $\varphi \mapsto \Phi(\varphi/n)$ has period $2\pi$. Thus $\Gamma_\psi$ is well-defined on $S^1$.

(b) \emph{Injectivity:} Suppose $\Gamma_\psi(\varphi_1) = \Gamma_\psi(\varphi_2)$. Then $P_{R_x(\varphi_1/n)} f = P_{R_x(\varphi_2/n)} f$. By condition (i) of Theorem~\ref{theorem:manifold-structure}, this implies
\[
\frac{\varphi_2 - \varphi_1}{n} \equiv \frac{2\pi k}{n} \pmod{2\pi}
\]
for some $k \in \{0, 1, \ldots, n-1\}$. Therefore, $\varphi_2 - \varphi_1 \equiv 2\pi k \pmod{2\pi}$, which means $\varphi_1 = \varphi_2$ on $S^1$. Thus $\Gamma_\psi$ is injective.

(c) \emph{Surjectivity:} For any $\Pi_B(t, \psi) \in \{\Pi_B(t, \psi) : t \in \mathbb{R}\}$, by Step 2 there exists $\theta \in [0, 2\pi)$ such that $\Pi_B(t, \psi) = P_{R_x(\theta)} f$. Setting $\varphi = n\theta \pmod{2\pi}$, we have
\[
\Gamma_\psi(\varphi) = P_{R_x(\varphi/n)} f = P_{R_x(\theta)} f = \Pi_B(t, \psi).
\]
Therefore,~$\Gamma_\psi$ is surjective.

(d) \emph{Smoothness with non-vanishing derivative:} The map $\Gamma_\psi(\varphi) = \Phi(\varphi/n)$ is the composition of the smooth map $\varphi \mapsto \varphi/n$ with the $C^1$ map $\Phi$ (by Lemma~\ref{lemma:projection-differentiable}). Therefore, $\Gamma_\psi$ is $C^1$.

The derivative is given by
\[
\Gamma_\psi'(\varphi) = \frac{1}{n} \Phi'(\varphi/n).
\]
By condition (ii) of Theorem~\ref{theorem:manifold-structure}, $\Phi'(\theta) \neq 0$ for all $\theta \in S^1$. Therefore, $\Gamma_\psi'(\varphi) \neq 0$ for all $\varphi \in S^1$.

Since $\Gamma_\psi: S^1 \rightarrow \{\Pi_B(t, \psi) : t \in \mathbb{R}\}$ is a bijective $C^1$ map between one-dimensional manifolds with non-vanishing derivative everywhere, it is a diffeomorphism. This completes the proof of assertion 2 of the lemma.
\end{proof}

Theorem \ref{lemma:helical-projections-manifold} shows that a set of projections of helical segments \(\{\Pi_B(t,\psi)\}_{t\in\mathbb{R}}\) embeds as a smooth, closed one-dimensional submanifold in \(L^2(\bigl[-\tfrac{B}{2},\tfrac{B}{2}\bigr]^2)\) that is diffeomorphic to \(S^1\). 

We now apply the results from Lemma~\ref{lemma:helical-projections-manifold} and Theorem~\ref{theorem:manifold-structure} to the problem of estimating projection angles from observed cryo-EM images. Given a set of $N$ observed projection images $\{\Pi_j\}_{j=1}^N$, where each $\Pi_j = \Pi_B(t_j, \psi)$ for some unknown axial position $t_j$, it follows from Lemma~\ref{lemma:helical-projections-manifold} that this set lies on a one-dimensional submanifold of \(L^2(\bigl[-\tfrac{B}{2},\tfrac{B}{2}\bigr]^2)\) diffeomorphic to \(S^1\). This connects naturally to the spectral embedding framework developed in Section~\ref{subsec:spectral-embedding}. By applying spectral embedding to these projections, we obtain coordinates \((\mathbf{v}_1(j), \mathbf{v}_2(j))\) for each image, which according to Section~\ref{subsec:spectral-embedding}, approximate points on a circle. From these coordinates, we can compute angular positions
\begin{equation}\label{eq:embedding-angle}
    \varphi_j = \operatorname{atan2}\left(\mathbf{v}_2(j), \mathbf{v}_1(j)\right).
\end{equation}
As established in~\eqref{eq:approx-circle} and~\eqref{eq:angular-param}, these embedding-derived angles \(\varphi_j\) approximate, up to a global rotation \(\phi_0 \in [0, 2\pi)\) and a possible reflection, the arc-length parametrization of the underlying manifold, namely,
\begin{equation}\label{eq:phi-s-relation}
    \varphi_j \approx \pm \frac{2\pi s_j}{l} + \phi_0,
\end{equation}
where \(s_j\) is the arc-length parameter corresponding to projection \(\Pi_j\), and \(l\) is the total length of the manifold.

Our objective is to connect these embedding angles \(\varphi_j\) to the actual projection angles~\(\theta_j\) that define the viewing direction of each helical segment. Let us choose a reference segment \(S_0 = S_B(0, \psi)\) at position \(t_0 = 0\). According to Lemma~\ref{lemma:viewing-angle}, each projection \(\Pi_j = \Pi_B(t_j, \psi)\) is equivalent to viewing this reference segment from angle \(\theta_j = \frac{2\pi}{P}t_j\). Due to the $C_n$ symmetry of the helix, the manifold $\mathcal{M}_f = \{P_{R_x(\theta)}f : \theta \in S^1\}$ (where $f = S_B(0,\psi)$ is the reference segment) is fully traced as the projection angle $\theta = \frac{2\pi}{P}t$ varies over any interval of length $2\pi/n$ (e.g., $[0, 2\pi/n)$). Consequently, as the projection angle $\theta$ completes a full $2\pi$ revolution, the manifold $\mathcal{M}_f$ is traversed $n$ times. Under the assumption of approximately constant speed parametrization (see ~\eqref{eq:constant-speed-assumption} below), the arc-length parameter \(s\) and the projection angle \(\theta\) are related by \begin{equation}\label{eq:s-theta-relation} s = \frac{l \cdot n \cdot \theta}{2\pi}, \end{equation} where the factor \(n\) accounts for the \(C_n\) symmetry.

To establish a direct relationship between the embedding angles~\(\varphi_j\) and the projection angles~\(\theta_j\), we introduce an assumption regarding how the manifold \(\mathcal{M}_f\) is parametrized by the projection angle. Specifically, we assume that the parametrization of \(\mathcal{M}_f\) by \(\theta \in [0, \frac{2\pi}{n})\) has approximately constant speed with respect to the \(L^2\)-norm, that is
\begin{equation}\label{eq:constant-speed-assumption}
    \left\|\frac{d}{d\theta} \left(P_{R_x(\theta)} S_0\right) \right\|_{L^2(\mathbb{R}^2)} \approx C,
\end{equation}
for some constant \(C > 0\) for all \(\theta \in [0, \frac{2\pi}{n})\). The assumption of~\eqref{eq:constant-speed-assumption} implies that arc-length along the manifold \(\mathcal{M}_f\) is approximately proportional to the change in projection angle~\(\theta\). While not universally guaranteed, this is often a reasonable approximation for helical structures where the projected mass or structural features change at a relatively consistent rate as the viewing angle~\(\theta\) varies over~\([0, 2\pi/n)\). Such behavior can be expected if the molecule's principal features are distributed somewhat uniformly around its helical axis, rather than being concentrated at specific azimuthal angles.

Combining Equations \eqref{eq:phi-s-relation} and \eqref{eq:s-theta-relation}, we find that the relationship between the embedding angle~$\varphi_j$ and the true projection angle~$\theta_j$ is
\begin{equation}\label{eq:phi-theta-final-relation}
    \varphi_j \approx \pm n\theta_j + \phi_0 \pmod{2\pi}.
\end{equation}
This equation signifies that the angles $\varphi_j$ recovered from spectral embedding are, up to a global rotation $\phi_0$ and a sign, $n$ times the true projection angles $\theta_j$, modulo $2\pi$. Therefore, the projection angles $\theta_j$ can be estimated as $\theta_j \approx \pm \frac{1}{n}(\varphi_j - \phi_0) \pmod{2\pi/n}$.

Algorithm~\ref{alg:angle-extraction-continuous} below describes the angle extraction process.

\begin{algorithm}[H]
\caption{SHREC: Spectral Angle Recovery for Continuous Helices}
\label{alg:angle-extraction-continuous}
\begin{algorithmic}[1]
\REQUIRE A set of $N$ helical segment projection images $\{\Pi_i\}_{i=1}^N \subset L^2(\mathbb{R}^2)$.
\REQUIRE The order of axial symmetry $n \in \mathbb{N}$ of the helix (cf. Definition~\ref{def:cn-symmetry}).
\ENSURE Estimated relative projection angles $\{\theta_i\}_{i=1}^N$, unique up to a global shift and a sign flip.

\COMMENT{Spectral Embedding}
\STATE Compute pairwise $L^2$-distances $d_{ij} = \|\Pi_i - \Pi_j\|_{L^2(\mathbb{R}^2)}$.
\STATE Construct a similarity matrix $\mathbf{W}$ with $W_{ij} = \exp(-d_{ij}^2 / (2\varepsilon))$ (cf.~\eqref{eq:weight-matrix}).
\STATE Construct the density-invariant graph Laplacian $\tilde{\mathbf{L}} = \mathbf{I} - \tilde{\mathbf{D}}^{-1}\tilde{\mathbf{W}}$, where $\tilde{\mathbf{W}} = \mathbf{D}^{-1}\mathbf{W}\mathbf{D}^{-1}$ and $\tilde{\mathbf{D}}$ is the diagonal matrix of row sums of $\tilde{\mathbf{W}}$ (cf. Section~\ref{subsec:spectral-embedding}).

\STATE Compute the eigenvectors $\mathbf{u}_1, \mathbf{u}_2$ of $\mathbf{S} = \tilde{\mathbf{D}}^{-1/2}\tilde{\mathbf{W}}\tilde{\mathbf{D}}^{-1/2}$ corresponding to the second and third largest eigenvalues, and set $\tilde{\mathbf{v}}_i = \tilde{\mathbf{D}}^{1/2}\mathbf{u}_i$ for $i = 1, 2$.

\COMMENT{Angle Extraction}
\STATE For each $\Pi_i$, compute its embedding angle $\varphi_i = \operatorname{atan2}(\tilde{\mathbf{v}}_2(i), \tilde{\mathbf{v}}_1(i))$ (cf.~\eqref{eq:embedding-angle}).
\STATE Set the estimated projection angles
    $\theta_i = \varphi_i / n$.
\RETURN Estimated projection angles $\{\theta_i\}_{i=1}^N$.
\end{algorithmic}
\end{algorithm}

\subsection{Discrete helices}\label{sec:discrete-helix-angle-recovery}
In the preceding sections, we developed a framework for angle recovery based on an idealized model of a continuous helix. While this model provides a clean mathematical foundation, real biological helical assemblies (see Definition~\ref{def:helix}) are composed of discrete, repeating subunits. This discrete nature breaks the exact correspondence between translation and rotation established in Lemma~\ref{lemma:translation-rotation-correspondence}. Specifically, for a discrete helix, an arbitrary translation along the screw axis cannot be perfectly compensated by a rotation about that axis, as shown below.

This section extends our analysis to the more realistic case of discrete helices. We will show that while the projections of discrete helical segments do not lie exactly on a one-dimensional manifold, they remain close to the ideal manifold defined by a corresponding continuous structure. This proximity allows us to justify the application of the spectral angle recovery method to helical segments of discrete helices, treating the deviation from the ideal manifold as a form of bounded noise.

We begin by formalizing the relationship between translation and rotation for a discrete helix.

\begin{lemma}\label{lemma:discrete-symmetry-equivalence}
Let $\psi$ be a discrete helix with rise $\Delta x$ and twist $\Delta\theta$.
Then, for any $t \in \mathbb{R}$ and any $\mathbf{r} \in \mathbb{R}^3$, there exist \(\theta_t \in [0, 2\pi)\) and \(s_t \in [-\frac{\Delta x}{2}, \frac{\Delta x}{2})\) such that
\begin{equation}\label{eq:discrete-symmetry-form}
\psi\left(\mathbf{r} + t\hat{\mathbf{x}}\right) = \psi\left(R_{x}\left(\theta_t\right)\mathbf{r} + s_t\hat{\mathbf{x}}\right).
\end{equation}
\end{lemma}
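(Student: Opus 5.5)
The plan is to decompose the translation $t$ into an integer multiple of the rise plus a small remainder, and to absorb the integer part using the discrete symmetry of Definition~\ref{def:helix}. Concretely, set
\[
m = \left\lfloor \frac{t}{\rise} + \frac12 \right\rfloor \in \mathbb{Z}, \qquad s_t = t - m\rise,
\]
so that $s_t \in [-\tfrac{\rise}{2}, \tfrac{\rise}{2})$. Then $t\hat{\mathbf{x}} = m\rise\,\hat{\mathbf{x}} + s_t\hat{\mathbf{x}}$, and the goal is to trade the $m\rise$ part for a rotation, exactly as in the proof of Lemma~\ref{lemma:translation-rotation-correspondence}.

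\textbf{Key step.} Apply the defining identity $\psi(\mathbf{u}) = \psi(R_x(k\twist)\mathbf{u} + k\rise\,\hat{\mathbf{x}})$ with $k = -m$ at the point $\mathbf{u} = \mathbf{r} + t\hat{\mathbf{x}}$. Because $R_x(\cdot)\hat{\mathbf{x}} = \hat{\mathbf{x}}$, this gives
\[
\psi(\mathbf{r}+t\hat{\mathbf{x}}) = \psi\bigl(R_x(-m\twist)\mathbf{r} + (t - m\rise)\hat{\mathbf{x}}\bigr) = \psi\bigl(R_x(\theta_t)\mathbf{r} + s_t\hat{\mathbf{x}}\bigr),
\]
where $\theta_t \equiv -m\twist \pmod{2\pi}$ is reduced to $[0,2\pi)$. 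Reducing modulo $2\pi$ is harmless because $R_x$ is $2\pi$-periodic.

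\textbf{Main obstacle.} The definition only states the symmetry for $n \in \mathbb{N}$, but $m$ may be negative or zero. The case $m = 0$ is trivial. For $m < 0$, I would invert the relation: apply the identity with $n = |m|$ at the point $\mathbf{u}' = R_x(-n\twist)(\mathbf{u} - n\rise\,\hat{\mathbf{x}})$. This yields $\psi(\mathbf{u}') = \psi(\mathbf{u})$, which is the symmetry with index $-n$. The map $\mathbf{u} \mapsto R_x(n\twist)\mathbf{u} + n\rise\,\hat{\mathbf{x}}$ is a bijection of $\mathbb{R}^3$, so this covers every point. Hence the symmetry holds for all $k \in \mathbb{Z}$, which removes the sign issue.

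Depending on the sign convention one could instead take $k = m$ with the opposite orientation; I would check the signs carefully at this step. No other earlier results are needed.
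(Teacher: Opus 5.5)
Your proof is correct and follows the paper's argument: the same choice $k=\lfloor t/\rise+\tfrac12\rfloor$, $\theta_t=-k\twist$, $s_t=t-k\rise$, and the same substitution of $\mathbf{r}+t\hat{\mathbf{x}}$ into the screw symmetry using $R_x(\cdot)\hat{\mathbf{x}}=\hat{\mathbf{x}}$. You are more careful than the paper, which silently uses the symmetry with the possibly negative index $-k$ and never reduces $\theta_t$ modulo $2\pi$; your extension to all $k\in\mathbb{Z}$ closes that gap, and although the case that actually needs the inversion is $m>0$ rather than $m<0$ as you wrote, this does not matter because you prove the symmetry for every integer index.
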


Lemma~\ref{lemma:discrete-symmetry-equivalence}  shows that unlike the continuous case (Lemma~\ref{lemma:translation-rotation-correspondence}), a translation in the discrete case decomposes into a discrete symmetry operation and a small residual shift. Specifically, it shows that shift of a discrete helix along its screw axis, is equivalent to rotation around the screw axis and a small residual shift $s_t$, where $|s_t| \le \rise/2$. When we consider the projection images, this residual shift $s_t$ introduces a deviation from the ideal manifold~$\mathcal{M}$ that we characterized in Theorem~\ref{theorem:manifold-structure}. In Theorem \ref{theorem:manifold-distance-bound} below, we provide a bound on this deviation. To quantify it, we recall our notation
\begin{equation}\label{eq:box-sets}
Q_{B}= \Bigl[-\tfrac{B}{2},\tfrac{B}{2}\Bigr]^{3},\qquad   
\Omega_{B}= \Bigl[-\tfrac{B}{2},\tfrac{B}{2}\Bigr]^{2},
\end{equation}
and define for \(f \in C^1(Q_B)\),
\begin{equation}
M_x(f) \;=\; \sup_{Q_{B}}\left|\frac{\partial f}{\partial x}\right|.
\end{equation}

\begin{theorem}\label{theorem:manifold-distance-bound}
Let $\psi$ be a discrete helix with rise $\Delta x$ and twist $\Delta\theta$, such that \(\psi \in C^1(\mathbb{R}^3)\) (that is, continuously differentiable) and \(\left|\frac{\partial \psi}{\partial x}\right| < \infty\).
Define
$$
\Pi(t)=\Pi_{B}(t,\psi)=P_{I}S_{B}(t,\psi),
\qquad t\in\mathbb R,
$$
where $P_{I}$ is the projection operator defined in~\eqref{eq:ideal-projection-model} and $S_{B}$ is the segment operator from Definition~\ref{def:helical-segment}. Let $$ \mathcal M_{\text{ideal}} = \bigl\{P_{R_x(\theta)}S_{B}(0,\psi):\theta\in[0,2\pi)\bigr\} $$ denote the ideal projection manifold of the reference helical segment at \(t=0\).
Then
\begin{equation}\label{eq:theorem-bound}
d_{L^{2}(\Omega_{B})}\!\bigl(\Pi(t),\mathcal M_{\text{ideal}}\bigr)
\le\;
\frac{1}{2}\rise\,M_{x}(\psi)\,B^{\frac{3}{2}},
\quad t\in\mathbb R,
\end{equation}
where 
\[ d_{L^{2}(\Omega_{B})}\!\bigl(\Pi(t),\mathcal M_{\text{ideal}}\bigr)=\inf_{\theta\in[0,2\pi)} \bigl\|\Pi(t)-P_{R_x(\theta)}S_{B}(0,\psi)\bigr\|_{L^{2}(\Omega_{B})}. 
\]
\end{theorem}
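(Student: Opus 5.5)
The plan is to exhibit, for each $t$, one explicit element of $\mathcal M_{\text{ideal}}$ close to $\Pi(t)$, and bound the $L^2(\Omega_B)$ distance to it. That element is $P_{R_x(\theta_t)}S_B(0,\psi)$, with $\theta_t$ supplied by Lemma~\ref{lemma:discrete-symmetry-equivalence}. The infimum defining $d_{L^2(\Omega_B)}$ is at most this particular distance, so no minimization over $\theta$ is needed.

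\textbf{Step 1 (reduction to a residual shift).} By Definition~\ref{def:helical-segment}, $S_B(t,\psi)(\mathbf r)=\psi(\mathbf r-t\hat{\mathbf x})$. I would apply Lemma~\ref{lemma:discrete-symmetry-equivalence} with $-t$ in place of $t$. This gives $\theta_t\in[0,2\pi)$ and $s_t$ with $|s_t|\le \rise/2$ such that
\[
S_B(t,\psi)(\mathbf r)=\psi\bigl(R_x(\theta_t)\mathbf r+s_t\hat{\mathbf x}\bigr),\qquad \mathbf r\in Q_B .
\]
The lemma as stated lets $\theta_t,s_t$ depend on $\mathbf r$. In the proof, however, they come from writing $-t=m\rise+s_t$ with $m\in\mathbb Z$ and $\theta_t=m\twist \bmod 2\pi$, so they depend only on $t$. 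I would state this explicitly, since the argument needs one rotation for the whole segment.

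\textbf{Step 2 (pointwise estimate).} Set $g(\mathbf r)=\psi(R_x(\theta_t)\mathbf r+s_t\hat{\mathbf x})-\psi(R_x(\theta_t)\mathbf r)$. The comparison function $P_{R_x(\theta_t)}S_B(0,\psi)$ lies in $\mathcal M_{\text{ideal}}$, and
\[
\Pi(t)-P_{R_x(\theta_t)}S_B(0,\psi)=P_I\,g .
\]
Since $R_x(\theta_t)$ fixes $\hat{\mathbf x}$, the two arguments of $\psi$ differ only in their $x$-coordinate. The mean value theorem in $x$ then gives $|g(\mathbf r)|\le |s_t|\,M_x(\psi)\le \tfrac12\rise\,M_x(\psi)$. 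Here $M_x(\psi)=\sup|\partial_x\psi|$ over the relevant region, which is finite by hypothesis.

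\textbf{Step 3 (norm estimate; main obstacle).} The remaining work is converting the pointwise bound on $g$ into an $L^2(\Omega_B)$ bound on $P_Ig(x,y)=\int_{-B/2}^{B/2}g(x,y,z)\,dz$ with exactly the exponent $B^{3/2}$. The powers of $B$ must be tracked carefully here.

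The obvious chain loses too much. It bounds $|P_Ig|\le B\sup|g|$ and then multiplies by $|\Omega_B|^{1/2}=B$, which gives $B^2$. The Cauchy--Schwarz route $\|P_Ig\|_{L^2(\Omega_B)}\le B^{1/2}\|g\|_{L^2(Q_B)}\le B^{1/2}\cdot B^{3/2}\sup|g|$ also gives $B^2$. So a factor $B^{1/2}$ has to be saved.

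I would first check whether the projection in this setting carries an implicit normalization. One candidate is that the $z$-integral is effectively an average over the slab, or measured in a unit where the slab has unit thickness. In that case $\|P_Ig\|_{L^2(\Omega_B)}\le \|g\|_{L^2(Q_B)}\le |Q_B|^{1/2}\sup|g|=B^{3/2}\sup|g|$. This yields precisely $\tfrac12\rise M_x(\psi)B^{3/2}$, so it is the route I would write up.

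If no such normalization is available, I expect the honest outcome to be one of two things. Either an extra hypothesis is needed, namely $B\le 1$ in the chosen units, so that $B^2\le B^{3/2}$ and the stated bound follows from the $B^2$ chain. Or the stated exponent must be corrected. A function that is locally linear in $x$ shows the $B^2$ scaling is attained for the pointwise-in-$z$ argument. I would flag this explicitly rather than hide it.
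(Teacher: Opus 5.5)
Your Steps 1 and 2 are exactly the paper's argument. The paper takes $\theta_t,s_t$ from Lemma~\ref{lemma:discrete-symmetry-equivalence}, built from $k=\lfloor t/\Delta x+\tfrac12\rfloor$, so they depend only on $t$, as you note. It bounds the infimum by the single candidate $P_{R_x(\theta_t)}S_B(0,\psi)$ and uses $|\psi(\mathbf r+s_t\hat{\mathbf x})-\psi(\mathbf r)|\le |s_t|M_x(\psi)$. Your handling of the sign is more careful than the paper's: it writes $\Pi(t)=P_I S_B(0,\psi(\cdot+t\hat{\mathbf x}))$, which is really $\Pi(-t)$, though this is harmless since $t$ ranges over $\mathbb R$.

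For Step 3, the paper uses precisely the chain you described as needing an implicit normalization. It asserts
\[
\int_{\Omega_B}\Bigl|\int_{-B/2}^{B/2}g\,dz\Bigr|^2\,dx\,dy\;\le\;\int_{Q_B}|g|^2\,d\mathbf r\;\le\;B^3\sup|g|^2 .
\]
No such normalization exists. $P_R$ in \eqref{eq:ideal-projection-model} is a plain $dz$-integral, so the first inequality is off by exactly the factor $B$ that Cauchy--Schwarz produces, and it holds only for $B\le 1$. Do not write up that route; your fallback is the correct conclusion. Steps 1--2 honestly give $d_{L^2(\Omega_B)}(\Pi(t),\mathcal M_{\text{ideal}})\le \tfrac12\Delta x\,M_x(\psi)\,B^2$. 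This implies the stated inequality only when $B\le 1$ in the chosen length unit.

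You can upgrade your flag to a disproof. Your remark that the chain is attained for functions linear in $x$ does not by itself refute the statement. The left-hand side is an infimum over $\theta$, and another rotation might do better. Scaling takes care of the infimum.

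For $\lambda>0$ set $\psi_\lambda(\mathbf r)=\psi(\mathbf r/\lambda)$. This is a discrete helix with rise $\lambda\Delta x$ and the same twist, and $M_x(\psi_\lambda)=\lambda^{-1}M_x(\psi)$. With box size $\lambda B$ one has $\Pi_{\lambda B}(\lambda t,\psi_\lambda)(x,y)=\lambda\,\Pi_B(t,\psi)(x/\lambda,y/\lambda)$, and the same identity holds for each $P_{R_x(\theta)}S_{\lambda B}(0,\psi_\lambda)$. Hence every candidate distance, and so the infimum, is multiplied by $\lambda^2$. The right-hand side of \eqref{eq:theorem-bound} is multiplied only by $\lambda\cdot\lambda^{-1}\cdot\lambda^{3/2}=\lambda^{3/2}$; equivalently, the two sides have dimensions $[\psi]\cdot\mathrm{length}^2$ and $[\psi]\cdot\mathrm{length}^{3/2}$. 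So any single configuration with positive distance violates the bound for large $\lambda$.

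One such configuration is $\psi(x,y,z)=a(x)h(y^2+z^2)$. Here $a$ is a nonconstant $\Delta x$-periodic $C^1$ function, and $h\ge 0$ is smooth, not identically zero, and supported in $\{y^2+z^2\le B^2/4\}$. Then $\mathcal M_{\text{ideal}}=\{\Pi(0)\}$, and for suitable $t$ we get $\Pi(t)-\Pi(0)=\bigl(a(x-t)-a(x)\bigr)H(y)\neq 0$, where $H(y)=\int h(y^2+z^2)\,dz$.

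So the statement is false as written, and the paper's proof contains exactly the error you located. The correct theorem carries $B^2$, or else needs the extra hypothesis $B\le 1$.
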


Theorem~\ref{theorem:manifold-distance-bound} provides the key theoretical justification for applying our spectral angle recovery method to discrete helices. It establishes that every projection image from a discrete helix lies within a bounded distance from the ideal manifold $\mathcal{M}_{\text{ideal}}$. This bound is directly proportional to the rise~$\rise$ and to~$M_x(\psi)$, which is the smoothness of the structure along the helical axis. 

This result implies that for a discrete helix with a sufficiently small rise that is sufficiently smooth, the set of projection images forms a ``thickened" version of the one-dimensional manifold $\mathcal{M}_{\text{ideal}}$. As long as this ``thickness" (quantified by the bound in~\eqref{eq:theorem-bound}) is small, the spectral embedding method described in Section~\ref{subsec:spectral-embedding} can reliably recover the underlying circular topology of the manifold and, consequently, provide accurate estimates of the projection angles. Therefore, Algorithm~\ref{alg:angle-extraction-continuous} can be applied directly to data from discrete helices, with the understanding that the recovered angles are approximations whose accuracy is governed by the helical parameters and the smoothness of the underlying structure. We demonstrate the validity of this approach experimentally in Section~\ref{sec:results}.

\section{Implementation details}\label{sec:implementation-details}

This section details the practical application of the SHREC algorithm within a complete helical reconstruction pipeline. The pipeline is designed to integrate with the widely used RELION~\cite{RELION-paper} software suite, leveraging its robust tools for data handling and refinement, while incorporating the  SHREC algorithm for angle recovery. The entire pipeline can be divided into four primary stages: data preprocessing, signal denoising, spectral angle recovery, and final 3D reconstruction. The input to the reconstruction pipeline is assumed to be a set of unaligned movie frames of the helical dataset. 

\subsection{Preprocessing}
The initial stage of the helical reconstruction pipeline involves standard cryo-EM data processing steps performed within RELION to produce a clean and consistently oriented set of helical segment images. These steps include:
\begin{enumerate}
    \item \textbf{Micrograph processing:} Raw movie frames are first subjected to beam-induced motion correction and per-micrograph contrast transfer function (CTF) estimation using the standard RELION procedures.
    \item \textbf{Helical segment extraction:} Helical filaments are manually or semi-automatically picked using RELION's helical tools. A stack of overlapping 2D images, which are the helical segment projections, is then extracted along these filaments.
    \item \textbf{2D Classification and alignment:} The extracted particle stack is subjected to reference-free 2D classification with the helical option active (\texttt{relion\_class\_2d}). This serves two functions:
    \begin{itemize}
        \item Data pruning: identifying and removing particles corresponding to noise, ice artifacts, or poorly formed helical segments.
        \item In-plane alignment: The resulting high-quality class averages reveal the orientation of the helical axis within the 2D image plane. This information is used to determine the in-plane rotational and translational parameters required to align all segments.
    \end{itemize}
\end{enumerate}

Following 2D classification, all selected helical segment projections can be aligned such that their helical axis is centered and oriented horizontally. The output of this stage is a consistently aligned stack of 2D images, which serves as the input to the SHREC denoising module.

\subsection{Denoising}
Cryo-EM images are characterized by a low signal-to-noise ratio (SNR). The spectral embedding method detailed in Section~\ref{subsec:spectral-embedding} relies on computing pairwise distances between images. In the presence of high noise levels, these distances are dominated by random fluctuations rather than by true structural differences, which can obscure the underlying  manifold structure of the data. To mitigate this, we employ a denoising step based on Wiener filtering prior to spectral embedding. The Wiener filter is the optimal linear filter for enhancing a signal corrupted by additive noise, provided that the statistical properties of the signal and noise are known or can be estimated.

Let the set of observed, aligned helical segment images be denoted by $\{y_i(\mathbf{x})\}_{i=1}^N$, where $\mathbf{x} \in \mathbb{R}^2$ represents the spatial coordinates. We model each image as the sum of a true, noise-free signal $s_i(\mathbf{x})$ and an additive noise component $n_i(\mathbf{x})$, that is
\begin{equation}
    y_i(\mathbf{x}) = s_i(\mathbf{x}) + n_i(\mathbf{x}).
\end{equation}
In the Fourier domain, this relationship becomes
\begin{equation}
    Y_i(\mathbf{f}) = S_i(\mathbf{f}) + N_i(\mathbf{f}),
\end{equation}
where $Y_i$, $S_i$, and $N_i$ are the Fourier transforms of $y_i$, $s_i$, and $n_i$, respectively, and $\mathbf{f} \in \mathbb{R}^2$ is the spatial frequency vector. We treat the signal and noise as realizations of zero-mean, stationary random processes, and we assume they are uncorrelated.

The Wiener filter, $G(\mathbf{f})$, is designed to produce an estimate $\hat{S}_i(\mathbf{f}) = G(\mathbf{f})Y_i(\mathbf{f})$ that minimizes the mean squared error $\mathbb{E}[|\hat{S}_i(\mathbf{f}) - S_i(\mathbf{f})|^2]$. The optimal filter is given by (see Appendix~\ref{app:wiener-filter} for a full derivation)
\begin{equation}\label{eq:wiener-filter-def}
    G(\mathbf{f}) = \frac{P_{SS}(\mathbf{f})}{P_{SS}(\mathbf{f}) + P_{NN}(\mathbf{f})},
\end{equation}
where $P_{SS}(\mathbf{f})$ and $P_{NN}(\mathbf{f})$ are the power spectral densities (PSDs) of the signal and noise, respectively. The PSD of a random process $X$ is defined as $P_{XX}(\mathbf{f}) = \mathbb{E}[|X(\mathbf{f})|^2]$, where the expectation is taken over the ensemble of all possible realizations.

In practice, the true signal and noise PSDs are unknown and must be estimated from the noisy observations $\{Y_i\}_{i=1}^N$. We employ a method based on Principal Component Analysis (PCA), which leverages the distinct statistical properties of the signal and noise. It's underlying assumption is that the coherent signal is concentrated in the low-order (high-variance) principal components (PCs), while the noise, being less correlated between images, is distributed more uniformly across all components.  To estimate the noise PSD, which we denote as $\hat{P}_{NN}(\mathbf{f})$, we average the power spectra of images reconstructed from higher-order PCs. The distinction between the high-order and low-order PCs is done by a simple "elbow method" heuristic. To enforce the common assumption that the noise is isotropic, this estimate is then radially averaged, yielding a PSD function that depends only on the magnitude of the spatial frequency. 

To estimate the PSD of the observed data, we start by averaging the power spectra of all input images
\begin{equation}\label{eq:total-ps}
    \hat{P}_{YY}(\mathbf{f}) = \frac{1}{N} \sum_{i=1}^N |Y_i(\mathbf{f})|^2.
\end{equation}
Since the signal and noise are assumed to be uncorrelated, their PSDs add linearly such that
\[
P_{YY}(\mathbf{f}) = P_{SS}(\mathbf{f}) + P_{NN}(\mathbf{f}).
\]
We can therefore estimate the signal's PSD by
\begin{equation}
    \hat{P}_{SS}(\mathbf{f}) = \max(0, \hat{P}_{YY}(\mathbf{f}) - \hat{P}_{NN}(\mathbf{f})).
\end{equation}
The $\max(0, \cdot)$ operation ensures the non-negativity of the estimated power, correcting for statistical fluctuations where the estimated noise power might locally exceed the total power. Figure~\ref{subfig:projection-ps} shows the log of the power spectrum of the observed projections, as calculated by~\eqref{eq:total-ps}. Figures \ref{subfig:noise-ps} and \ref{subfig:denoised-ps} show the extracted noise and signal power spectra, respectively.

\begin{figure}
    \centering
    \begin{subfigure}[b]{0.25\textwidth}
        \centering
        \includegraphics[width=\textwidth]{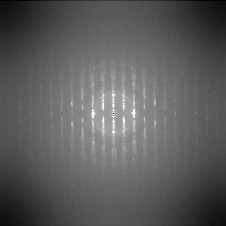}
        \caption{}
    \label{subfig:projection-ps}
    \end{subfigure}
    \hfill
    \begin{subfigure}[b]{0.25\textwidth}
        \centering
        \includegraphics[width=\textwidth]{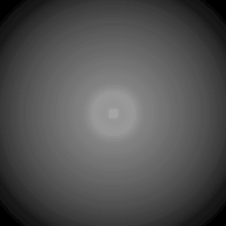}
        \caption{}
    \label{subfig:noise-ps}
    \end{subfigure}
    \hfill
    \begin{subfigure}[b]{0.25\textwidth}
        \centering
        \includegraphics[width=\textwidth]{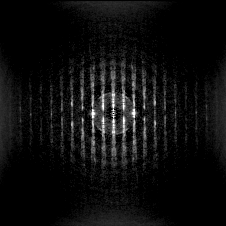}
        \caption{}
    \label{subfig:denoised-ps}
    \end{subfigure}
    \label{fig:power-spectra}
    \caption{Power spectra estimation for projections from the EMPIAR-10019 dataset. (a)~The (log) power spectrum of the projections. (b)~The estimated (log) power spectrum of the noise. (c)~The estimated (log) power spectrum of the signal.}
\end{figure}

With the estimated signal and noise PSDs, $\hat{P}_{SS}(\mathbf{f})$ and $\hat{P}_{NN}(\mathbf{f})$, we can now construct an empirical Wiener filter according to~\eqref{eq:wiener-filter-def}. This filter is then applied in the Fourier domain to each of the aligned segment images. The denoising procedure is demonstrated in Figure~\ref{fig:Wiener denoising}.

\begin{figure}[H]
    \centering
    \begin{subfigure}[b]{0.45\textwidth}
        \centering
        \includegraphics[width=0.5\textwidth]{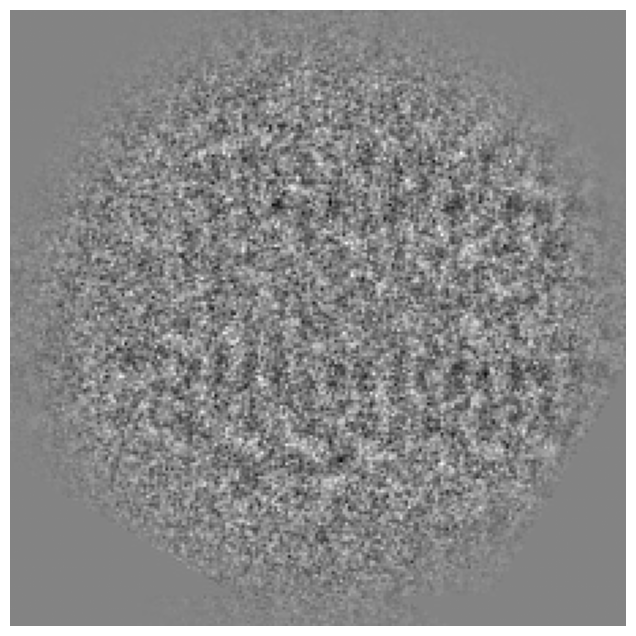}
        \caption{}
    \label{subfig:denoise-before}
    \end{subfigure}
    \hspace{0.02\textwidth} % Adjust this value as needed for spacing
    \begin{subfigure}[b]{0.45\textwidth}
        \centering
        \includegraphics[width=0.5\textwidth]{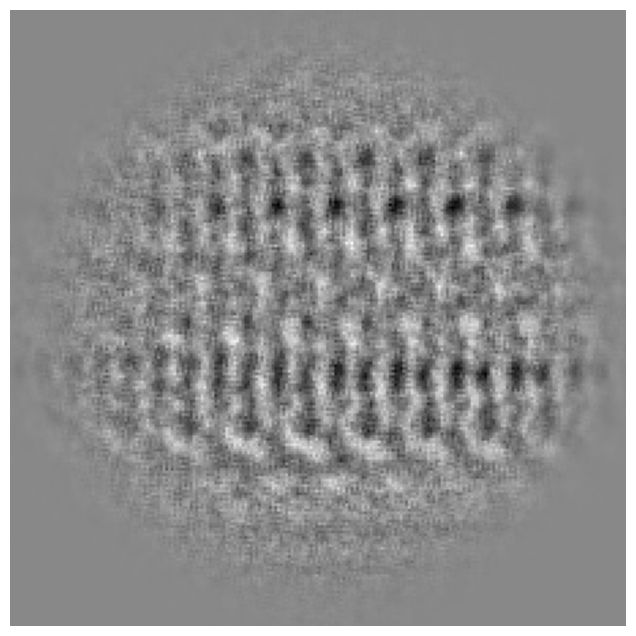}
        \caption{}
        \label{subfig:denoise-after}
    \end{subfigure}
    \caption{The effect of the denoising process on a projection from the EMPIAR-10019 dataset. (a)~The projection before denoising. (b)~The same projection after denoising.}
    \label{fig:Wiener denoising}
\end{figure}

\subsection{Spectral angle recovery}

This stage applies the SHREC algorithm (Algorithm~\ref{alg:angle-extraction-continuous}) to estimate the projection angles directly from the denoised image data.

To improve computational efficiency and reduce the impact of noise, we first reduce the dimensionality of the image data as follows:
\begin{enumerate}
    \item \textbf{Variance-based pixel selection}: We calculate the variance of each pixel across all projections and select those with variance above a certain threshold (typically the top 20-30\%). This focuses the analysis on regions of the image where meaningful structural variations occur.
    
    \item \textbf{PCA-based dimension reduction}: We compute the principal components of the variance-selected pixel data using a subset of the projections (for computational efficiency) and project all images onto the leading components (typically 256 components). This captures most of the variance in the data while significantly reducing dimensionality.
\end{enumerate}

After dimensionality reduction, we calculate the pairwise Euclidean distances between images
$R_{ij} = \|I_i - I_j\|_2$, 
where \(I_i\) and \(I_j\) are the dimensionality-reduced versions of the $i$-th and $j$-th images.
We then construct the weight matrix using a kernel function \(K: \mathbb{R} \rightarrow \mathbb{R}\), which is defined as
\begin{equation}\label{eq:kernel-definition}
K(R_{ij}) = 
\begin{cases}
\exp\left(-\dfrac{R_{ij}^2}{2\varepsilon}\right), & \text{if } I_j \in \mathcal{N}_k(I_i), \\
0, & \text{otherwise},
\end{cases}
\end{equation}
where \( \varepsilon > 0 \) is a bandwidth parameter, and \( \mathcal{N}_k(I_i) \) denotes the set of \( k \) nearest neighbors of image~\( I_i \). The bandwidth parameter~\(\varepsilon\) can be determined  by trial and error, but we use the following heuristic by default. We calculate the distance from each of the dimensionality-reduced images to its nearest neighbor, and set \(\varepsilon\) to be the 95th percentile of this set of distances.

Once we construct the kernel matrix using $k$ of~\eqref{eq:kernel-definition}, we use this matrix to construct the density-invariant graph Laplacian $\tilde{\mathbf{L}}$ and compute the eigenvectors corresponding to the smallest eigenvalues of $\tilde{\mathbf{L}}$ (see steps 3-4 in Algorithm~\ref{alg:angle-extraction-continuous}). Here, $k$ represents the number of nearest neighbors used in the kernel construction. However, for numerical reasons, instead of calculating them directly, we use the following observation. The matrix \(\tilde{\mathbf{D}}^{-1}\tilde{\mathbf{W}}\), described in Section~\ref{subsec:spectral-embedding}, is similar to the symmetric matrix \(\mathbf{S}=\tilde{\mathbf{D}}^{-\frac{1}{2}}\tilde{\mathbf{W}}\tilde{\mathbf{D}}^{-\frac{1}{2}}\), since
\begin{equation*}
    \tilde{\mathbf{D}}^{-1}\tilde{\mathbf{W}} = \tilde{\mathbf{D}}^{-\frac{1}{2}} \underbrace{\tilde{\mathbf{D}}^{-\frac{1}{2}}\tilde{\mathbf{W}}\tilde{\mathbf{D}}^{-\frac{1}{2}}}_{=\mathbf{S}} \tilde{\mathbf{D}}^{\frac{1}{2}} = \tilde{\mathbf{D}}^{-\frac{1}{2}}\mathbf{S}\tilde{\mathbf{D}}^{\frac{1}{2}}.
\end{equation*}
As such, $\tilde{\mathbf{D}}^{-1}\tilde{\mathbf{W}}$ has a complete set of eigenvectors \(\mathbf{v}_i\) with corresponding eigenvalues \(1 = \lambda_0 > \lambda_1 \ge ... \ge \lambda_{N-1}\). The graph Laplacian operator $\tilde{\mathbf{L}}$ has eigenvalues related to~\(\lambda_i\) by \(\mu_i = 1 - \lambda_i\), so that \(0 = \mu_0 < \mu_1 \le \dots \le \mu_{N-1}\). If \(\mathbf{u}_i\) is an eigenvector of \(\mathbf{S}\) with eigenvalue \(\lambda_i\), then \(\mathbf{v}_i = \tilde{\mathbf{D}}^{1/2}\mathbf{u}_i\) is the corresponding eigenvector of \(\tilde{\mathbf{L}}\) with eigenvalue \(\mu_i = 1 - \lambda_i\). Since computing the eigenvectors of a symmetric matrix is generally more stable, we compute the eigenvectors \(\mathbf{u}_1, \mathbf{u}_2\) corresponding to the highest eigenvalues of the matrix \(\mathbf{S}\), and then transform them to obtain the eigenvectors of \(\tilde{\mathbf{L}}\) by computing \(\mathbf{v}_i = \tilde{\mathbf{D}}^{1/2}\mathbf{u}_i\). As explained in Section \ref{subsec:spectral-embedding}, the first eigenvector corresponds to the constant function, and the next two eigenvectors (after the transformation \(\mathbf{v}_i = \tilde{\mathbf{D}}^{1/2}\mathbf{u}_i\)) are used as a spectral embedding for recovering the angles according to Algorithm \ref{alg:angle-extraction-continuous}. The technical details of computing the eigenvectors of $\tilde{\mathbf{L}}$ via matrix symmetrization are implemented as described in steps 3-4 of Algorithm~\ref{alg:angle-extraction-continuous}.

\subsection{3D reconstruction and refinement}

With the estimated projection angles $\{\theta_i\}$ in hand (see Algorithm \ref{alg:angle-extraction-continuous}), the pipeline transitions back to the RELION framework to generate a high-resolution 3D model. This process consists of the following steps.

\subsubsection{Initial model generation}\label{subsec:initial-model}
A critical step in our workflow is generating an initial 3D model using the SHREC-derived angles. To that end, we use a featureless cylindrical reference~\cite{RELION-helical-reconstruction}, and apply single-class 3D classification~\cite{RELION-paper}, with the SHREC-determined angles assigned to each helical segment as initial orientation parameters. Additionally, we set a highly restricted angular search range (typically within 6\degree\, of the SHREC angles). This approach leverages SHREC's angle assignments to create a consistent initial model, allowing for minor refinement of the angles within RELION’s Bayesian framework. 
    
\subsubsection{Helical parameters estimation} 
Once this initial 3D model is obtained, the helical symmetry parameters -- the rise~$\Delta x$ and twist~$\Delta \theta$ -- can be reliably estimated by direct measurement from the map or by using tools that utilize one of the following approaches:
\begin{itemize}
    \item \textbf{Exhaustive search} - This approach, implemented by \texttt{relion\_helix\_toolbox --search}, involves searching within a predefined range of values for the rise and twist. It is particularly effective when prior knowledge limits the search to a reasonably small parameters space.
    \item \textbf{Real-space lattice indexing} - This method directly analyzes the spatial arrangement of subunits in real space to determine the symmetry parameters. Sun et al.~\cite{HI3D} use this approach to implement an algorithm for extracting helical symmetry parameters from 3D volumes. 
\end{itemize}

\subsubsection{3D classification and refinement}
Using the initial 3D model generated in the previous subsection and the estimated helical parameters (rise $\Delta x$ and twist $\Delta \theta$), we apply RELION's 3D classification and 3D refinement procedures. Both procedures are used to enhance the resolution of an initial 3D model based on the projection images through iterative expectation-maximization \cite{RELION-paper, RELION-helical-reconstruction}. The distinction between classification and refinement is purpose-driven:
\begin{itemize}
    \item \textbf{3D classification} primarily addresses structural heterogeneity by dividing particles into multiple discrete classes. When structural variability is suspected, multi-class 3D classification can identify distinct conformational states. The classification can be performed with or without alignment, depending on whether orientation refinement is desired alongside classification.
    
    \item \textbf{3D refinement} focuses on achieving maximum resolution for a single structural state through iterative optimization of orientation and translation parameters.
\end{itemize}

For the SHREC pipeline, we use the estimated projection angles $\{\theta_i\}$ for a homogeneous subset of the dataset to generate an initial 3D model. This initial model, along with the estimated helical parameters ($\Delta x$, $\Delta \theta$), is then used as a starting point for high-resolution refinement with the full dataset. Both classification and refinement jobs can impose the helical structure and perform local searches around the estimated symmetry parameters. This iterative refinement process enhances the accuracy of both the orientation parameters and the symmetry parameters, ultimately improving the resolution of the final 3D reconstruction.

In cases where structural heterogeneity is present, the estimated parameters from SHREC can be used to first create homogeneous subsets through 3D classification, followed by a separate refinement of each subset.

%% Explanation about the RELION refine and 3D classification

\section{Results}\label{sec:results}

We assess our algorithm by applying the reconstruction pipeline on three datasets (EMPIAR-10019, EMPIAR-10022, EMPIAR-10869). The first two datasets were chosen since in He et al.~\cite{RELION-helical-reconstruction}, reconstructing a three-dimensional volume from them required prior knowledge of the rise and twist parameters. In contrast, throughout our experiments, the only initial information used is the specimen's axial symmetry group (i.e., \(C_n\)) and the outer radius of the molecule's tube, which is easily measurable.

To assess the quality of the angle recovery algorithm, we measure the quality of the resulting volume using the following metrics:
\begin{itemize}
    \item Obtained resolution, based on the Fourier Shell Correlation (FSC) curve of volumes reconstructed from two random subsets of projections, with a 0.143 FSC-cutoff~\cite{self-fsc, fsc-threshold-criteria}.
    \item Obtained resolution using the FSC curve of the reconstructed volume and a reference volume, based on 0.5 FSC-cutoff~\cite{fsc-threshold-criteria}. 
    \item A comparison of the estimated helical rise and twist parameters with the established values.
\end{itemize}

\subsection{EMPIAR-10022}

We applied our algorithm to the EMPIAR-10022 dataset \cite{EMPIAR-10021-paper}, which consists of 107 multi-frame micrographs (two additional micrographs were dropped due to technical issues) of the Tobacco Mosaic Virus, as well as box files with picked filaments' coordinates. The EMDB entry corresponding to this dataset is EMD-2842 (with a resolution of \(3.35 \si{\angstrom}\)).

The initial set of picked particles (extracted using the provided box files) consists of 19,054 helical segments, each of size 400×400 pixels. After performing 2D classification with six classes, we retained only the class with the highest estimated resolution (among three classes with equal resolution, we selected the one with the most particles; see Figure~\ref{fig:10022-classes}), which contains 3,023 segments. The particles of this class were used as input to the SHREC algorithm and for reconstructing the initial model. 

\begin{figure}
    \centering
    \includegraphics[width=\linewidth]{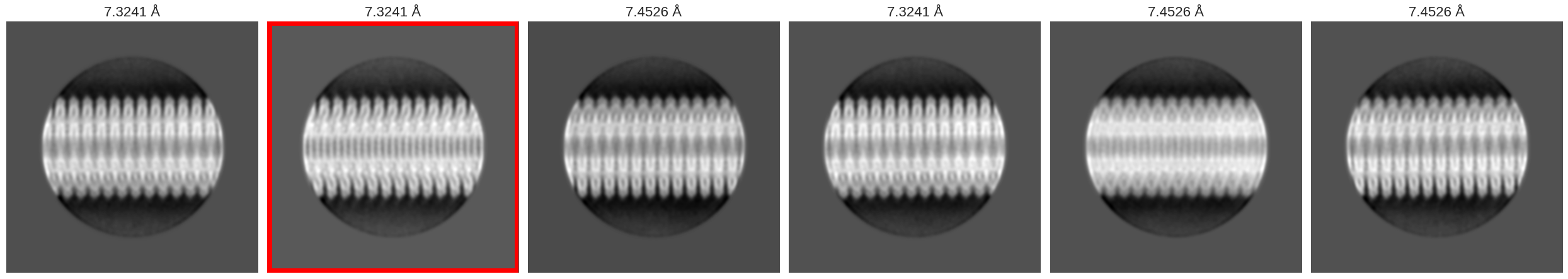}
    \caption{2D class averages with their estimated resolution. The average of the chosen class is marked with a red frame.}
    \label{fig:10022-classes}
\end{figure}

We applied the SHREC algorithm, using \(k = N/2\) nearest neighbors in the kernel function defined in~\eqref{eq:kernel-definition}, where \(N\) is the number of projections. The resulting embedding, shown in Figure~\ref{fig:10022-embedding}, exhibits a visible circular structure, as expected.

\begin{figure}
    \centering
    \includegraphics[width=0.4\textwidth]{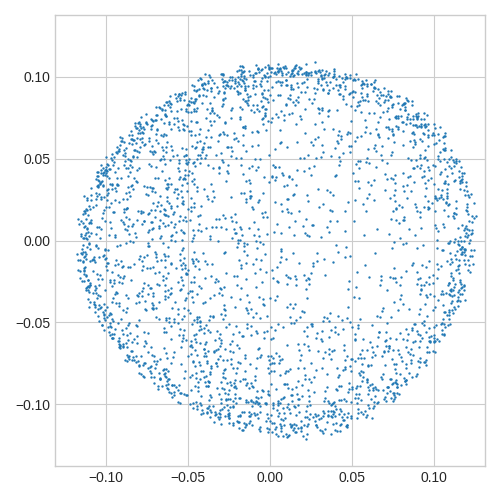}
    \caption{The 2D embedding of the selected 3,023 helical segments from EMPIAR-10022. The circular structure corresponds to the varying projection angles. These embedding coordinates are used to derive initial angle estimates for 3D reconstruction.}
    \label{fig:10022-embedding}
\end{figure}

\newpage

\begin{figure}
    \centering
    \includegraphics[width=0.38\textwidth]{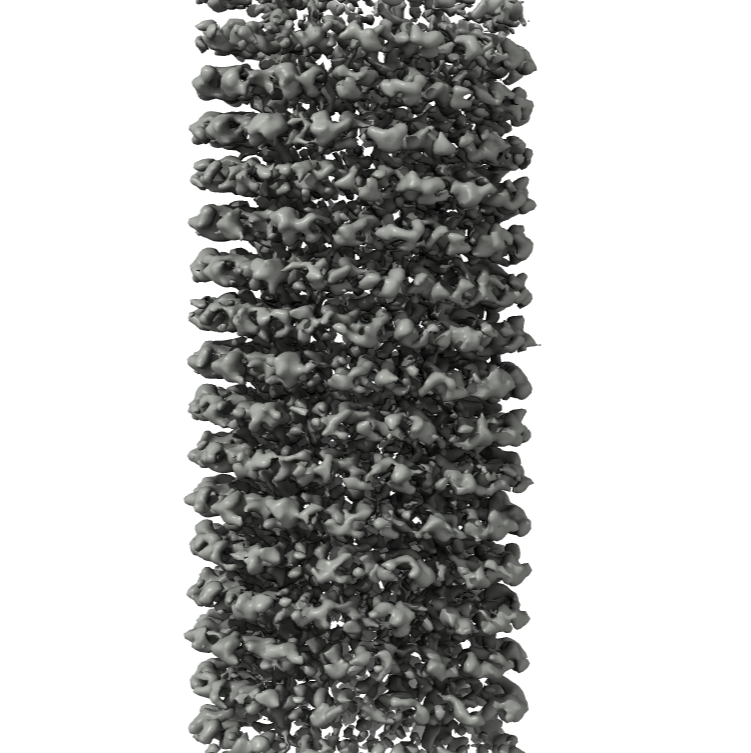}
    \caption{Initial volume reconstructed from the EMPIAR-10022 dataset.}
    \label{fig:10022-initial-volume}
\end{figure}

Using this initial embedding, we reconstructed an initial 3D volume (see Figure \ref{fig:10022-initial-volume}) from a featureless cylinder as described in Section~\ref{subsec:initial-model}. The angles derived from the embedding coordinates served as priors during a single-class 3D classification in RELION, starting from a cylindrical reference, as described in Section~\ref{subsec:initial-model}. This initial model was subsequently used as a reference for 3D refinement employing the entire dataset of 19,054 segments, but this time without using the angle priors (as they were only computed for the smaller subset). This strategy leverages the SHREC angles to generate a good starting model from a clean subset, then utilizes the full dataset for high-resolution refinement.

Based on this refined model, we estimated an approximate range for the helical symmetry parameters. The estimation was done by visual inspection of the volume and performing measurements using standard image processing software (ImageJ). Specifically, we counted the number of subunits per turn and measured the helix's pitch. The rise per subunit was then calculated by dividing the pitch by the number of subunits per turn, while the twist per subunit was determined by dividing 360° by the number of subunits per turn. Following these measurements, we performed 3D refinement incorporating a local symmetry search within RELION. The search range was set for the twist~\(\Delta \theta\) from \(-30^\circ\) to \(-19^\circ\) (reflecting the left-handedness observed in the initial reconstruction) and for the rise~\(\Delta x\) from \(1.27\,\si{\angstrom}\) to \(1.9\,\si{\angstrom}\). During the refinement process, the symmetry parameters converged to \(\Delta \theta = -22.035^\circ\) and \(\Delta x = 1.414\,\si{\angstrom}\). We then performed 3D classification with three classes and selected the class showing the highest resolution features for final refinement with the entire 19,054-image dataset. Finally, we performed masking and post-processing using RELION to obtain the final volume.

Based on the two independent half-maps from the 3D refinement process, the final volume after post-processing (see Figure~\ref{subfig:10022-volume-result}) has a resolution of 3.66Å (with 0.143 FSC cutoff). Since the volume we reconstructed has left-handed symmetry while the published volume (EMD-2842, see Figure \ref{subfig:10022-volume-published}) has right-handed symmetry, we flipped the former before comparing the two. We computed the FSC curve of our result against the published density map EMD-2842~\cite{EMPIAR-10021-paper}, and estimated the resolution with~0.5 FSC-cutoff. This yielded an estimated resolution of 3.9Å, compared to the published map with a resolution of 3.3Å (see Figure~\ref{fig:10022-fsc-comp})

During the final refinement, the helical parameters remained stable, converging to \(\Delta \theta = -22.036^\circ\) and \(\Delta x = 1.412\,\si{\angstrom}\), compared to \(\Delta \theta = 22.03^\circ\) and \(\Delta x = 1.408\,\si{\angstrom}\) of EMD-2842. Notice that both the rise and the magnitude of the twist we extracted match the published values. As explained in Section~\ref{sec:problem-setup}, reconstruction may result in a mirrored volume, and thus, we compare the magnitude of the twist.

\begin{figure}
    \centering
    \begin{subfigure}[b]{0.4\textwidth}
        \centering
        \includegraphics[width=\textwidth]{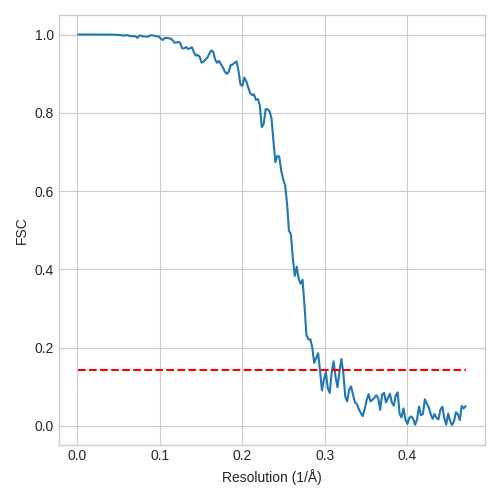}
    \caption{}
    \label{10022-fsc-self}
    \end{subfigure}
    \hfill
    \begin{subfigure}[b]{0.4\textwidth}
        \centering
        \includegraphics[width=\textwidth]{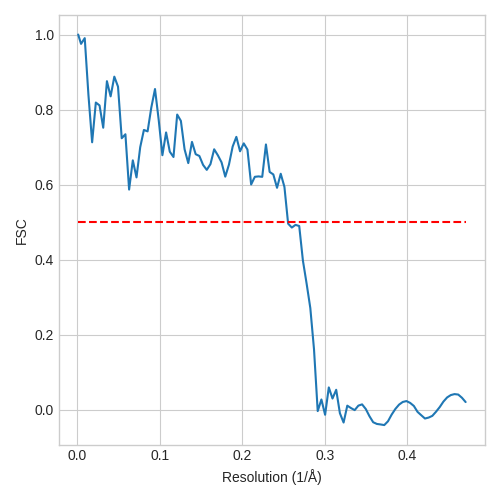}
        \caption{}
    \label{10022-compare-fsc-volume}
    \end{subfigure}
    \caption{(a) Half-maps FSC curve (\SI{3.66}{\angstrom}). (b) FSC curve comparing the reconstructed volume with EMD-2842 (\SI{3.9}{\angstrom}).}
    \label{fig:10022-fsc-comp}
\end{figure}

\begin{figure}
    \centering
    \begin{subfigure}[b]{0.4\textwidth}
        \centering
        \includegraphics[width=\textwidth]{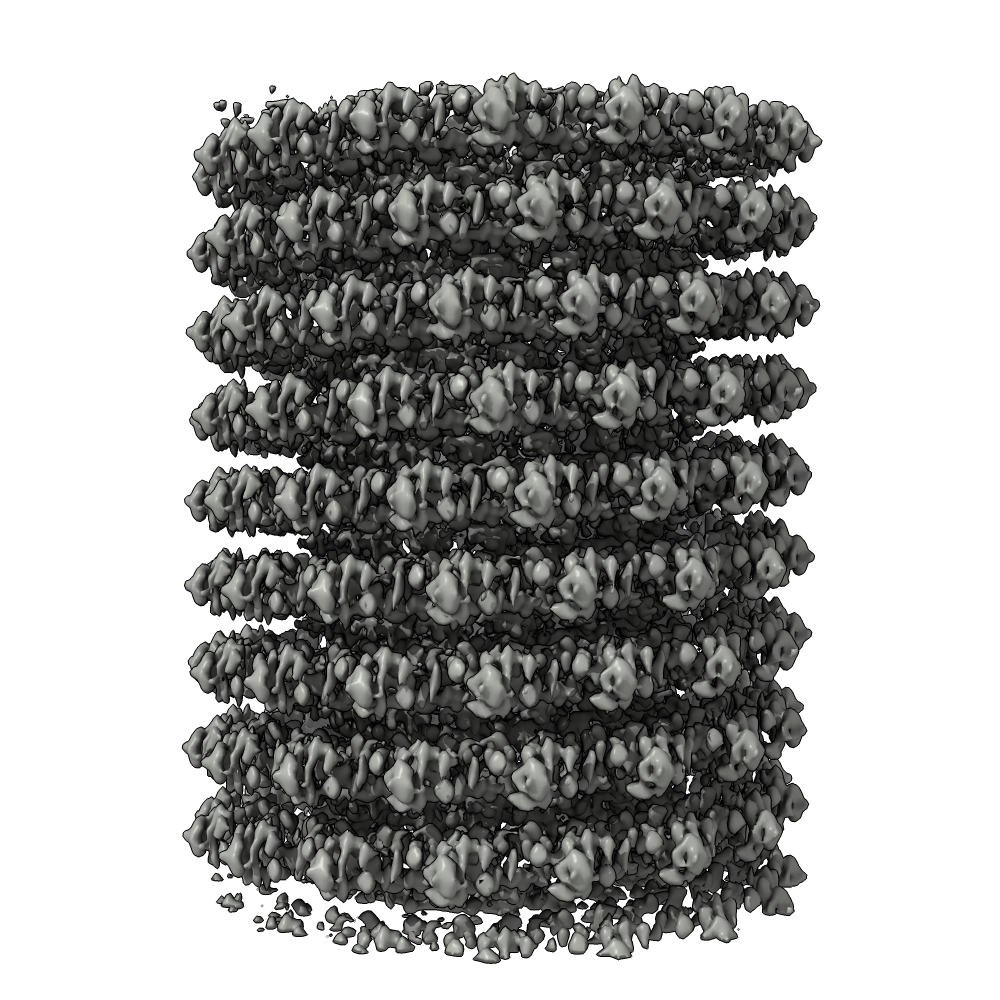}
    \caption{Reconstructed volume}
    \label{subfig:10022-volume-result}
    \end{subfigure}
    \hfill
    \begin{subfigure}[b]{0.4\textwidth}
        \centering
        \includegraphics[width=\textwidth]{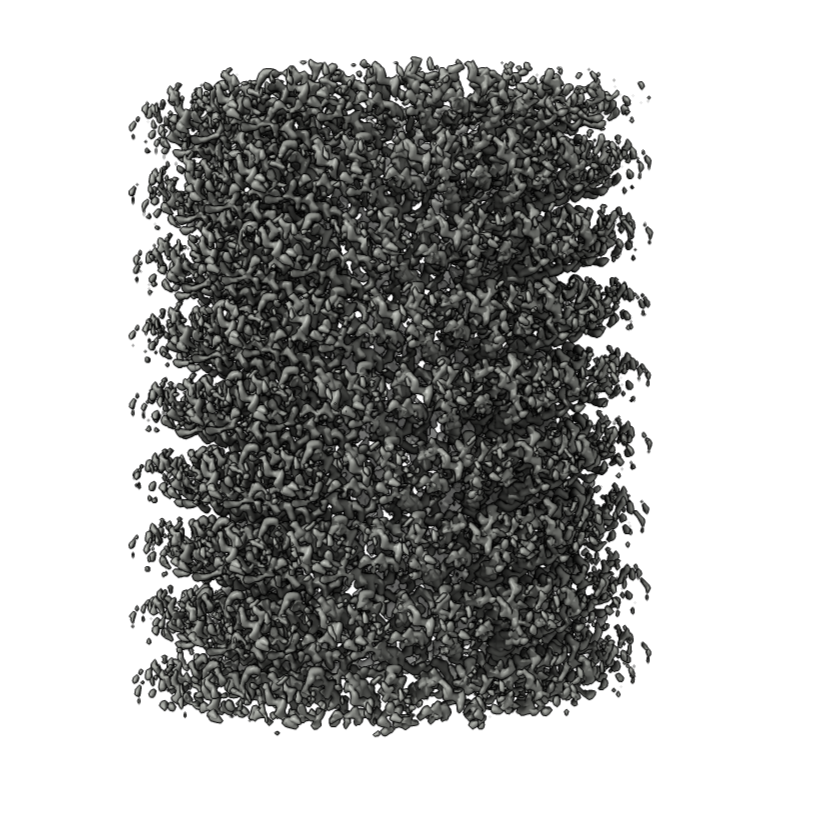}
        \caption{Published volume}
    \label{subfig:10022-volume-published}
    \end{subfigure}
    \label{fig:10022-volume}
    \caption{(a) The reconstructed volume from the EMPIAR-10022 dataset (b) The published volume EMD-2842.}
\end{figure}

\subsection{EMPIAR-10019}

We next applied our algorithm to the EMPIAR-10019 dataset~\cite{EMPIAR-10019-paper}, which consists of multi-frame micrographs of VipA/VipB sheath of the bacterial type IV secretion system. The EMDB entry corresponding to this dataset is EMD-2699 (with a resolution of \(3.5 \si{\angstrom}\)). The initial set of particles (extracted by manually picking the helical tube in RELION) comprised 15,896 helical segments. To facilitate the initial stages of the reconstruction process, we extracted projections while downsampling by a factor of 2 (2 Å/pixel instead of 1 Å/pixel), reducing the computational resources required. After performing 2D classification, we selected the class with the highest estimated resolution, containing 3,423 segments (see Figure~\ref{fig:enter-label}). These segments were then used as input for the SHREC algorithm and to generate the initial 3D model.

\begin{figure}
    \centering
    \includegraphics[width=\linewidth]{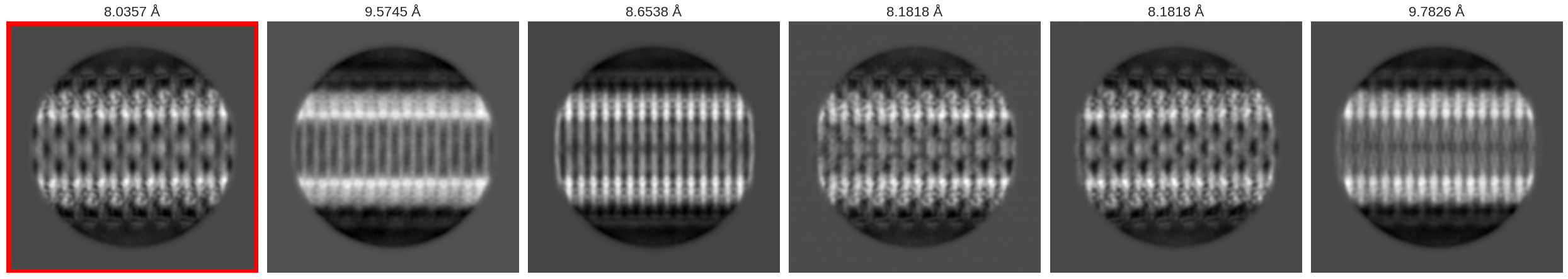}
    \caption{2D class averages with their estimated resolution. The average of the chosen class is marked with a red frame.}
    \label{fig:enter-label}
\end{figure}

\begin{figure}
    \centering
    \includegraphics[width=0.4\textwidth]{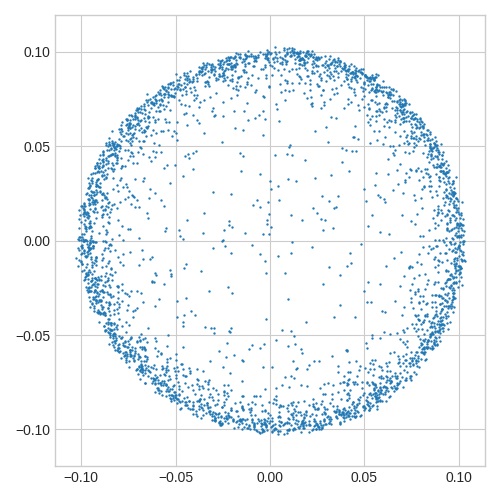}
    \caption{The 2D embedding of the selected 3,423 helical segments from EMPIAR-10019.}
    \label{fig:10019-embedding}
\end{figure}

We applied the SHREC algorithm and obtained an embedding, shown in Figure \ref{fig:10019-embedding}. After correcting for the \(C_6\) symmetry of the molecule, the resulting angles were used to reconstruct an initial 3D volume (see Figure \ref{fig:10019-initial-volume}). As with the EMPIAR-10022 dataset, the angles derived from the embedding coordinates served as priors. This initial 3D volume was then used as a reference for 3D refinement employing the entire dataset of 15,896 segments. Subsequently, we re-extracted the dataset at full resolution and performed another 3D refinement to obtain a final, high-resolution model. The initial volume we obtained was less clear visually than the one obtained from EMPIAR-10022, making it more difficult to measure the initial symmetry parameters directly. Therefore, we used the HI3D tool, which performs real-space helical indexing~\cite{HI3D}, and obtained estimates of the helical symmetry parameters: twist \(\Delta \theta\) of 30.49\degree, and rise \(\Delta x\) of 21.78\si{\angstrom}.

% \newpage
\begin{figure}
    \centering
\includegraphics[width=0.38\textwidth]{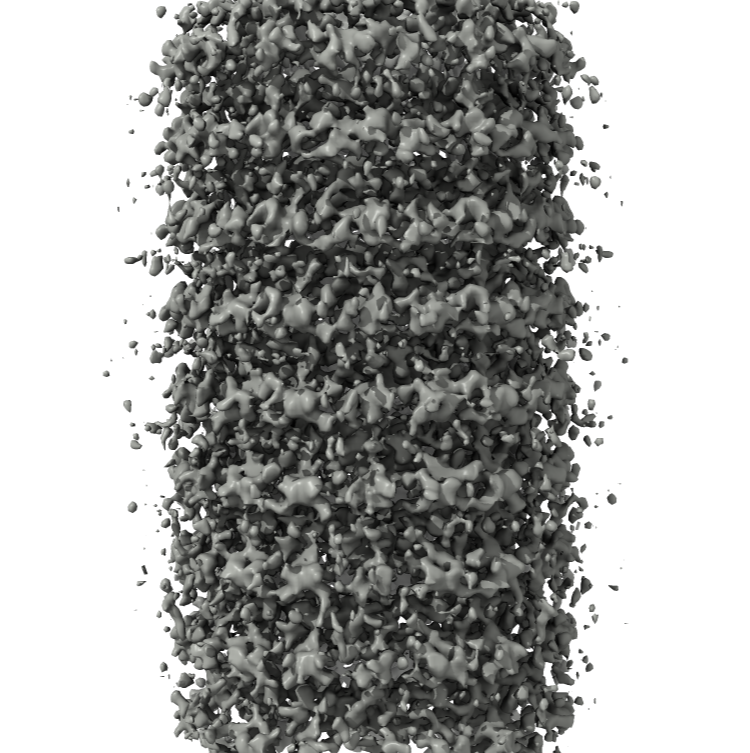}
    \caption{Initial volume reconstructed from the EMPIAR-10019 dataset.}
    \label{fig:10019-initial-volume}
\end{figure}

As in our analysis of the EMPIAR-10022 dataset, we performed 3D refinement incorporating a local symmetry search within RELION. The search range for the twist parameter \(\Delta \theta\) was 28.49\degree\ to 32.49\degree, and for the rise parameter \(\Delta x\), it was 20.78\si{\angstrom} to 22.78\si{\angstrom}. During refinement, the symmetry parameters converged to \(\Delta \theta\) = 29.41\degree\ and \(\Delta x\) = 21.78\si{\angstrom}, compared to the published \(\Delta \theta = 29.4^\circ\) \ and \(\Delta x = 21.78 \si{\angstrom}\). We performed masking and post-processing using RELION to obtain the final volume, which is shown in Figure~\ref{fig:10019-volume}.

Based on the two independent half-maps from the 3D refinement process, the final volume (see Figure \ref{subfig:10019-volume-result}) had a resolution of  3.66Å (with a 0.143 FSC cutoff; see Figure~\ref{subfig:10019-fsc-self}). We computed the FSC curve of our volume against the published density EMDB-2699 (see Figure~\ref{subfig:10019-volume-published}), and estimated the resolution with a 0.5 FSC cutoff. This yielded an estimated resolution of 4.0Å, compared to the published map's resolution of 3.5Å (see Figure~\ref{subfig:10019-compare-fsc-volume}).

\begin{figure}
    \centering
    \begin{subfigure}[b]{0.4\textwidth}
        \centering
        \includegraphics[width=\textwidth]{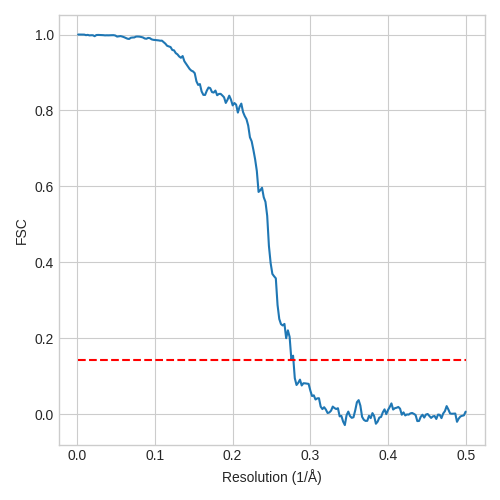}
    \caption{}
    \label{subfig:10019-fsc-self}
    \end{subfigure}
    \hfill
    \begin{subfigure}[b]{0.4\textwidth}
        \centering
        \includegraphics[width=\textwidth]{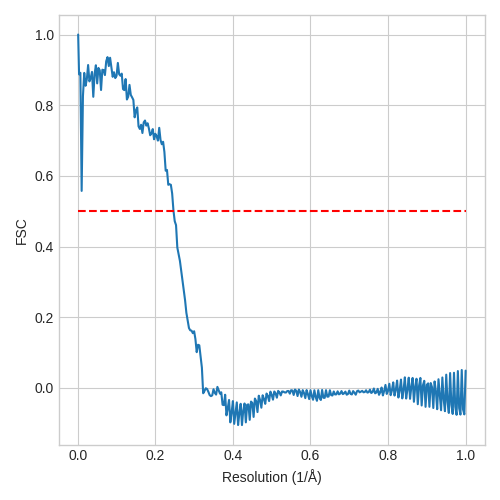}
        \caption{}
    \label{subfig:10019-compare-fsc-volume}
    \end{subfigure}
    \caption{(a) Half-maps FSC curve (3.66\AA) (b) FSC curve comparing the reconstructed volume with EMD-2699 (3.66\AA)}
    \label{fig:10019-fsc-comp}
\end{figure}

\begin{figure}
    \centering
    \begin{subfigure}[b]{0.4\textwidth}
        \centering
        \includegraphics[width=\textwidth]{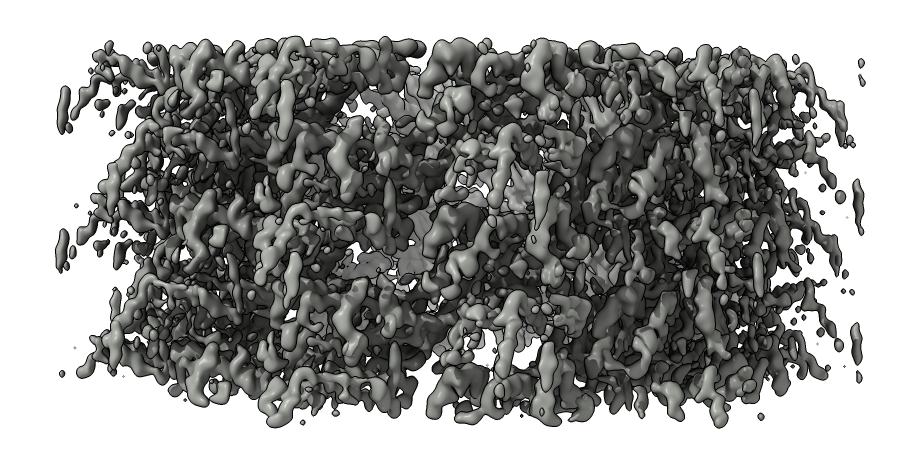}
    \caption{Reconstructed volume}
    \label{subfig:10019-volume-result}
    \end{subfigure}
    \hfill
    \begin{subfigure}[b]{0.4\textwidth}
        \centering
        \includegraphics[width=\textwidth]{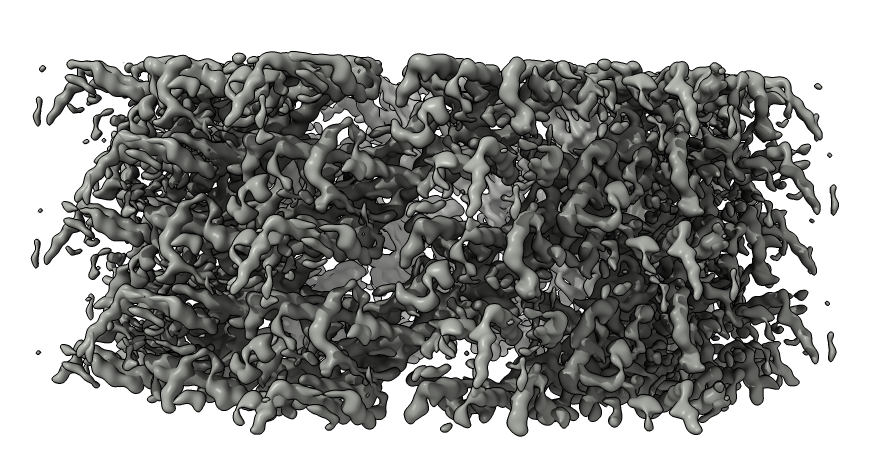}
        \caption{Published volume}
    \label{subfig:10019-volume-published}
    \end{subfigure}
    \caption{(a) The reconstructed volume from the EMPIAR-10019 dataset. (b) The published volume from EMDB (EMD-2699).}
    \label{fig:10019-volume}
\end{figure}

\subsection{EMPIAR-10869}

We next applied our algorithm to the EMPIAR-10869 dataset~\cite{EMPIAR-10869-paper}, which consists of 1,351 multi-frame micrographs of the helical toxin MakA from \textit{Vibrio cholerae}. The EMDB entry corresponding to this dataset is EMD-13185 (with a resolution of \(3.65\,\si{\angstrom}\)). The initial set of particles comprised 32,532 helical segments, extracted with a box size of 450 pixels and rescaled to 300 pixels for processing. After performing 2D classification with 32 classes, we selected two classes containing a total of 21,725 segments. These segments were used as input for the SHREC algorithm and to generate the initial 3D model.

We applied the SHREC algorithm, using \(k = N\) nearest neighbors in the kernel function defined in~\eqref{eq:kernel-definition}, where \(N\) is the number of projections. The resulting embedding, shown in Figure~\ref{fig:10869-embedding}, exhibits a visible circular structure, as expected.

\begin{figure}
    \centering
    \includegraphics[width=0.4\textwidth]{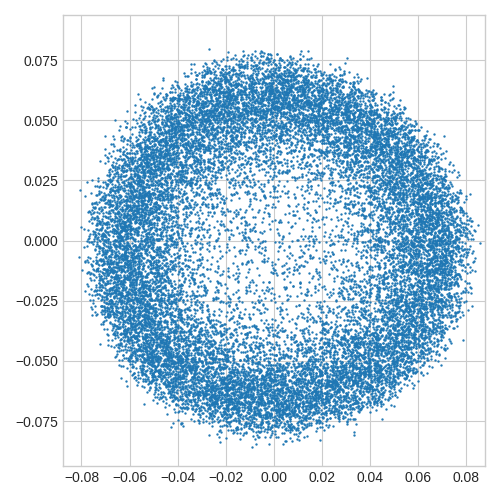}
    \caption{The 2D embedding of the selected 21,725 helical segments from EMPIAR-10869.}
    \label{fig:10869-embedding}
\end{figure}

Since the molecule possesses \(C_1\) symmetry, the embedding angles were used directly (without correction for cyclic symmetry) to reconstruct an initial 3D volume (see Figure~\ref{fig:10869-initial-volume}). As with the previous datasets, the angles derived from the embedding coordinates served as priors during a single-class 3D classification in RELION, starting from a cylindrical reference, as described in Section~\ref{subsec:initial-model}. This initial 3D volume was then used as a reference for 3D refinement employing the entire dataset of 32,532 segments. Subsequently, we re-extracted the dataset at full resolution and performed another 3D refinement to obtain a final, high-resolution model. We used the HI3D tool~\cite{HI3D} to estimate the helical symmetry parameters from the initial volume. Following these estimates, we performed 3D refinement incorporating a local symmetry search within RELION. The search range was set for the twist~\(\Delta\theta\) from \(-49.39^\circ\) to \(-47.39^\circ\) (reflecting the left-handedness observed in the initial reconstruction) and for the rise~\(\Delta x\) from \(5.72\,\si{\angstrom}\) to \(6.12\,\si{\angstrom}\). During refinement, the symmetry parameters converged to \(\Delta\theta = -48.594^\circ\) and \(\Delta x = 5.829\,\si{\angstrom}\). We performed masking and post-processing using RELION to obtain the final volume.

\begin{figure}
    \centering
    \includegraphics[width=0.38\textwidth]{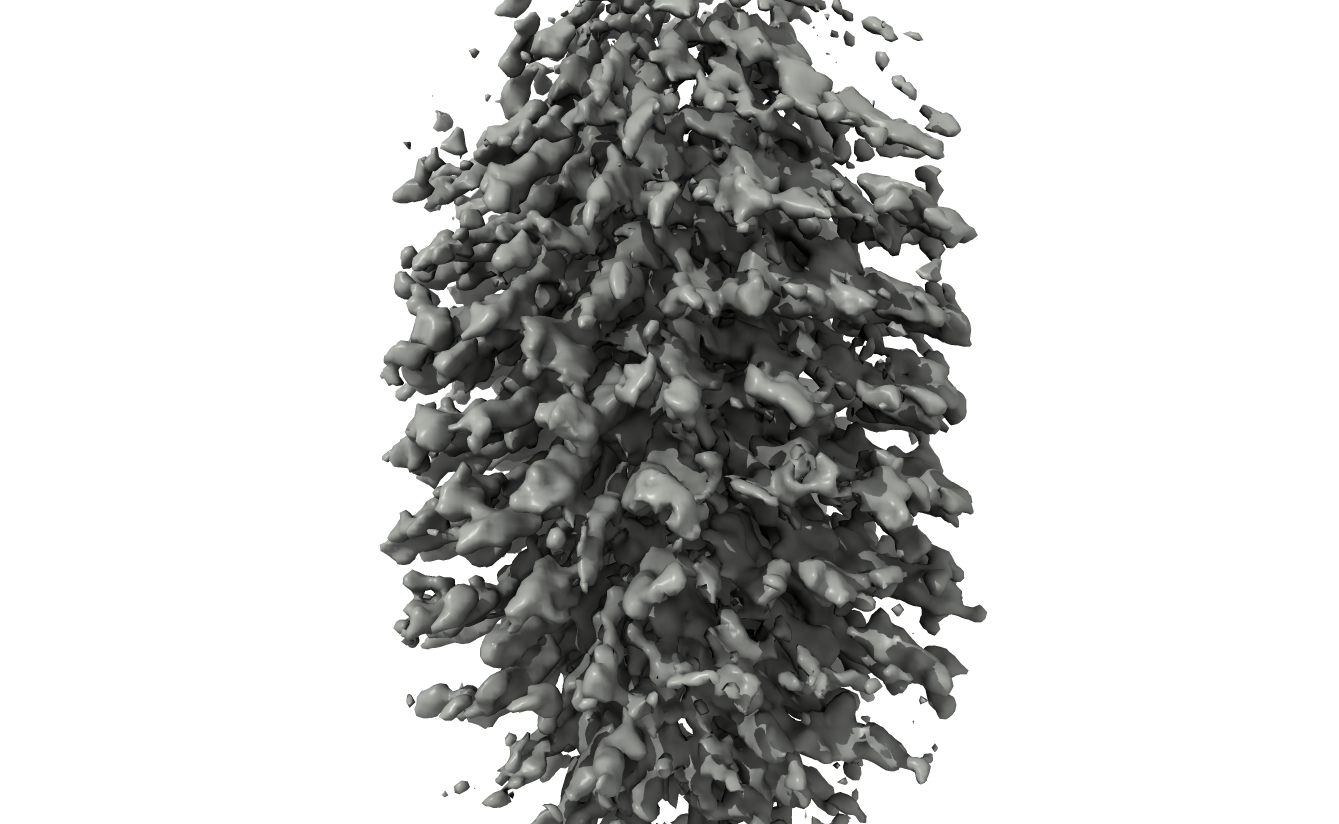}
    \caption{Initial volume reconstructed from the EMPIAR-10869 dataset.}
    \label{fig:10869-initial-volume}
\end{figure}

Based on the two independent half-maps from the 3D refinement process, the final volume after post-processing (see Figure~\ref{subfig:10869-volume-result}) has a resolution of 8.23\AA\ (with 0.143 FSC cutoff). Since the volume we reconstructed has left-handed symmetry while the published volume (EMD-13185, see Figure~\ref{subfig:10869-volume-published}) has right-handed symmetry, we flipped the former before comparing the two. We computed the FSC curve of our result against the published density map EMD-13185~\cite{EMPIAR-10869-paper}, and estimated the resolution with~0.5 FSC-cutoff. This yielded an estimated resolution of 8.0\AA, compared to the published map with a resolution of 3.65\AA\ (see Figure~\ref{fig:10869-fsc-comp}).

During the final refinement, the helical parameters converged to \(\Delta\theta = -48.594^\circ\) and \(\Delta x = 5.829\,\si{\angstrom}\), compared to \(\Delta\theta = 48.590^\circ\) and \(\Delta x = 5.841\,\si{\angstrom}\) of EMD-13185. As in the previous experiments, both the rise and the magnitude of the twist we extracted closely match the published values. As explained in Section~\ref{sec:problem-setup}, reconstruction may result in a mirrored volume, and thus, we compare the magnitude of the twist.

\begin{figure}
    \centering
    \begin{subfigure}[b]{0.4\textwidth}
        \centering
        \includegraphics[width=\textwidth]{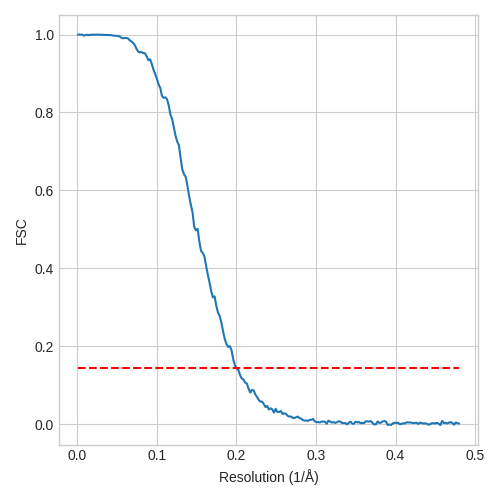}
    \caption{}
    \label{subfig:10869-fsc-self}
    \end{subfigure}
    \hfill
    \begin{subfigure}[b]{0.4\textwidth}
        \centering
        \includegraphics[width=\textwidth]{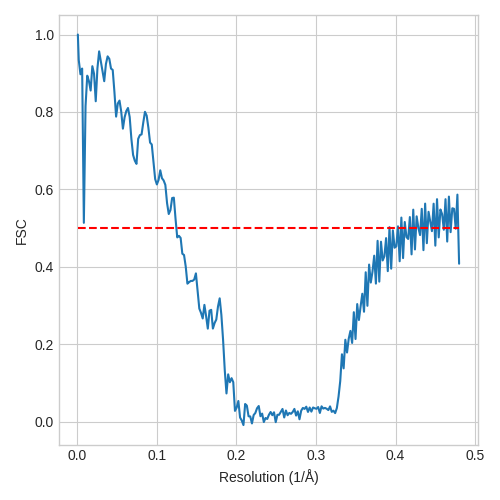}
        \caption{}
    \label{subfig:10869-compare-fsc-volume}
    \end{subfigure}
    \caption{(a) Half-maps FSC curve (\SI{8.23}{\angstrom}). (b) FSC curve comparing the reconstructed volume with EMD-13185 (\SI{8.0}{\angstrom}).}
    \label{fig:10869-fsc-comp}
\end{figure}

\begin{figure}
    \centering
    \begin{subfigure}[b]{0.4\textwidth}
        \centering
        \includegraphics[width=\textwidth]{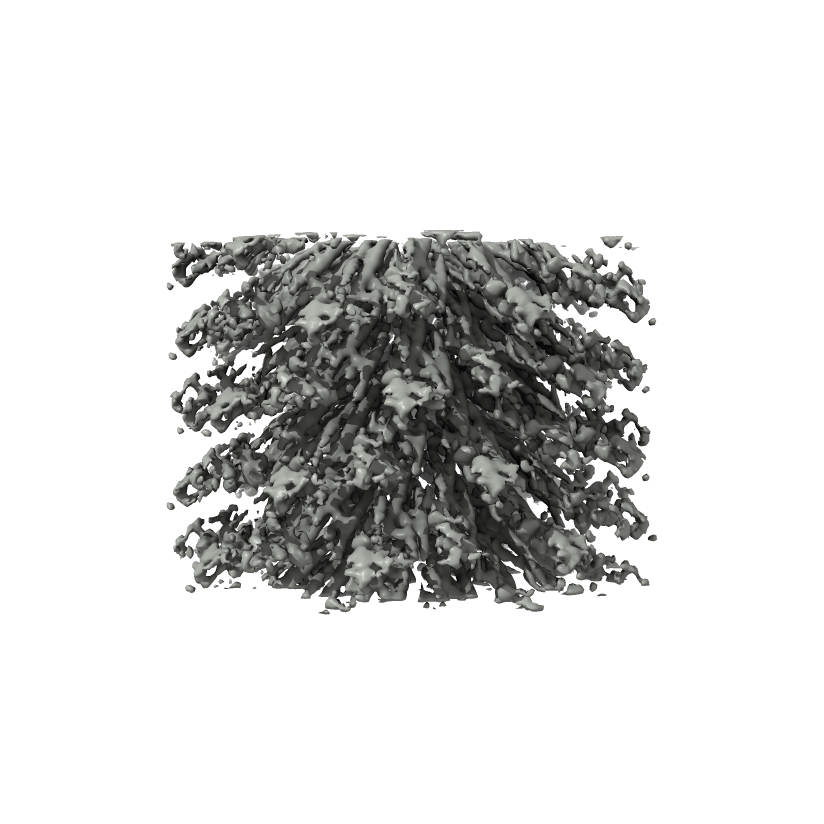}
    \caption{Reconstructed volume}
    \label{subfig:10869-volume-result}
    \end{subfigure}
    \hfill
    \begin{subfigure}[b]{0.4\textwidth}
        \centering
        \includegraphics[width=\textwidth]{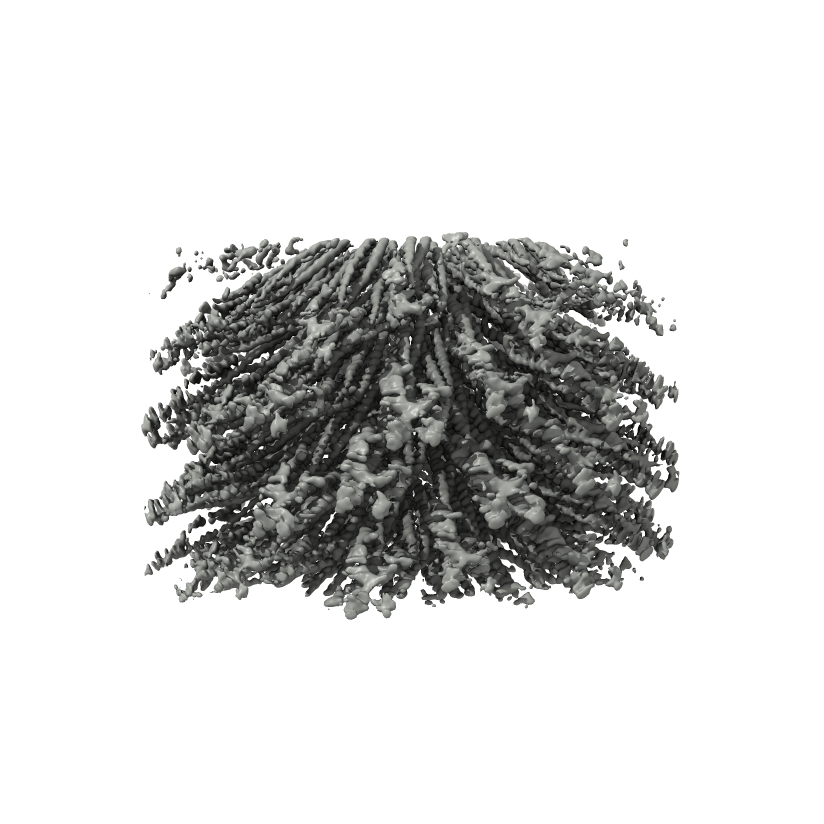}
        \caption{Published volume}
    \label{subfig:10869-volume-published}
    \end{subfigure}
    \caption{(a) The reconstructed volume from the EMPIAR-10869 dataset. (b) The published volume from EMDB (EMD-13185).}
    \label{fig:10869-volume}
\end{figure}

% \subsection{Summary of results}

% \begin{table}
%   \centering
%   \resizebox{\linewidth}{!}{%
%     \begin{tabular}{l c c}
%       \toprule
%       \textbf{Parameter} & \textbf{EMPIAR-10022 / EMD-2842} & \textbf{EMPIAR-10019 / EMD-2699} \\
%       \midrule
%       Dataset & EMPIAR-10022 & EMPIAR-10019 \\
%       Reference Map & EMD-2842 & EMD-2699 \\
%       \addlinespace
%       \multicolumn{3}{l}{\textbf{Helical Parameters (This Work / Published)}} \\
%       Rise (\si{\angstrom}) & \num{1.412} / \num{1.408} & \num{21.78} / \num{21.8} \\
%       Twist (\si{\degree}) & \num{-22.036} / \num{22.03}* & \num{29.41} / \num{29.4} \\ 
%       \addlinespace
%       \multicolumn{3}{l}{\textbf{Resolution (\si{\angstrom}) (This Work / Published)}} \\
%       FSC=0.143 (Half-Maps) & \num{3.66} / \num{3.35} & \num{3.66} / \num{3.5} \\
%       FSC=0.5 (vs. Reference) & \num{3.9} & \num{4.0}\\
%       \bottomrule
%       \multicolumn{3}{l}{\footnotesize *Note: Published twist has opposite sign due to handedness convention.} \\
%     \end{tabular}%
%   }
%   \caption{Summary of reconstruction results for the EMPIAR-10022 and EMPIAR-10019 datasets compared to the published structure EMD-2842 and EMD-2699 (respectively) \cite{EMPIAR-10021-paper, EMPIAR-10019-paper}. XXX There is no reference in the text to this table XXX}.
%   \label{tab:summary_results}
% \end{table}

\section{Conclusions and future work}
In this work, we presented SHREC (Spectral Helical Reconstruction), an algorithm for determining projection angles of helical segments in cryo-EM, without requiring prior knowledge of the helical symmetry parameters. By establishing that projections of helical segments form a one-dimensional manifold diffeomorphic to the circle, we demonstrated that spectral embedding techniques can reliably recover the underlying angular relationships between projections directly from the projection data. 

The experimental validation on publicly available datasets (EMPIAR-10019, EMPIAR-10022, and EMPIAR-10869) demonstrates that a reconstruction pipeline that utilizes the SHREC algorithm enables the successful reconstruction of helical structures that would otherwise require prior knowledge of symmetry parameters. Notably, our method requires only knowledge of the specimen's axial symmetry group ($C_n$) and approximate radius, avoiding the trial-and-error parameter searches or potentially biased initial estimates that characterize many existing workflows. The successful integration with the RELION software suite further demonstrates the practical applicability of our approach within established cryo-EM processing pipelines.

Several promising directions for future research emerge. First, while SHREC successfully generates initial 3D models that enable subsequent symmetry determination, the current pipeline still relies on external tools or manual measurements for extracting helical parameters from these models. Developing robust, automated methods for helical symmetry detection directly from the SHREC-generated initial volumes would create a more streamlined workflow. This could involve extending real-space indexing approaches or developing novel algorithms that leverage the known relationship between the spectral embedding structure and the underlying helical geometry. Such improvements would reduce user intervention and potential errors in symmetry determination, particularly for challenging cases with unusual helical parameters or cases where the helical symmetry is not clearly visible in the reconstruction due to low contrast or noise.

Additionally, the noise sensitivity of spectral methods remains an important consideration for very low SNR data. Future work could explore adaptive denoising strategies that preserve the manifold structure while suppressing noise, potentially leveraging recent advances in deep learning-based denoising. Additionally, investigating alternative similarity metrics beyond $L^2$ distance, such as cross-correlation or mutual information, may improve robustness for datasets with imaging variations.

\section*{Acknowledgments}
Molecular graphics and analyses performed with UCSF ChimeraX~\cite{https://doi.org/10.1002/pro.4792}, developed by the Resource for Biocomputing, Visualization, and Informatics at the University of California, San Francisco, with support from National Institutes of Health R01-GM129325 and the Office of Cyber Infrastructure and Computational Biology, National Institute of Allergy and Infectious Diseases.

\bibliographystyle{plain}
\bibliography{main.bib}

@article {EMPIAR-10021-paper,
	Title = {Seeing tobacco mosaic virus through direct electron detectors},
	Author = {Fromm, Simon A and Bharat, Tanmay A M and Jakobi, Arjen J and Hagen, Wim J H and Sachse, Carsten},
	DOI = {10.1016/j.jsb.2014.12.002},
	Number = {2},
	Volume = {189},
	Month = {2},
	Year = {2015},
	Journal = {Journal of structural biology},
	ISSN = {1047-8477},
	Pages = {87—97},
	URL = {https://europepmc.org/articles/PMC4416312},
}

@article{RELION-paper,
title = {A Bayesian View on Cryo-EM Structure Determination},
journal = {Journal of Molecular Biology},
volume = {415},
number = {2},
pages = {406-418},
year = {2012},
issn = {0022-2836},
doi = {https://doi.org/10.1016/j.jmb.2011.11.010},
url = {https://www.sciencedirect.com/science/article/pii/S0022283611012290},
author = {Sjors H.W. Scheres}
}

@ARTICLE{graph_laplacian_2d_tomography,

  author={Coifman, Ronald R. and Shkolnisky, Yoel and Sigworth, Fred J. and Singer, Amit},

  journal={IEEE Transactions on Image Processing}, 

  title={Graph Laplacian Tomography From Unknown Random Projections}, 

  year={2008},

  volume={17},

  number={10},

  pages={1891-1899},

  doi={10.1109/TIP.2008.2002305}}

@INPROCEEDINGS{random_proj,
  author={Wang, Lingda and Zhao, Zhizhen},
  booktitle={2019 IEEE International Conference on Image Processing (ICIP)}, 
  title={Two-Dimensional Tomography from Noisy Projection Tilt Series Taken at Unknown View Angles with Non-Uniform Distributi}, 
  year={2019},
  volume={},
  number={},
  pages={1242-1246},
  doi={10.1109/ICIP.2019.8803755}}

@article{RELION-helical-reconstruction,
title = {Helical reconstruction in RELION},
journal = {Journal of Structural Biology},
volume = {198},
number = {3},
pages = {163-176},
year = {2017},
issn = {1047-8477},
doi = {https://doi.org/10.1016/j.jsb.2017.02.003},
url = {https://www.sciencedirect.com/science/article/pii/S1047847717300199},
author = {Shaoda He and Sjors H.W. Scheres}
}

@phdthesis{RELION-helical-reconstruction-full, title={Helical Reconstruction in RELION}, url={https://www.repository.cam.ac.uk/handle/1810/284086}, DOI={10.17863/CAM.31456}, publisher={Apollo - University of Cambridge Repository}, author={He, Shaoda}, year={2018}, school={University of Cambridge}}

@Article{HI3D,
author={Sun, Chen
and Gonzalez, Brenda
and Jiang, Wen},
title={Helical Indexing in Real Space},
journal={Scientific Reports},
year={2022},
month={5},
day={17},
volume={12},
number={1},
pages={8162},
issn={2045-2322},
doi={10.1038/s41598-022-11382-7},
url={https://doi.org/10.1038/s41598-022-11382-7}
}

@article{IHRSR,
title = {The iterative helical real space reconstruction method: Surmounting the problems posed by real polymers},
journal = {Journal of Structural Biology},
volume = {157},
number = {1},
pages = {83-94},
year = {2007},
note = {Software tools for macromolecular microscopy},
issn = {1047-8477},
doi = {https://doi.org/10.1016/j.jsb.2006.05.015},
url = {https://www.sciencedirect.com/science/article/pii/S1047847706001687},
author = {Edward H. Egelman}
}

@Article{CryoSPARC-paper,
author={Punjani, Ali
and Rubinstein, John L.
and Fleet, David J.
and Brubaker, Marcus A.},
title={cryoSPARC: algorithms for rapid unsupervised cryo-EM structure determination},
journal={Nature Methods},
year={2017},
month={3},
day={01},
volume={14},
number={3},
pages={290-296},
issn={1548-7105},
doi={10.1038/nmeth.4169},
url={https://doi.org/10.1038/nmeth.4169}
}

@article{diffusion-maps-paper,
title = {Diffusion maps},
journal = {Applied and Computational Harmonic Analysis},
volume = {21},
number = {1},
pages = {5-30},
year = {2006},
note = {Special Issue: Diffusion Maps and Wavelets},
issn = {1063-5203},
doi = {https://doi.org/10.1016/j.acha.2006.04.006},
url = {https://www.sciencedirect.com/science/article/pii/S1063520306000546},
author = {Ronald R. Coifman and Stéphane Lafon}
}

@article{graph-laplacian-paper,
title = {From graph to manifold Laplacian: The convergence rate},
journal = {Applied and Computational Harmonic Analysis},
volume = {21},
number = {1},
pages = {128-134},
year = {2006},
note = {Special Issue: Diffusion Maps and Wavelets},
issn = {1063-5203},
doi = {https://doi.org/10.1016/j.acha.2006.03.004},
url = {https://www.sciencedirect.com/science/article/pii/S1063520306000510},
author = {A. Singer}
}

@Article{DEROSIER1968,
author={De Rosier, D. J.
and Klug, A.},
title={Reconstruction of Three Dimensional Structures from Electron Micrographs},
journal={Nature},
year={1968},
month={01},
day={01},
volume={217},
number={5124},
pages={130-134},
issn={1476-4687},
doi={10.1038/217130a0},
url={https://doi.org/10.1038/217130a0}
}

@ARTICLE{fourier-bessel-reconstruction,
  title    = "{Fourier-Bessel} reconstruction of helical assemblies",
  author   = "Diaz, Ruben and Rice, William J and Stokes, David L",
  journal  = "Methods Enzymol",
  volume   =  482,
  pages    = "131--165",
  year     =  2010,
  address  = "United States",
  language = "en"
}

@article{helical-reconstruction-again,
title = {Helical reconstruction, again},
journal = {Current Opinion in Structural Biology},
volume = {85},
pages = {102788},
year = {2024},
issn = {0959-440X},
doi = {https://doi.org/10.1016/j.sbi.2024.102788},
url = {https://www.sciencedirect.com/science/article/pii/S0959440X24000150},
author = {Edward H. Egelman}
}

@Article{symmetry-trap,
author ="Gambelli, Lavinia and Isupov, Michail N. and Daum, Bertram",
title  ="Escaping the symmetry trap in helical reconstruction",
journal  ="Faraday Discuss.",
year  ="2022",
volume  ="240",
issue  ="0",
pages  ="303-311",
publisher  ="The Royal Society of Chemistry",
doi  ="10.1039/D2FD00051B",
url  ="http://dx.doi.org/10.1039/D2FD00051B"}

@article{cheng2015primer,
  title={A primer to single-particle cryo-electron microscopy},
  author={Cheng, Yifan and Grigorieff, Nikolaus and Penczek, Pawel A and Walz, Thomas},
  journal={Cell},
  volume={161},
  number={3},
  pages={438--449},
  year={2015},
  publisher={Elsevier}
}

@article{lyumkis2019challenges,
  title={Challenges and opportunities in cryo-EM single-particle analysis},
  author={Lyumkis, Dmitry},
  journal={Journal of Biological Chemistry},
  volume={294},
  number={13},
  pages={5181--5197},
  year={2019},
  publisher={American Society for Biochemistry and Molecular Biology}
}

@article{singer2020computational,
  title={Computational methods for single-particle electron cryomicroscopy},
  author={Singer, Amit and Sigworth, Fred J},
  journal={Annual Review of Biomedical Data Science},
  volume={3},
  pages={163--190},
  year={2020},
  publisher={Annual Reviews}
}

@article{wu2020present,
  title={Present and emerging methodologies in cryo-EM single-particle analysis},
  author={Wu, Mengyu and Lander, Gabriel C},
  journal={Biophysical Journal},
  volume={119},
  number={7},
  pages={1281--1289},
  year={2020},
  publisher={Elsevier}
}

@article{fsc-threshold-criteria,
title = {Fourier shell correlation threshold criteria},
journal = {Journal of Structural Biology},
volume = {151},
number = {3},
pages = {250-262},
year = {2005},
issn = {1047-8477},
doi = {https://doi.org/10.1016/j.jsb.2005.05.009},
url = {https://www.sciencedirect.com/science/article/pii/S1047847705001292},
author = {Marin {van Heel} and Michael Schatz}
}

@Article{self-fsc,
author={Verbeke, Eric J.
and Gilles, Marc Aur{\`e}le
and Bendory, Tamir
and Singer, Amit},
title={Self Fourier shell correlation: properties and application to cryo-ET},
journal={Communications Biology},
year={2024},
month={1},
day={16},
volume={7},
number={1},
pages={101},
issn={2399-3642},
doi={10.1038/s42003-023-05724-y},
url={https://doi.org/10.1038/s42003-023-05724-y}
}

@Article{EMPIAR-10019-paper,
author={Kudryashev, Mikhail
and Wang, Ray Yu-Ruei
and Brackmann, Maximilian
and Scherer, Sebastian
and Maier, Timm
and Baker, David
and DiMaio, Frank
and Stahlberg, Henning
and Egelman, Edward H.
and Basler, Marek},
title={Structure of the Type VI Secretion System Contractile Sheath},
journal={Cell},
year={2015},
month={2},
day={26},
publisher={Elsevier},
volume={160},
number={5},
pages={952-962},
issn={0092-8674},
doi={10.1016/j.cell.2015.01.037},
url={https://doi.org/10.1016/j.cell.2015.01.037}
}

@article{cheng2022eigen,
  title={Eigen-convergence of Gaussian kernelized graph Laplacian by manifold heat interpolation},
  author={Cheng, Xiuyuan and Wu, Nan},
  journal={Applied and Computational Harmonic Analysis},
  volume={61},
  pages={132--190},
  year={2022},
  publisher={Elsevier}
}

@book{munkres2000topology,
  title     = {Topology},
  author    = {Munkres, James R.},
  edition   = {2nd},
  year      = {2000},
  publisher = {Prentice Hall},
  address   = {Upper Saddle River, NJ},
  isbn      = {978-0131816299},
  note      = {Theorem 26.6}
}

@article {EMPIAR-10869-paper,
article_type = {journal},
title = {Protein-lipid interaction at low pH induces oligomerization of the MakA cytotoxin from \textit{Vibrio cholerae}},
author = {Nadeem, Aftab and Berg, Alexandra and Pace, Hudson and Alam, Athar and Toh, Eric and Ådén, Jörgen and Zlatkov, Nikola and Myint, Si Lhyam and Persson, Karina and Gröbner, Gerhard and Sjöstedt, Anders and Bally, Marta and Barandun, Jonas and Uhlin, Bernt Eric and Wai, Sun Nyunt},
editor = {Egelman, Edward H and Aldrich, Richard W and Egelman, Edward H and Frost, Adam and Garcia-Saez, Ana J},
volume = 11,
year = 2022,
month = {2},
pub_date = {2022-02-08},
pages = {e73439},
citation = {eLife 2022;11:e73439},
doi = {10.7554/eLife.73439},
url = {https://doi.org/10.7554/eLife.73439},
journal = {eLife},
issn = {2050-084X},
publisher = {eLife Sciences Publications, Ltd},
}

@article{https://doi.org/10.1002/pro.4792,
	author = {Meng, Elaine C. and Goddard, Thomas D. and Pettersen, Eric F. and Couch, Greg S. and Pearson, Zach J. and Morris, John H. and Ferrin, Thomas E.},
	title = {UCSF ChimeraX: Tools for structure building and analysis},
	journal = {Protein Science},
	volume = {32},
	number = {11},
	pages = {e4792},
	doi = {https://doi.org/10.1002/pro.4792},
	url = {https://onlinelibrary.wiley.com/doi/abs/10.1002/pro.4792},
	eprint = {https://onlinelibrary.wiley.com/doi/pdf/10.1002/pro.4792},
	year = {2023}
}

\appendix
% \addcontentsline{toc}{chapter}{Appendices}

\begin{appendices}

\section{Additional proofs}\label{app:additional-proofs}

\begin{proof}[Proof of Lemma~\ref{lemma:reflection-invariance}]
We begin by expanding the right-hand side of~\eqref{eq:mirror-identity},
\[
P_{MRM} \psi^M(x, y) = \int_{-\infty}^{\infty} \psi^M\left(MRM(x, y, z)\right) \, \mathrm{d}z.
\]
By substituting~\eqref{eq:mirror-function}, this becomes
\[
P_{MRM} \psi^M(x, y) = \int_{-\infty}^{\infty} \psi\left(M \left(MRM(x, y, z)\right)\right) \, \mathrm{d}z.
\]
Next, we simplify the transformation inside the argument of \(\psi\). Observe that
\[
M M R M = I R M = R M,
\]
Thus,
\[
P_{MRM} \psi^M(x, y) = \int_{-\infty}^{\infty} \psi\left(R M(x, y, z)\right) \, \mathrm{d}z.
\]
Substituting the definition of the mirror map \(M(x, y, z) = (x, y, -z)\), we have
\[
P_{MRM} \psi^M(x, y) = \int_{-\infty}^{\infty} \psi\left(R(x, y, -z)\right) \, \mathrm{d}z.
\]
Now, substituting \(u = -z\), with \(\mathrm{d}u = -\mathrm{d}z\), results in
\[
\int_{-\infty}^{\infty} \psi\left(R(x, y, -z)\right) \, \mathrm{d}z = \int_{\infty}^{-\infty} \psi(R(x, y, u)) (-\mathrm{d}u) = \int_{-\infty}^{\infty} \psi(R(x, y, u)) \, \mathrm{d}u.
\]
This is exactly the definition of \(P_R\) from~\eqref{eq:ideal-projection-model}, hence,
\[
P_{MRM} \psi^M(x, y) = P_R \psi(x, y),
\]
as required.
\end{proof}

\vspace{1cm}

\begin{proof}[Proof of Lemma \ref{lemma:projection-differentiable}]

We start by showing that  $\Phi(\theta)\in L^2(\left[-\tfrac{B}{2},\tfrac{B}{2}\right]^2)$.
Notice that a cylinder centered around the \(x\)-axis is invariant under a rotation around the \(x\)-axis. This means that for any \(\theta \in S^1\)
\[
\mathrm{supp}(f\circ R_x(\theta)) \subseteq \left\{(x,y,z)\in \mathbb{R}^3 : \sqrt{y^2 + z^2} \le \frac{B}{2} \wedge |x| < \frac{B}{2} \right\}.
\]
% Since $f$ has compact support, there exists $r>0$ such that 
% \[
% \mathrm{supp}\,f\;\subseteq\;B_r(0)\subset\mathbb{R}^3.
% \]
Hence 
\[
\mathrm{supp}\,\Phi(\theta)\;\subseteq\;\left[-\tfrac{B}{2},\tfrac{B}{2}\right]^2
\]
and since $f \in C^1$ is continuously differentiable and has compact support, it is bounded. Therefore,
\[
\begin{aligned}
\|\Phi(\theta)\|_{L^{2}\left(\left[-\tfrac{B}{2}, \tfrac{B}{2}\right]^2\right)}^{2}
&= \int_{\left[-\tfrac{B}{2}, \tfrac{B}{2}\right]^2}
\left|\int_{\left[-\tfrac{B}{2}, \tfrac{B}{2}\right]} 
f\bigl(R_{x}(\theta)(x,y,z)^{T}\bigr) \, dz \right|^{2} dx\, dy \\
&\leq \int_{\left[-\tfrac{B}{2}, \tfrac{B}{2}\right]^3} 
\left| f\bigl(R_{x}(\theta)(x,y,z)^{T}\bigr) \right|^{2} dx\, dy\, dz \\
&\leq B^3 \|f\|_{L^{\infty}}.
\end{aligned}
\]
Thus, $\Phi(\theta)\in L^2(\left[-\tfrac{B}{2},\tfrac{B}{2}\right]^2)$.

\medskip

Next, we show that the derivative $\Phi'(\theta)$ exists. Since \(f \;\in\; C^1\bigl(\bigl[-\tfrac{B}{2},\tfrac{B}{2}\bigr]^3\bigr)\), we have that $\nabla f$ is continuous and therefore bounded. We set
\begin{equation}\label{g-definition}
g(u)\;=\;\|\nabla f\|_{L^\infty}\,\|u\|\;\chi_{\bigl(\bigl[-\tfrac{B}{2},\tfrac{B}{2}\bigr]^3\bigr)}(u),
\quad u=(x,y,z)^T.
\end{equation}
Since \(g\) is bounded on \(\bigl[-\tfrac{B}{2},\tfrac{B}{2}\bigr]^3\), we have that $g\in L^2(\left[-\tfrac{B}{2}, \tfrac{B}{2}\right]^3)$, and by the mean‐value theorem
\[
\frac{f(R_x(\theta+h)u)-f(R_x(\theta)u)}{h}
=\nabla f\bigl(R_x(\theta+\xi_h)u\bigr)\cdot\bigl(\dot R_x(\theta+\xi_h)u\bigr)
\quad(\xi_h\in(0,h)).
\]
Since \(\|\nabla f\bigl(R_x(\theta+\xi_h)u\bigr)\| \le \|\nabla f\|_{L^\infty}\) and \(\|\dot R_x(\theta+\xi_h)u\| \le \|\dot R_x(\theta+\xi_h)\|\,\|u\|\), and since \(\|\dot R_x(\vartheta)\| = 1\) (\(\|\cdot\|\) being the operator norm), for all \(\vartheta\) we have
\begin{equation}\label{eq:bounf by g}
\left |\frac{f(R_x(\theta+h)u)-f(R_x(\theta)u)}{h}\right |\;\le\;g(u).
\end{equation}
Now, we have that
\begin{align*}
\lim_{h\rightarrow 0}{\frac{\Phi(\theta+h)-\Phi(\theta)}{h}} &=\lim_{h\rightarrow 0}\frac{\int_{\mathbb{R}}f(R_x(\theta+h)u)-\int_{\mathbb{R}}f(R_x(\theta)u)}{h}dz\\
&=\lim_{h\rightarrow 0}\int_{\mathbb{R}}\frac{f(R_x(\theta+h)u)-f(R_x(\theta)u)}{h}dz.
\end{align*}
By the Dominated Convergence Theorem, since the integrand $\frac{f(R_x(\theta+h)u)-f(R_x(\theta)u)}{h}$ converges pointwise to $\nabla f(R_x(\theta)u) \cdot (\dot R_x(\theta)u)$ as $h \to 0$, and is uniformly bounded by $g(u)$ which is integrable with respect to $z$ for fixed $(x,y)$ (since $g \in L^2(\left[-\tfrac{B}{2}, \tfrac{B}{2}\right]^3)$ implies $\int_{-B/2}^{B/2} |g(x,y,z)|^2 dz < \infty$ for almost every $(x,y)$), we have that
\begin{equation*}
\lim_{h\rightarrow 0}{\frac{\Phi(\theta+h)-\Phi(\theta)}{h}}  = \int_{\mathbb{R}}\lim_{h\rightarrow 0}\frac{f(R_x(\theta+h)u)-f(R_x(\theta)u)}{h}dz.
\end{equation*}
with convergence in $L^2(\left[-\tfrac{B}{2}, \tfrac{B}{2}\right]^2)$. This means that the derivative $\Phi'(\theta)$ exists and that
\[
\Phi'(\theta)(x,y)
=\int_{-\infty}^{\infty}
\nabla f\bigl(R_x(\theta)(x,y,z)^T\bigr)\,\cdot\,\bigl(\dot R_x(\theta)(x,y,z)^T\bigr)\,dz.
\]

Finally, we show that $\Phi'(\theta)$ is continuous. Take $\theta_n\to\theta$ and define 
\[
H_{n}(u)=\bigl[\nabla f(R_{x}(\theta_{n})u)\cdot\bigl(\dot{R}_{x}(\theta_{n})u\bigr)-\nabla f(R_{x}(\theta)u)\cdot\bigl(\dot{R}_{x}(\theta)u\bigr)\bigr].
\]
From the continuity of $\nabla f$ on $\bigl[-\tfrac{B}{2},\tfrac{B}{2}\bigr]^3$ and the continuity of $R_x(\cdot)$ and $\dot R_x(\cdot)$ on $S^1$, we conclude that $H_n(\mathbf{u})\to0$ pointwise. Moreover, by the triangle inequality and the boundedness of $\nabla f$ and $\dot R_x$, we have
\begin{equation*}
\begin{aligned}
|H_n(u)|
&\le \big\|\nabla f(R_x(\theta_n)u)\big\|\,\big\|\dot R_x(\theta_n)u\big\|
     + \big\|\nabla f(R_x(\theta)u)\big\|\,\big\|\dot R_x(\theta)u\big\|\\
&\le \|\nabla f\|_{L^\infty}\,\|\dot R_x(\theta_n)\|\,\|u\|
   + \|\nabla f\|_{L^\infty}\,\|\dot R_x(\theta)\|\,\|u\|\\
&\le 2\,\|\nabla f\|_{L^\infty}\,\Big(\sup_{\vartheta\in S^1}\|\dot R_x(\vartheta)\|\Big)\,\|u\| \\
&= \|\nabla f\|_{L^\infty}\,\|u\|,
  \;=:\;2\,g(u),
\end{aligned}
\end{equation*}
where $g$ was defined in \eqref{g-definition}.

For any $(x,y) \in \left[-\tfrac{B}{2}, \tfrac{B}{2}\right]^2$, we have
\begin{align*}
\Phi'(\theta_n)(x,y) - \Phi'(\theta)(x,y) &= \int_{-B/2}^{B/2} H_n(x,y,z) \, dz.
\end{align*}
Since $H_n(u) \to 0$ pointwise with uniform bound $|H_n(u)| \le 2g(u)$ where $g \in L^2(\left[-\tfrac{B}{2}, \tfrac{B}{2}\right]^3)$, we can apply the Dominated Convergence Theorem as follows. For each fixed $(x,y) \in \left[-\tfrac{B}{2}, \tfrac{B}{2}\right]^2$, we have $H_n(x,y,z) \to 0$ pointwise in $z$, and $|H_n(x,y,z)| \le 2g(x,y,z)$ where $\int_{-B/2}^{B/2} |g(x,y,z)| \, dz < \infty$ for every $(x,y)$. Therefore, by the Dominated Convergence Theorem,
\begin{equation}\label{eq:Hn-convergence}
\int_{-B/2}^{B/2} H_n(x,y,z) \, dz \to 0 \quad \text{for a.e. } (x,y).
\end{equation}
Moreover, by Cauchy-Schwarz
\begin{equation}\label{eq:DCT-condition}
\left|\int_{-B/2}^{B/2} H_n(x,y,z) \, dz\right|^2 \le B^2 \int_{-B/2}^{B/2} |H_n(x,y,z)|^2 \, dz \le 4B^2 \int_{-B/2}^{B/2} |g(x,y,z)|^2 \, dz.
\end{equation}
Since the right-hand side of~\eqref{eq:DCT-condition} is integrable over $(x,y) \in \left[-\tfrac{B}{2}, \tfrac{B}{2}\right]^2$, we can apply the Dominated Convergence Theorem again, obtaining 
\begin{align*}
\lim_{n\to\infty} \|\Phi'(\theta_n)-\Phi'(\theta)\|_{L^2(\left[-\tfrac{B}{2}, \tfrac{B}{2}\right]^2)}^2 &= \lim_{n\to\infty} \int_{\left[-\tfrac{B}{2}, \tfrac{B}{2}\right]^2} \left|\int_{-B/2}^{B/2} H_n(x,y,z) \, dz\right|^2 dx\,dy \\
&=  \int_{\left[-\tfrac{B}{2}, \tfrac{B}{2}\right]^2} \lim_{n\to\infty} \left|\int_{-B/2}^{B/2} H_n(x,y,z) \, dz\right|^2 dx\,dy = 0,
\end{align*}
where the last equality follows since the limit of the integrand goes to zero due to~\eqref{eq:Hn-convergence}. Therefore, $\Phi'\colon S^1\to L^2(\left[-\tfrac{B}{2}, \tfrac{B}{2}\right]^2)$ is continuous. Therefore $\Phi\in C^1(S^1,L^2(\left[-\tfrac{B}{2}, \tfrac{B}{2}\right]^2))$, meaning $\Phi$ is a continuously differentiable map from the circle $S^1$ to the $L^2$ space of square-integrable functions on $\left[-\tfrac{B}{2}, \tfrac{B}{2}\right]^2$.
\end{proof}

\vspace{1cm}

\begin{proof}[Proof of Theorem \ref{theorem:manifold-structure}]
From Lemma \ref{lemma:projection-differentiable}, the map $\Phi:S^1\to L^2(\left[-\tfrac B2, \tfrac B2\right]^2)$ is continuously differentiable, with the derivative defined in \eqref{eq:Phi‐derivative}. Note that $\Phi(\theta) \in L^2(\left[-\tfrac{B}{2}, \tfrac{B}{2}\right]^2)$ follows from the continuity of $\Phi$ established in Lemma~\ref{lemma:projection-differentiable}. Since $f$ is defined on the compact domain $\left[-\tfrac{B}{2}, \tfrac{B}{2}\right]^3$, the condition $f \in C^1$ automatically implies $f \in L^2$ on this domain, so the condition $f \in C^1 \cap L^2$ is equivalent to $f \in C^1$. Assumption \textit{(ii)} of Theorem~\ref{theorem:manifold-structure} states that \(\Phi'(\theta) \neq 0\) for every \(\theta\), so the differential,
\[
d_\theta\Phi:T_\theta S^1 \rightarrow T_{\Phi(\theta)}L^2(\left[-\tfrac B2, \tfrac B2\right]^2),
\]
is injective at every point, since its kernel is trivial (as $\Phi'(\theta) \neq 0$ implies the derivative maps non-zero tangent vectors to non-zero vectors). Hence \(\Phi\) is an immersion. While the differential is everywhere injective, we still need to verify that \(\Phi\) itself is injective (up to the $C_n$ symmetry) to establish that it is an embedding.
To show that \(\Phi\) is locally an embedding (i.e., that every point has a neighborhood where \(\Phi\) restricted to that neighborhood is an embedding), we prove that \(\Phi\) is locally injective. Let \(\theta_0 \in S^1\), and let \(0 < h < \frac{2\pi}{n}\). From assumption \textit{(i)} of Theorem~\ref{theorem:manifold-structure}, the restriction of \(\Phi\) to the interval \(I_\theta = [\theta - h, \theta+h]\) is injective. Hence \(\Phi\) is locally injective and a local embedding.

Define the quotient map 
\[
q: S^1 \rightarrow S^1/C_n
\]
such that
\[
q(\theta) = [\theta] = \Bigl\{\theta + \frac{2\pi}{n}k : k=0,\dots,n-1\Bigr\}.
\]
Assumption \textit{(i)} states that two angles have the same projection if and only if they differ by a multiple of \(\frac{2\pi}{n}\), which is precisely the periodicity imposed by the \(C_n\) symmetry. Note that the quotient $S^1/C_n$ inherits a smooth $1$--manifold structure and is diffeomorphic to $S^1$ via the covering map $S^1 \to S^1/C_n$. 

Define \(\tilde{q}: S^1/C_n \rightarrow S^1\) such that
\[
    \tilde{q}([\theta]) = n\theta \,(\mathrm{mod}\,2\pi).
\]
The function \(\tilde{q}\) is well-defined because if \([\theta_1] = [\theta_2]\) in \(S^1/C_n\), then \(\theta_2 = \theta_1 + \frac{2\pi}{n}k\) for some \(k \in \{0,\ldots,n-1\}\), and thus \(n\theta_2 = n\theta_1 + 2\pi k \equiv n\theta_1 \pmod{2\pi}\). We define
\[
\tilde{\Phi}:S^1/C_n \rightarrow L^2(\left[-\tfrac B2, \tfrac B2\right]^2),\quad \tilde{\Phi}([\theta]) = \Phi(\theta) \text{ for } \theta \in [0,2\pi).
\]
The map $\tilde{\Phi}$ is well-defined because $\Phi$ factors through the quotient map $q$, i.e., $\Phi = \tilde{\Phi} \circ q$. Since $\Phi$ is~$C^1$ (from Lemma~\ref{lemma:projection-differentiable}) and has period $2\pi/n$, the induced map $\tilde{\Phi}: S^1/C_n \to L^2$ inherits smoothness from $\Phi$. The map $\tilde{q}$ provides an explicit diffeomorphism between $S^1/C_n$ and $S^1$, confirming that $S^1/C_n$ has the structure of a circle.
By the $C_n$ symmetry of $f$,
\[
\Phi\bigl(\theta + 2\pi/n\bigr)
= P_{R_x(\theta+2\pi/n)}f
= P_{R_x(\theta)}f
= \Phi(\theta).
\]
Thus $\Phi$ is $\tfrac{2\pi}{n}$-periodic.  On the other hand, assumption \textit{(i)} states there is no $\varphi\in[0,2\pi/n)$
for which $\Phi(\theta+\varphi)=\Phi(\theta)$. Hence, the minimal positive period of $\Phi$ is exactly $2\pi/n$ and~\(\tilde{\Phi}\) is injective.

Since $S^1 / C_n$ is compact and $L^2([-B/2,B/2]^2)$ is Hausdorff, a continuous injective map from $S^1 / C_n$ into $L^2$ is a homeomorphism onto its image (this follows from the general theorem that continuous bijections from compact spaces to Hausdorff spaces are homeomorphisms \cite{munkres2000topology}). Combined with the differentiability established in Lemma~\ref{lemma:projection-differentiable} and the non-vanishing derivative condition, this gives a smooth embedding.

Finally, since the domain $S^1/C_n$ is compact, the image $\tilde{\Phi}(S^1/C_n)$ is compact in the Hausdorff space $L^2([-B/2,B/2]^2)$, hence closed. Therefore, the image $\tilde{\Phi}(S^1/C_n)$ is a closed, one-dimensional~$C^1$ submanifold diffeomorphic to $S^1$.
\end{proof}

\vspace{1cm}

\begin{proof}[Proof of Lemma \ref{lemma:discrete-symmetry-equivalence}]
Set \(k = \lfloor\frac{t}{\Delta x} + \frac{1}{2}\rfloor\), where \(\lfloor \cdot \rfloor\) denotes the floor function. Define \(\theta_t = -k\Delta\theta\) and define \(s_t = t - k\Delta x\). By definition, \(s_t \in \left[-\frac{\rise}{2}, \frac{\rise}{2}\right)\). From the discrete helix property in Definition~\ref{def:helix} (part 2), 
\[
\psi(\mathbf{r}) = \psi(R_x(-k\Delta\theta)\mathbf{r} - k\Delta x\,\hat{\mathbf{x}}).
\]
Shifting both sides by \(t \hat{\mathbf{x}}\) yields
\[
\psi(\mathbf{r} + t \hat{\mathbf{x}}) = \psi(R_x(-k\Delta\theta)\mathbf{r} + t \hat{\mathbf{x}} - k\Delta x\,\hat{\mathbf{x}}) = \psi(R_x(\theta_t)\mathbf{r} + s_t\,\hat{\mathbf{x}}),
\]
which completes the proof.
\end{proof}

\vspace{1cm}

\begin{proof}[Proof of Theorem \ref{theorem:manifold-distance-bound}]
By Lemma~\ref{lemma:discrete-symmetry-equivalence}, for any \(t \in \mathbb R\) there exist \(\theta_t \in S^1\) and \(s_t \in \bigl[-\tfrac{\Delta x}{2},\tfrac{\Delta x}{2}\bigr)\) such that for all $\mathbf r$,
\[
\psi(\mathbf r + t\hat{\mathbf{x}})
\;=\;
\psi\bigl(R_x(\theta_t)\,\mathbf r + s_t\,\hat{\mathbf{x}}\bigr).
\]
Hence, the projection at axial shift $t$ may be written as
\[
\Pi(t)
= P_I\,S_B(t,\psi)
= P_I\bigl(S_B(0,\psi(\cdot + t\hat x))\bigr)
= P_I\bigl(S_B(0,\psi_{s_t}(R_x(\theta_t)\,\cdot))\bigr)
= P_{R_x(\theta_t)}\,S_B\bigl(0,\psi_{s_t}\bigr),
\]
where we set $\psi_{s_t}(\mathbf r)=\psi(\mathbf r + s_t\hat x)$.  

By definition,
\begin{equation*}
\begin{aligned}
d_{L^2(\Omega_B)}\bigl(\Pi(t),\mathcal M\bigr)
&=\inf_{\theta\in[0,2\pi)}\,
\bigl\|\,\Pi(t)-P_{R_x(\theta)}S_B(0,\psi)\bigr\|_{L^2}
\; \\ & \le\;
\bigl\|\,
P_{R_x(\theta_t)}S_B(0,\psi_{s_t})
\;-\;
P_{R_x(\theta_t)}S_B(0,\psi)
\bigr\|_{L^2}.
\end{aligned}
\end{equation*}
By using the integral triangle inequality, we obtain that for all \(\mathbf{r} \in \mathbb{R}^3\)
\begin{equation*}
    \begin{aligned}
    \left|\psi_{s_{t}}(\mathbf{r}) - \psi(\mathbf{r})\right|
    &= \left|\psi(\mathbf{r} + s_t \hat{\mathbf{x}}) - \psi(\mathbf{r})\right| \\
    &= \left|\int_{0}^{s_{t}} \partial_{x} \psi(\mathbf{r}) \, dx \right| \\
    &\leq \int_{0}^{s_{t}} \left| \partial_{x} \psi(\mathbf{r}) \right| \, dx \\
    &\leq |s_{t}| M_{x}(\psi),
\end{aligned}
\end{equation*}
and since \(|s_t| \le \frac{\rise}{2}\), we get that
\begin{equation*}
\max_{\mathbf{r} \in \mathbb{R}^3}{|\psi_{s_t}(\mathbf{r}) - \psi(\mathbf{r})|} \le \frac{1}{2}\rise M_{x}(\psi).
\end{equation*}
Now, for any $\mathbf{r} \in Q_B = [-\frac{B}{2}, \frac{B}{2}]^3$, we have
\begin{equation*}
|S_B(0,\psi_{s_t})(\mathbf{r}) - S_B(0,\psi)(\mathbf{r})| = |\psi_{s_t}(\mathbf{r}) - \psi(\mathbf{r})| \le \frac{1}{2}\rise M_{x}(\psi),
\end{equation*}
where we used the definition of the segment operator from Definition~\ref{def:helical-segment}. Therefore,
\begin{equation*}
\begin{aligned}
\bigl\|\,
P_{R_x(\theta_t)}S_B(0,\psi_{s_t})
\;-\;
P_{R_x(\theta_t)}S_B(0,\psi)
\bigr\|_{L^2(\Omega_B)}^2 &= \\\int_{\Omega_B} \left|\int_{-\frac{B}{2}}^{\frac{B}{2}}{\left(S_B(0,\psi_{s_t})(R_x(\theta_t)(x,y,z)^T)-S_B(0,\psi)(R_x(\theta_t)(x,y,z)^T)\right)dz}\right|^2 dxdy
&\le \\\int_{Q_B}\left|S_B(0,\psi_{s_t})(\mathbf{r})-S_B(0,\psi)(\mathbf{r})\right|^2d\mathbf{r} 
&\le \\B^3\left(\frac{1}{2}\rise M_{x}(\psi)\right)^2.
\end{aligned}
\end{equation*}
The integrals are known to be finite by \eqref{eq:integral-condition}. We get
\begin{equation*}
\bigl\|\,
P_{R_x(\theta_t)}S_B(0,\psi_{s_t})
\;-\;
P_{R_x(\theta_t)}S_B(0,\psi)
\bigr\|_{L^2} \le \frac{1}{2} \rise M_{x}(\psi)B^{\frac{3}{2}},
\end{equation*}
which completes the proof.
\end{proof}

\section{The Wiener filter}\label{app:wiener-filter}

We consider the problem of estimating a signal corrupted by additive noise in the frequency domain. Let \( S(\mathbf{f}) \) denote the true unknown signal and let \( N(\mathbf{f}) \) denote an additive noise component, both defined in the frequency domain \( \mathbf{f} \in \mathbb{R}^d \), and are part of a suitable function space (complex square-integrable functions). The observed data $Y(\mathbf{f})$ in the frequency domain is given by
\[
Y(\mathbf{f}) = S(\mathbf{f}) + N(\mathbf{f}).
\]
We treat \( S(\mathbf{f}) \) and \( N(\mathbf{f}) \) as zero-mean complex-valued random signals, and assume that \( N(\mathbf{f}) \) is uncorrelated with \( S(\mathbf{f}) \). The objective is to construct an estimator \( \widetilde{S}(\mathbf{f}) \) of \( S(\mathbf{f}) \) based only on the observation \( Y(\mathbf{f}) \), such that the mean squared error (MSE)
\[
\mathbb{E}\left[ \left| \widetilde{S}(\mathbf{f}) - S(\mathbf{f}) \right|^2 \right]
\]
is minimized. We further assume that the power spectral densities of \( S(\mathbf{f}) \) and \( N(\mathbf{f}) \), which are define as \( P_{SS}(\mathbf{f}) = \mathbb{E}[S(\mathbf{f}) S^*(\mathbf{f})] \) and \( P_{NN}(\mathbf{f}) = \mathbb{E}[N(\mathbf{f}) N^*(\mathbf{f})] \), respectively, are known.
Under these assumptions, we seek to find a filter $G(\mathbf{f})$ that minimizes 
\[
\mathbb{E}\left[ \left| G(\mathbf{f}) Y(\mathbf{f}) - S(\mathbf{f}) \right|^2 \right].
\]
This $G(\mathbf{f})$ is called the Wiener filter, and it produces the optimal linear estimator in the MSE sense.

To determine the Wiener filter $G(\mathbf{f})$, we apply the orthogonality principle, which states that the estimation error should be uncorrelated with the observed data, that is,
\[
\mathbb{E} \left[ \left(S(\mathbf{f}) - G(\mathbf{f}) Y(\mathbf{f})\right) Y^*(\mathbf{f}) \right] = 0,
\]
where \( Y^*(\mathbf{f}) \) denotes the complex conjugate of \( Y(\mathbf{f}) \). Expanding this expression gives
\[
\mathbb{E}[S(\mathbf{f}) Y^*(\mathbf{f})] = G(\mathbf{f}) \, \mathbb{E}[Y(\mathbf{f}) Y^*(\mathbf{f})].
\]
Using the model \( Y(\mathbf{f}) = S(\mathbf{f}) + N(\mathbf{f}) \) and the assumption that the signal and noise are uncorrelated, we get
\[
\mathbb{E}[S(\mathbf{f}) Y^*(\mathbf{f})] = \mathbb{E}[S(\mathbf{f}) S^*(\mathbf{f})] + \mathbb{E}[S(\mathbf{f}) N^*(\mathbf{f})] = P_{SS}(\mathbf{f}),
\]
and
\[
\mathbb{E}[Y(\mathbf{f}) Y^*(\mathbf{f})] = \mathbb{E}[(S(\mathbf{f}) + N(\mathbf{f}))(S^*(\mathbf{f}) + N^*(\mathbf{f}))] = P_{SS}(\mathbf{f}) + P_{NN}(\mathbf{f}),
\]
where \( P_{SS}(\mathbf{f}) \) and \( P_{NN}(\mathbf{f}) \) are the power spectral densities of the signal and the noise, respectively.
Solving for \( G(\mathbf{f}) \), we obtain 
\begin{equation*}
G(\mathbf{f}) = \frac{P_{SS}(\mathbf{f})}{P_{SS}(\mathbf{f}) + P_{NN}(\mathbf{f})}.
\end{equation*}
%The latter frequency-domain representation of the Wiener filter yields the optimal linear estimator of the signal, in the sense of minimizing mean squared error, when the observation is corrupted by additive uncorrelated noise.
\end{appendices}
\end{document}